\documentclass[11pt]{article}

%%%%%%%%%%%%%%%%%%%%%%%%%%%%%%%%%%%%%%%%%%%%%%%%%%%%%%%%%%%%%%%%%%%%%%%
% PACKAGES
%%%%%%%%%%%%%%%%%%%%%%%%%%%%%%%%%%%%%%%%%%%%%%%%%%%%%%%%%%%%%%%%%%%%%%%
\usepackage{amsmath}
\usepackage{amssymb}
\usepackage{amsthm}
\usepackage{graphics}
\usepackage[latin1]{inputenc}
\usepackage[english]{babel}
\usepackage[T1]{fontenc}

\usepackage{enumerate}

%%%%%%%%%%%%%%%%%%%%%%%%%%%%%%%%%%%%%%%%%%%%%%%%%%%%%%%%%%%%%%%%%%%%%%%%
% NEW THEOREM
%%%%%%%%%%%%%%%%%%%%%%%%%%%%%%%%%%%%%%%%%%%%%%%%%%%%%%%%%%%%%%%%%%%%%%%%%
\newtheorem{thm}{Theorem}[section]
\newtheorem{prop}{Proposition}[section]
\newtheorem{coro}{Corollary}[section]
\newtheorem{lemma}{Lemma}[section]
\newtheorem{rem}{Remark}[section]
\newtheorem{rems}{Remarks}[section]
\newtheorem{defi}{Definition}[section]

%%%%%%%%%%%%%%%%%%%%%%%%%%%%%%%%%%%%%%%%%%%%%%%%%%%%%%%%%%%%%%%%%%%%%%%%%
% NEW SIZE
%%%%%%%%%%%%%%%%%%%%%%%%%%%%%%%%%%%%%%%%%%%%%%%%%%%%%%%%%%%%%%%%%%%%%%%%%
\topmargin0pt
\headheight15pt
\headsep15pt
\oddsidemargin0pt
\textheight20cm
\textwidth16cm

%%%%%%%%%%%%%%%%%%%%%%%%%%%%%%%%%%%%%%%%%%%%%%%%%%%%%%%%%%%%%%%%%%%%%%%%%%%
% NEW COMMANDS
%%%%%%%%%%%%%%%%%%%%%%%%%%%%%%%%%%%%%%%%%%%%%%%%%%%%%%%%%%%%%%%%%%%%%%%%%%%%%

%%% SETS %%%

\newcommand{\R}{\mathbb{R}}             % REAL
\newcommand{\N}{\mathbb{N}}             % INTEGER
\newcommand{\Z}{\mathbb{Z}}             % ???
\newcommand{\C}{\mathbb{C}}             % COMPLEX
\renewcommand{\H}{\mathcal{H}}          % GENERIC HILBERT SPACE

            % DOMAIN
\newcommand{\B}{\mathcal{B}}            % BOUNDED OPERATOR
\newcommand{\M}{\mathcal{M}}            % MANIFOLD
\renewcommand{\S}{\mathcal{S}}          % SURFACE

%%% FUNCTIONAL SPACES %%%

                            % L^2
                  % C_0^infty
             % C_0^infty(R)
                   % Bounded Operators on H

%%% CONSTANTS %%%

\newcommand{\e}{\epsilon}
\newcommand{\eo}{\epsilon_{0}}
\newcommand{\eu}{\epsilon_{1}}
\newcommand{\half}{\frac{1}{2}}

%%% DIRAC MATRICES %%%

         % Gamma^0
         % Gamma^1
         % Gamma^2
         % Gamma^3
         % Gamma^5

%%% VARIOUS SYMBOLS %%%

\newcommand{\kl}{\langle}
\newcommand{\er}{\rangle}

%%% OPERATORS %%%

               % Fourier Transform
       % Projection onto the positive spectrum
       % Projection onto the negative spectrum

               % Dirac operator writen using the pentade adapted to the Rotation
             % Dirac operator written using the pentade adapted to the Separation of Variables
         % Dirac operator on S^3
           % Comparison Dirac operator

              % Radial Dirac Operator
\newcommand{\A}{\mathcal{A}}               % Angular Dirac Operator

              % Local velocity

% OTHERS

\newcommand{\ds}{\displaystyle}

%%%%%%%%%%%%%%%%%%%%%%%%%%%%%%%%%%%%%%%%%%%%%%%%%%%%%%%%%%%%%%%%%%%%%%%%%%
% COMPTEURS
%%%%%%%%%%%%%%%%%%%%%%%%%%%%%%%%%%%%%%%%%%%%%%%%%%%%%%%%%%%%%%%%%%%%%%%%%

\newcommand{\Section}[1]{\section{#1} \setcounter{equation}{0}}

%%%%%%%%%%%%%%%%%%%%%%%%%%%%%%%%%%%%%%%%%%%%%%%%%%%%%%%%%%%%%%%%%%%%%%%%%%%%%%%%%%%%%%%%%%%%%%%%%%%%%%%%%%%%%%%%%%%%%%%%%%%%%%%%%%%%%%%%%%%%%%%%%%%%%%%%%%%%%%%%%%%%%%%%%%%%%%%   MAIN TEXT %%%%%%%%%%%%%%%%%%%%%%%%%%%%%%%%%%%%%%%%%%%%%%%%%%%%%%%%%%%%%%%%%%%%
%%%%%%%%%%%%%%%%%%%%%%%%%%%%%%%%%%%%%%%%%%%%%%%%%%%%%%%%%%%%%%%%%%%%%%%%%%%%%%%%%%%%%%%%%%%%%%%%%%%%%%%%%%%%%%%%%%%%%%%%%%%%%%%5

%%% AUTHOR %%%
\begin{document}

\title{Inverse scattering at fixed energy on asymptotically hyperbolic Liouville surfaces}
\author{Thierry Daud\'e \footnote{Research supported by the French National Research Projects AARG, No. ANR-12-BS01-012-01, and Iproblems, No. ANR-13-JS01-0006 and by the UMI centre de recherches math\'ematiques 3457} $^{\,1}$, Niky Kamran \footnote{Research supported by NSERC grant RGPIN 105490-2011} $^{\,2}$ and Francois Nicoleau \footnote{Research supported by the French National Research Project NOSEVOL, No. ANR- 2011 BS0101901} $^{\,3}$\\[12pt]
 $^1$  \small D\'epartement de Math\'ematiques. UMR CNRS 8088. \\
\small Universit\'e de Cergy-Pontoise \\
\small 95302 Cergy-Pontoise, France  \\
\small thierry.daude@u-cergy.fr\\
$^2$ \small Department of Mathematics and Statistics\\
\small  McGill
University\\ \small Montreal, QC, H3A 2K6, Canada\\
\small nkamran@math.mcgill.ca\\
$^3$  \small  Laboratoire de Math\'ematiques Jean Leray, UMR CNRS 6629 \\
      \small  2 Rue de la Houssini\`ere BP 92208 \\
      \small  F-44322 Nantes Cedex 03 \\
\small francois.nicoleau@math.univ-nantes.fr \\}

%%% TITLE %%%

%%% DATE %%%

%\date{\today}

%%% TITLE %%%

\maketitle

%%% ABSTRACT %%%

\begin{abstract}

In this paper, we study an inverse scattering problem on Liouville surfaces having two asymptotically hyperbolic ends. The main property of Liouville surfaces consists in the complete separability of the Hamilton-Jacobi equations for the geodesic flow. An important related consequence is the fact that the stationary wave equation can be separated into a system of a radial and angular ODEs. The full scattering matrix at fixed energy associated to a scalar wave equation on asymptotically hyperbolic Liouville surfaces can be thus simplified by considering its restrictions onto the generalized harmonics corresponding to the angular separated ODE. The resulting partial scattering matrices consists in a countable set of $2 \times 2$ matrices whose coefficients are the so called transmission and reflection coefficients. It is shown that the reflection coefficients are nothing but generalized Weyl-Titchmarsh functions for the radial ODE in which the generalized angular momentum is seen as the spectral parameter. Using the Complex Angular Momentum method and recent results on 1D inverse problem from generalized Weyl-Titchmarsh functions, we show that the knowledge of the reflection operators at a fixed non zero energy is enough to determine uniquely the metric of the asymptotically hyperbolic Liouville surface under consideration.

%%% KEYWORDS %%%

\vspace{2cm}

\noindent \textit{Keywords}. Inverse scattering, Liouville surfaces, wave equation, singular Sturm-Liouville problems, Weyl-Titchmarsh function.

%%% SUBJECTS CLASSIFICATION %%%%

\noindent \textit{2010 Mathematics Subject Classification}. Primaries 81U40, 35P25; Secondary 58J50.

\end{abstract}

\tableofcontents
\newpage

%%%%%%%%%%%%%%%%%%%%%%%%%%%%%%%%%%%%%%%%%%%%% INTRODUCTION %%%%%%%%%%%%%%%%%%%%%%%%%%%%%%%%%%%%%%%%%%%%%%%%%%%%%%%%%%%%%%%%%%%%%

\Section{Introduction. Statement of the main results.} \label{Intro}

A Liouville surface is a surface equipped with a Riemannian metric which, in a local coordinate system $(x,y)$, has the form
\begin{equation} \label{LiouvilleMetric}
  g = \left(a(x) - b(y)\right) \left[ dx^2 + dy^2 \right],
\end{equation}
where the functions $a(x)$ and $b(y)$ are smooth scalar functions depending only on \emph{one} of the local coordinates $x$ and $y$ respectively. A characteristic property of Liouville surfaces is that the corresponding geodesic flow is completely integrable, a remarkable fact since the isometry group of a Liouville surface will in general be trivial. The complete integrability of the geodesic flow on Liouville surfaces is due to the presence of hidden symmetry (in the sense of the existence of an additional first integral) which originates from the separability of the Hamilton-Jacobi equation for the geodesic flow. More precisely, denoting by $q = (x,y)$ the space variables and by $p = (p_x, p_y)$ the momenta, the Hamiltonian system for the geodesic flow reads
\begin{eqnarray*}
    \dot{p_x}  & = & \frac{H(p,q) \, a'(x)}{a(x) - b(y)}\ , \  \dot{p_y}  = -\frac{H(p,q) \, b'(y)}{a(x) - b(y)}, \\
	\dot{x}    & = & \frac{2p_x}{a(x) - b(y)}\ ,   \ \ \ \            \dot{y}  = \frac{2p_y}{a(x) - b(y)},
\end{eqnarray*}
where the Hamiltonian function is given by
$$
  H(p,q) = \frac{p_x^2 + p_y^2}{a(x) - b(y)}.
$$
The first integral takes then the form
$$
  S(p,q) = \frac{b(y)}{a(x) - b(y)} p_x^2 + \frac{a(x)}{a(x) - b(y)} p_y^2.
$$

Another (closely related) property of Liouville surfaces that will be constantly used in this paper is the fact that the (stationary) wave equation
\begin{equation} \label{WaveEq}
  - \triangle_g f = \lambda^2 f, \quad \lambda \in \R,
\end{equation}
where $\triangle_g$ denotes the Laplace-Beltrami operator on $(S,g)$, can be separated into a system of two ODEs in the variables $x$ and $y$ respectively. Assuming that $f(x,y) = u(x) v(y)$, the stationary wave equation (\ref{WaveEq}) is readily shown to be equivalent to
\begin{equation} \label{Separated-ODE}
 \left\{ \begin{array}{ccl} - \frac{d^2 u(x)}{dx^2} - \lambda^2 a(x) u(x) & = &-\mu^2 u(x), \\
	- \frac{d^2 v(y)}{dy^2} + \lambda^2 b(y) v(y) & = & \mu^2 v(y),
	\end{array} \right.
\end{equation}
where $\mu^2 \in \C$ denotes the constant of separation. Underlying the separation of variables is the observation that the second order differential operator
$$
  S = -\frac{b(y)}{a(x) - b(y)} \partial_x^2 - \frac{a(x)}{a(x) - b(y)} \partial_y^2,
$$
obtained from the first integral in the usual way replacing the momenta $p_x, p_y$ by the differential operators $-i\partial_x, -i\partial_y$, commutes with the Laplace-Beltrami operator (see for instance \cite{KS}), \textit{i.e.}
$$
  [\triangle_g, S] = 0.
$$
We can thus try to find functions which are simultaneous eigenfunctions of both the operators $-\Delta_{g}$ and $S$, \textit{i.e.} we look for functions $f$ that satisfy
$$
\left\{ \begin{array}{ccc}
  -\frac{1}{a(x) - b(y)} \partial_x^2 f - \frac{1}{a(x) - b(y)} \partial_y^2 f & = & \lambda^2 f, \\
	-\frac{b(y)}{a(x) - b(y)} \partial_x^2 f - \frac{a(x)}{a(x) - b(y)} \partial_y^2 f & = & \mu^2 f.
\end{array} \right.
$$
This system is equivalent to the system
$$
\left\{ \begin{array}{ccc}
  - \partial_x^2 f - ( \lambda^2 a(x) - \mu^2) f & = & 0,  \\
	- \partial_y^2 f + ( \lambda^2 b(y) - \mu^2) f & = & 0,
\end{array} \right.
$$
which shows in turn that the eigenfunctions $f$ have the form $f(x,y) = u(x) v(y)$ where $u, v$ satisfy the above system of ODEs (\ref{Separated-ODE}).

We emphasize that any Riemannian surface enjoying such a \emph{separation of variables} property can be written locally under the form of a Liouville surface as above. We refer to Section \ref{LiouvilleSurfaces} for a more complete discussion on this point.

In this paper, we shall investigate the scattering and inverse scattering problem at fixed energy for Liouville surfaces having two asymptotically hyperbolic ends. Precisely and following the models described in Isozaki and Kurylev \cite{IK}, we shall assume that our Liouville surfaces satisfy the following definition.

\begin{defi}[Asymptotically hyperbolic Liouville surfaces] \label{AHLS}
A Liouville surface with two asymptotically hyperbolic ends is a surface $\S$ covered by a global chart
\begin{equation}\label{LiouvilleSurface}
  \S = (0,A)_x \times (0,B)_y,
\end{equation}
where $x \in (0,A)$ corresponds to a boundary defining function for the two asymptotically hyperbolic ends $\{x = 0\}$ and $\{x = A\}$ and $y \in (0,B)$ is as an angular variable on $S^1_B$, the circle of radius $B$. The surface $\S$ is equipped with a Liouville metric $g$ of the form (\ref{LiouvilleMetric}) in which the functions $a(x)$ and $b(y)$ are assumed to be smooth functions satisfying \\
\begin{enumerate}[(i)]

\item $\quad \forall x,y \in (0,A)_x \times (0,B)_y, \quad a(x) - b(y) > 0, \ \ (g \ \textrm{metric})$.

\item $\quad b^{(i)}(B) = b^{(i)}(0), \ \forall i \in \N, \ \ (\textrm{periodic conditions on} \ b \ \textrm{since} \ y \in S^1_B)$.

\item \ \textrm{Asymptotically hyperbolic ends at $\{x=0\}$ and $\{x=A\}$:}
\par
$\exists \ \e_0, \e_1, \delta > 0, \ \textrm{such that} \ \ \forall \alpha, n \in \N, \ \ \exists C_{\alpha n} > 0 \ \textrm{such that} $	
\begin{enumerate}
\item $\quad \forall x \in (0, A - \delta), \ \forall y \in (0,B)$,
\begin{equation} \label{AH-0}		
	\left| \partial_y^\alpha \left(x \partial_x\right)^n \left[ x^2 (a(x) - b(y)) - 1 \right] \right| \leq C_{\alpha n} \left( 1 + |\log x| \right)^{-1 - \e_0 - n},
\end{equation}
and \\

\item $\quad \forall x \in (\delta, A), \ \forall y \in (0,B)$,
\begin{equation} \label{AH-A}		
	\hspace{-0.3cm} \left| \partial_y^\alpha \left((A-x) \partial_x\right)^n \left[ (A-x)^2 (a(x) - b(y)) - 1 \right] \right| \leq C_{\alpha n} \left( 1 + |\log (A-x)| \right)^{-1 - \e_1 - n}.
\end{equation}
\end{enumerate}
\end{enumerate}
\end{defi}

We now explain the meaning of asymptotically hyperbolic ends for our Liouville surfaces. Observe that the metric (\ref{LiouvilleMetric}) can be written in a neighbourhood of $\{x=0\}$ in the form
$$
  g = \frac{dx^2 + dy^2 + P(x,y,dx,dy)}{x^2},
$$
where	
$$
	P(x,y,dx,dy) = \left[ (x^2 (a(x) - b(y)) - 1 \right] (dx^2 + dy^2).
$$
Under the assumption (\ref{AH-0}), the metric $g$ is thus in a well-defined sense a small perturbation as $x \to 0$ of the metric
$$
  g_0 = \frac{dx^2 + dy^2}{x^2},
$$
which corresponds to the standard hyperbolic metric on the upper half-plane $\C^+ = \{z = x + iy, \ x > 0\}$. This explains our terminology of \emph{asymptotically hyperbolic ends} for $\{x = 0\}$. By symmetry, the metric is a small perturbation of the standard hyperbolic metric on $\C^+$ when $x \to A$. The Liouville surfaces $\S$ we are considering in this paper thus possess two asymptotically hyperbolic ends at $\{x = 0\}$ and $\{x=A\}$ as well as this remarkable property of separability for the wave equation mentioned above. We refer to \cite{Bor, IK, IKL, JSB} for a detailed presentation of general asymptotically hyperbolic surfaces (and higher dimensional asymptotically hyperbolic manifolds) with no hypothesis of (hidden) symmetries or integrability of the geodesic flow.

We start defining the main object from which we shall study some inverse problems at fixed energy for such asymptotically hyperbolic Liouville surfaces. Precisely, we use the separability of the wave equation to construct a generalized Weyl-Titchmarsh operator as follows. As usual with the models of asymptotically hyperbolic manifolds in \cite{Bor, IK, IKL, JSB}, we consider the shifted stationary wave equation
\begin{equation} \label{ShiftedWaveEq}
  - \triangle_g f - \frac{1}{4} f = \lambda^2 f,
\end{equation}
where $\lambda \not=0$ is a fixed energy. Indeed, it is known (\cite{IKL}) that the essential spectrum of $- \triangle_g$ is $[\frac{1}{4}, +\infty)$ and thus, we shift the bottom of the essential spectrum to $0$ by our choice of wave equation. It is also known that the operator $- \triangle_g - \frac{1}{4}$ has no eigenvalues embedded into the essential spectrum $[0, +\infty)$ (see \cite{Bou, IK, IKL}).

After a simple calculation, the equation (\ref{ShiftedWaveEq}) can be written in our Liouville coordinates as
\begin{equation} \label{SWEq}
  - \partial_x^2 f - \left(\lambda^2 + \frac{1}{4}\right) a(x) f - \partial_y^2 f + \left( \lambda^2 +\frac{1}{4} \right) b(y) f = 0.
\end{equation}
Having in mind the separated equations (\ref{Separated-ODE}), we introduce the angular operator
\begin{equation} \label{AngularOp}
  A(\lambda) = -\frac{d^2}{dy^2} + \left( \lambda^2 +\frac{1}{4} \right) b(y),
\end{equation}
appearing in (\ref{SWEq}), which we view as acting on $ \H_B = L^2((0,B); dy)$. We impose periodic boundary conditions at $\{y=0\}$ and $\{y = B\}$ that make the operator $A(\lambda)$ selfadjoint on $\H_B$. We have

\begin{prop} \label{Angular-Op}
The selfadjoint operator $A(\lambda)$ on $\H_B$ has discrete spectrum with the following properties. There exist a sequence of real eigenvalues $(\mu_{n \lambda}^2)_{n \in \N}$ and normalized eigenfunctions $(Y_{n \lambda})_{n \in \N} \in \H_B$ such that \\
\begin{equation}
  (1) \quad -Y_{n \lambda}'' + \left( \lambda^2 +\frac{1}{4} \right) b(y) Y_{n \lambda} = \mu_{n \lambda}^2 Y_{n \lambda}.
\end{equation}

\begin{equation} \label{HilbertBasis}
  (2) \quad \H_B = L^2((0,B); dy) = \oplus_{n \in \N} \kl Y_{n \lambda} \er.
\end{equation}

\begin{equation} \label{Eigenvalues}
(3) \quad -\infty < \mu_{0 \lambda}^2 \leq \mu_{1 \lambda}^2 \leq \dots \leq \mu_{n \lambda}^2 \leq \dots, \quad \mu_{n \lambda}^2 \to +\infty, \quad n \to \infty.
\end{equation}

\begin{equation} \label{WeylLaw}
(4) \quad \frac{\mu_{n \lambda}^2}{n^2} \sim \frac{\pi^2}{B^2}, \quad n \to \infty, \quad \textrm{Weyl law}.
\end{equation}
\end{prop}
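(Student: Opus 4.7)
The statement is a classical spectral result for a one-dimensional Schrödinger operator on a compact interval with periodic boundary conditions, so the plan is to assemble the proof from standard Sturm--Liouville/compact-resolvent machinery rather than to find a clever argument. The first step is to set up the operator precisely: by assumption (ii), $b$ extends to a smooth function on the circle $S^1_B$, so the potential $V(y) := (\lambda^2 + \tfrac14) b(y)$ is smooth, real-valued and bounded on $[0,B]$. Consequently $A(\lambda)$, defined on the domain
\[
\D(A(\lambda)) = \bigl\{u \in H^2((0,B)) : u^{(i)}(0) = u^{(i)}(B), \ i=0,1 \bigr\},
\]
is a bounded symmetric perturbation of the free periodic Laplacian $A_0 := -d^2/dy^2$, which is well known to be selfadjoint on this domain with compact resolvent and spectrum $\{(2\pi k/B)^2\}_{k\in\Z}$. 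Selfadjointness of $A(\lambda)$ follows from the Kato--Rellich theorem (or directly, since the perturbation is bounded), and the resolvent identity
\[
(A(\lambda)-z)^{-1} = (A_0 - z)^{-1}\bigl(I - V(A(\lambda)-z)^{-1}\bigr)
\]
(for $z$ in the common resolvent set) shows that $(A(\lambda)-z)^{-1}$ is compact, being the product of a compact operator and a bounded one.

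From compactness of the resolvent I immediately get that the spectrum of $A(\lambda)$ is purely discrete, consisting of a sequence of real eigenvalues of finite multiplicity accumulating only at $+\infty$. Since $V$ is bounded below, say $V(y) \geq -M$, one has $A(\lambda) \geq -M$ in the quadratic form sense, which gives the lower bound $\mu_{0\lambda}^2 > -\infty$ in (\ref{Eigenvalues}). The spectral theorem applied to a selfadjoint operator with compact resolvent then yields an orthonormal Hilbert basis of eigenfunctions $(Y_{n\lambda})_{n \in \N}$ of $\H_B$, which is precisely (\ref{HilbertBasis}), and each $Y_{n\lambda}$ solves equation (1) by definition. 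Smoothness of the $Y_{n\lambda}$ (used implicitly later in the paper) follows from elliptic regularity applied to the ODE.

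For the Weyl law (\ref{WeylLaw}) the plan is to invoke the min-max principle to compare $A(\lambda)$ with the shifted free operator. Since $-M \leq V \leq M'$ for suitable constants, min-max gives
\[
\mu_{n,A_0}^2 - M \leq \mu_{n\lambda}^2 \leq \mu_{n,A_0}^2 + M',
\]
where $\mu_{n,A_0}^2$ denotes the $n$-th eigenvalue of $A_0$ counted with multiplicity. The periodic eigenvalues of $A_0$ are $0, (2\pi/B)^2, (2\pi/B)^2, (4\pi/B)^2, (4\pi/B)^2, \ldots$, so the $n$-th one satisfies $\mu_{n,A_0}^2 \sim \pi^2 n^2/B^2$ as $n \to \infty$. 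Dividing the above inequality by $n^2$ and passing to the limit yields (\ref{WeylLaw}).

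There is really no serious obstacle in the proof; the only points requiring a small amount of care are to verify that the periodic boundary conditions are compatible with the extension of $b$ from $(0,B)$ to the circle (assured by (ii)) and to justify the comparison used for the Weyl law, both of which are standard. In the write-up I would simply cite or sketch each of the three ingredients --- selfadjointness, compactness of the resolvent, and min-max comparison --- and collect the conclusions (1)--(4).
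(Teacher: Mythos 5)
Your proof is correct and is essentially the argument the paper relies on: the paper gives no proof of Proposition \ref{Angular-Op}, but in the remark immediately following it observes that properties (1)--(4) hold for any regular Sturm--Liouville operator with $L^1$ potential and selfadjoint (separated or coupled) boundary conditions, citing Zettl. Your write-up --- selfadjointness via bounded perturbation of the periodic Laplacian, discreteness via compact resolvent, the orthonormal eigenbasis from the spectral theorem, and the Weyl asymptotics via min-max comparison with the free periodic eigenvalues $(2\pi k/B)^2$ --- is exactly the standard proof behind that citation.
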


The functions $Y_{n\lambda}$ are thus solutions of the angular ODE
\begin{equation} \label{AngularODE}
  -v''(y) + \left( \lambda^2 +\frac{1}{4} \right) b(y) v(y) = \mu^2 v(y),
\end{equation}
for the particular values $\mu^2 = \mu_{n\lambda}^2$ corresponding to the eigenvalues (counted with multiplicities) of the angular operator $A(\lambda)$ with periodic boundary conditions. By analogy with standard surfaces having a symmetry of revolution, we shall call the numbers $(\mu_{\lambda n})_{n \in \N}$ the \emph{generalized angular momenta}. From (3) above, we only know that the eigenvalues $\mu_{n \lambda}^2$ are positive for $n$ large enough, say $n \geq N$. In particular, for $n \geq N$ we define $\mu_{n \lambda} = \sqrt{\mu_{n \lambda}^2} \in \R^+$ and for $n < N$, we define $\mu_{n \lambda} = i \sqrt{-\mu_{n \lambda}^2} \in i\R^+$. Observe also that the Weyl law implies the following condition of M\"untz type.
\begin{equation}\label{MuntzCond}
  \sum_{n \in \N} \frac{1}{|\mu_{n \lambda}|} = + \infty.
\end{equation}
This condition will be important later in this paper when we use the Complex Angular Momentum (CAM) method.

\begin{rem}
  Note that the above properties (1)-(4) in Proposition \ref{Angular-Op} are satisfied for any Sturm-Liouville operator $A(\lambda)$ with \emph{regular} potential $b \in L^1((0,B);dy)$ together with selfadjoint separated or coupled boundary conditions at $\{y=0\}$ and $\{y=B\}$. We refer to \cite{Ze}, p. 72-73 for a complete description of the different selfadjoint boundary conditions and properties of regular Sturm-Liouville operators. We emphasize that these properties are the only ones needed to construct the generalized Weyl-Titchmarsh operator that will be defined below. %Therefore, we could have considered asymptotically hyperbolic Liouville surfaces $\S$ with boundaries
\end{rem}

Now, let $f$ be a solution of (\ref{SWEq}) belonging to the natural energy Hilbert space on $\S$,
$$
  f \in \H = L^2(\S, dVol_g) = L^2(\S, (a(x) - b(y))dx dy).
$$
Using our assumption on $a(x)$ and $b(y)$, it is not difficult to see that there exists a constant $c$ such that $a(x) - c > 0$ and
\begin{equation} \label{yy1}
  \H = L^2(\S, (a(x) - c) dx dy) = L^2((0,A), ( a(x) - c) dx)) \otimes L^2((0,B), dy).
\end{equation}
Consider then the decomposition
$$
  f(x,y) = \sum_{n \in \N} u_n(x) Y_{n \lambda}(y) \in \H,
$$
given by (\ref{HilbertBasis}) and (\ref{yy1}) above. Putting this decomposition into (\ref{SWEq}), we see that the functions $u_n(x)$ satisfy the radial separated ODE
\begin{equation} \label{RadialODE}
  - u_n''(x) + q(x) u_n(x) = -\mu_{n \lambda}^2 u_n(x),
\end{equation}
with the potential $q$ given by
\begin{equation} \label{q}
  q(x) = \left((i\lambda)^2 - \frac{1}{4}\right) a(x).
\end{equation}
According to the assumptions (\ref{AH-0}) - (\ref{AH-A}) , the potential $q$ has the following asymptotics in the asymptotically hyperbolic ends $x \to 0$ and $x \to A$
\begin{equation} \label{Asymp}
\left\{ \begin{array}{ccc}
  q(x) & = & \frac{(i\lambda)^2 - \frac{1}{4}}{x^2} + q_0(x), \quad \quad x q_0(x) \in L^1(0,\frac{A}{2}), \\
	q(x) & = & \frac{(i\lambda)^2 - \frac{1}{4}}{(A-x)^2} + q_1(x), \quad \quad (A-x) q_1(x) \in L^1(\frac{A}{2},A).
\end{array} \right.
\end{equation}
Hence, the Sturm-Liouville equation (\ref{RadialODE}) has regular singularities at both $x = 0$ and $x=A$. It turns out that this equation enters almost exactly into the framework of the paper \cite{FY} by Freiling and Yurko\footnote{In their paper \cite{FY}, Freiling and Yurko studied different inverse problems for the singular Sturm-Liouville equation $- u''(x) + q(x) u(x) = \rho^2 u(x)$ where the potential $q$ has the asymptotics $$ \left\{ \begin{array}{ccc}
  q(x) & = & \frac{\nu_1^2 - \frac{1}{4}}{x^2} + q_0(x), \quad \quad x^{1-2\Re(\nu_1)} q_0(x) \in L^1(0,\frac{A}{2}), \\
	q(x) & = & \frac{\nu_2^2 - \frac{1}{4}}{(A-x)^2} + q_1(x), \quad \quad (A-x)^{1-2\Re(\nu_2)} q_1(x) \in L^1(\frac{A}{2},A).
\end{array} \right. $$ with $\Re(\nu_1) > 0$ and $\Re(\nu_2) > 0$ and with non-selfadjoint boundary conditions at $x = 0$ and $x=A$.}, in which a generalized Weyl-Titchmarsh function is constructed for such singular Sturm-Liouville equations. We follow here the approach given in \cite{FY} and recall this construction.

We consider the radial separated ODE
\begin{equation} \label{R-ODE}
  - u''(x) + q(x) u(x) = -\mu^2 u(x),
\end{equation}
with $q$ satisfying (\ref{Asymp}). Note that in \cite{FY}, the quantity $\rho^2 = -\mu^2$ is used as the spectral parameter in (\ref{R-ODE}). Even though this is a more usual choice, we choose to work with the original parameter $\mu^2$ which corresponds truly to the generalized angular momentum obtained from the angular ODE. In what follows, it suffices to put $\rho = -i \mu$ to recover the definition and results presented in \cite{FY}.

In Section \ref{SeparatedRadialODE}, we shall prove the existence of fundamental systems of solutions (FSS) $\{S_{10}, S_{20}\}$ and $\{S_{11}, S_{21}\}$ for (\ref{R-ODE}) having the following properties \\

a)
\begin{eqnarray}
  S_{10}(x,\lambda, \mu) \sim C_{10} x^{\frac{1}{2} - i\lambda}, x \to 0, & \quad S_{11}(x,\lambda,\mu) \sim C_{11} (A-x)^{\frac{1}{2} - i\lambda}, x \to A, \label{Jost0} \\
  S_{20}(x,\lambda, \mu) \sim \frac{1}{2i\lambda C_{10}} x^{\frac{1}{2} + i\lambda}, x \to 0, & \quad  S_{21}(x,\lambda,\mu) \sim -\frac{1}{2i\lambda C_{11}} (A-x)^{\frac{1}{2} + i\lambda}, x \to A, \label{JostA}
\end{eqnarray}
where $C_{10}, \ C_{11}$ are arbitrary non-zero constants. \\

b) The relations $W(S_{1m}, S_{2m}) = 1$ hold for $m=0,1$ where $W( f, g )$ denotes the usual Wronskian defined by $W(f,g) = f(x) g'(x) - f'(x) g(x)$. \\

c) For all $x \in (0,A)$ fixed, the functions $ \mu \mapsto S_{jm}(x,\lambda,\mu)$ are entire with respect to $\mu \in \C$. Note that in fact the functions $S_{jm}(x,\lambda,\mu)$ depend on $\mu^2$ and are then even in $\mu$. We shall thus often denote them $S_{jm}(x,\lambda,\mu^2)$. \\

Since the energy $\lambda$ will be fixed in this paper, and will play no role, we shall also often remove the dependence on $\lambda$ of the functions and scattering quantities being consdered. For instance, we shall write $S_{jm}(x,\mu^2)$ instead of $S_{jm}(x,\lambda,\mu^2)$.

We add now some singular separated boundary conditions at $x=0$ and $x=A$ and consider the equation (\ref{R-ODE}) as an eigenvalue problem. Precisely, we consider the separated boundary conditions
\begin{equation} \label{BC}
  U(u) := W(S_{10},u)_{|x=0} = 0, \quad \quad V(u) := W(S_{11},u)_{|x=A} = 0.
\end{equation}
The characteristic function of the above eigenvalue problem is defined by
\begin{equation} \label{Char}
  \Delta_{q}(\mu^2) = W(S_{11}, S_{10}).
\end{equation}
Since $\{S_{11}, S_{10} \}$ are linearly independent solutions of (\ref{R-ODE}), the characteristic function is independent of $x$ and is entire w.r.t. $\mu$. Clearly, the zeros $(\alpha_n)_{n \in \Z}$ of $\Delta_q(\mu^2)$ correspond to the eigenvalues of the equation (\ref{R-ODE})-(\ref{BC}). It can be shown that there is an infinite number of such zeros. By analyticity, they have no accumulation point in $\C$ and thus, $|\alpha_n| \to \infty$ as $|n| \to \infty$.

The generalized Weyl-Titchmarsh (WT) function $M_q(\mu^2)$ is defined by the requirement that the solution of (\ref{R-ODE}) given by
$$
  \phi(x,\mu^2) = S_{20}(x,\mu^2) + M_q(\mu^2) S_{10}(x,\mu^2),
$$
satisfies the boundary condition at $x=A$. Using the relations $W(S_{1m}, S_{2m}) = 1$ for $m=0,1$, we thus obtain the following expression of the generalized Weyl-Titchmarsh function
\begin{equation} \label{WT}
  M_q(\mu^2) = - \frac{W(S_{11}, S_{20})}{W(S_{11}, S_{10})} = - \frac{W(S_{11}, S_{20})}{\Delta_q(\mu^2)}.
\end{equation}

We point out that the above definition generalizes the usual definition of classical Weyl-Titchmarsh functions from \emph{regular} to \emph{singular} Sturm-Liouville (SL) differential operators. When dealing with selfadjoint singular SL operators, the definition and main properties of generalized WT functions can be found in \cite{KST}. In our case, the boundary conditions (\ref{BC}) make the SL equation (\ref{R-ODE}) non-selfadjoint. The generalized WT function can nevertheless be defined by the same recipe as shown in \cite{FY} and recalled above.

Note also that the zeros $(\alpha_n)_{n \in \Z}$ of the characteristic functions are poles of the generalized WT functions. By analogy with the radial Schr\"odinger operator \cite{Ra, Re}, we shall call these zeros $(\alpha_n)_{n \in \Z}$ the \emph{generalized Regge poles} for the stationary wave equation (\ref{ShiftedWaveEq}).

Our interest in considering the generalized WT function $M_q(\mu^2)$ comes from the fact that it is a powerful tool to prove uniqueness results for one-dimensional inverse problems. More precisely, it is well known that the knowledge of the generalized WT function on the complex plane (in fact only on some rays dissecting the complex plane properly \cite{KST}) allows one to determine uniquely the potential $q$ of the equation (\ref{R-ODE}) as well as the boundary conditions (\ref{BC}). We refer to \cite{Be, GS, Te} for results in the case of regular WT functions and to the recent results \cite{FY, KST} in the case of singular WT functions.

The characteristic and generalized WT functions obtained for each one-dimensional equation (\ref{R-ODE}) can be summed up on the spans of each of the harmonics $(Y_{n \lambda})_{n \in \N}$ in order to define operators from $\H_B$ onto itself. Precisely, recalling that
$$
  \H_B = L^2((0,B); dy) = \oplus_{n \in \N} \kl Y_{n \lambda} \er,
$$
we have

\begin{defi} \label{WT-Function}
  Let $\lambda \not=0 $ be a fixed energy. The characteristic operator $\Delta(\lambda)$ and the generalized WT operator $M(\lambda)$ are defined as operators from $\H_B$ onto $\H_B$ that are diagonalizable on the Hilbert basis of eigenfunctions $\{Y_{n \lambda} \}_{n \in \N}$ associated to the eigenvalues $\Delta_q(\mu_{n \lambda}^2)$ and $M_q(\mu_{n \lambda}^2)$. More precisely, for all $v \in \H_B$, $v$ can be decomposed as
$$
  v = \sum_{n \in \N} v_n Y_{n \lambda}, \quad v_n \in \C,
$$
and we have
$$
  \Delta(\lambda) v = \sum_{n \in \N} \Delta_q(\mu_{n \lambda}^2) v_n Y_{n \lambda}, \quad \quad M(\lambda) v = \sum_{n \in \N} M_q(\mu_{n \lambda}^2) v_n Y_{n \lambda}.	
$$
\end{defi}

\begin{rem}
Under our assumptions on the metric, the eigenvalues $\mu_{n\lambda}^2$ of the operator $A(\lambda)$ aren't necessarily simple but can have multiplicity $2$. Hence there is a slight degree of freedom in the choice of the associated eigenfunctions $Y_{n\lambda}$. Nevertheless, we can remark that the definition of the Characteristic and generalized Weyl-Titchmarsh operators do not depend on the choice of the eigenfunctions $Y_{n\lambda}$. This observation will be important later in our study of the inverse problem. 
\end{rem}

We state now the main result of this paper which gives a positive answer to the uniqueness inverse problem for the asymptotically hyperbolic Liouville surfaces we consider.

\begin{thm} \label{Main}
  Let $(\S,g)$ and $(\tilde{\S}, \tilde{g})$ be two asymptotically hyperbolic Liouville surfaces as described in Definition \ref{AHLS}. We shall add a $\ \tilde{}$ to any quantities related to $(\tilde{\S}, \tilde{g})$. Assume that $B = \tilde{B}$. Let $\lambda \not=0$ be a fixed energy of the stationary wave equation (\ref{ShiftedWaveEq}). Assume that the following conditions is fulfilled
$$
  M(\lambda) = \tilde{M}(\lambda).
$$
Then $A = \tilde{A}$ and there exists a constant $C \in \R$ such that
$$
  \tilde{a} = a + C, \quad \quad \tilde{b} = b + C.
$$
As a consequence, we have
$$
  \tilde{g} = g.
$$
In other words, the Liouville surfaces $(\S,g)$ and $(\tilde{\S}, \tilde{g})$ coincide up to isometries.
\end{thm}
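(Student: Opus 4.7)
My plan proceeds in two stages following the CAM strategy announced in the introduction. The first stage uses the operator equality $M(\lambda) = \tilde{M}(\lambda)$ on $\H_B$ to match the angular spectral data (extracting $b$ up to a constant); the second stage reduces the situation to a scalar inverse problem for the generalized Weyl-Titchmarsh function on the radial ODE, to which I would apply the Freiling-Yurko uniqueness theorem in order to recover $a$ and the interval length $A$.

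\textbf{Angular step.} Since $M(\lambda)$ is diagonal in the eigenbasis $\{Y_{n\lambda}\}$ of $A(\lambda)$, it commutes with $A(\lambda)$; by the assumption $M(\lambda) = \tilde{M}(\lambda)$, the operator $M(\lambda)$ also commutes with $\tilde{A}(\lambda)$. When $M_q$ separates the values $\mu^2_{n\lambda}$ (a fact I would verify from analyticity and the asymptotics of the Wronskians $W(S_{11},S_{20})$ and $\Delta_q$), the eigenspaces of $M(\lambda)$ are exactly those of $A(\lambda)$, so $\tilde{A}(\lambda)$ is also diagonalized in the basis $\{Y_{n\lambda}\}$. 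A Borg-Marchenko type argument adapted to the periodic Sturm-Liouville problem on $[0,B]$ then forces $\tilde b = b + C$ for some constant $C \in \R$, which in turn gives $\tilde Y_{n\lambda} = Y_{n\lambda}$ (after reindexing within degenerate eigenspaces) and $\tilde\mu^2_{n\lambda} = \mu^2_{n\lambda} - (\lambda^2 + \tfrac14)\,C$.

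\textbf{Radial step via CAM.} The operator identity now yields the scalar relations
\[
  M_q\bigl(\mu^2_{n\lambda}\bigr) \;=\; M_{\tilde q}\bigl(\mu^2_{n\lambda} - (\lambda^2 + \tfrac14)\,C\bigr), \qquad n \in \N.
\]
Both sides extend to meromorphic functions of $\mu^2 \in \C$ by the entireness of the fundamental systems $S_{jm}(x,\mu^2)$ claimed in (c). Combining the asymptotics of $\{S_{10},S_{20}\}$ and $\{S_{11},S_{21}\}$ in $\mu$ (to be established in Section \ref{SeparatedRadialODE}) with the Müntz condition (\ref{MuntzCond}), I would run a Phragm\'en-Lindel\"of / Carleman type interpolation argument to upgrade this countable identity to the equality of the two meromorphic functions on all of $\C$. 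The Freiling-Yurko inverse uniqueness theorem \cite{FY} then applies to the singular Sturm-Liouville equation (\ref{R-ODE}) and yields $A = \tilde A$ together with $\tilde a = a + C'$ for some constant $C' \in \R$.

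\textbf{Closing the argument.} It remains to check that $C' = C$. This follows because, under the simultaneous shifts $a \mapsto a + C$ and $b \mapsto b + C$, equation (\ref{SWEq}) is invariant and its separation constant $\mu^2$ is shifted uniformly; any mismatch $C \neq C'$ would produce a residual constant in the matched radial equation that would contradict the identities above. Therefore $\tilde a = a + C$, $\tilde b = b + C$, so $\tilde a(x) - \tilde b(y) = a(x) - b(y)$ and $\tilde g = g$, proving the theorem. The main obstacle I anticipate is the \emph{angular} reduction: turning a single operator equality on $\H_B$ into the rigid statement $\tilde b - b = \mathrm{const}$ requires both a careful analysis of the commutant of $M(\lambda)$ in the presence of possible double eigenvalues and a Borg-Marchenko type theorem adapted to periodic boundary conditions. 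The secondary technical hurdle is quantitative: obtaining sharp enough growth bounds on $M_q$ and sufficient control of the Regge poles $\alpha_n$ to close the Müntz-type interpolation in the radial step.
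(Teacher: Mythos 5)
Your overall architecture (angular step, then CAM extension, then radial Borg--Marchenko via Freiling--Yurko) is the same as the paper's, but the angular step as you propose it contains a genuine gap. You correctly deduce that the common eigenbasis $\{Y_{n\lambda}\}$ diagonalizes both $A(\lambda)$ and $\tilde{A}(\lambda)$, but you then invoke ``a Borg--Marchenko type theorem adapted to the periodic Sturm--Liouville problem'' to conclude $\tilde b = b + C$. No such theorem is available from a single periodic spectrum (periodic isospectral sets are generically nontrivial), and more importantly none is needed: the data you have is far stronger than a spectrum, namely a \emph{common eigenfunction}. The paper's argument is one line. If $E(\mu_n^2)=\tilde E(\tilde\mu_p^2)$ is one-dimensional and $\tilde Y_p = cY_n$, subtract the two copies of the angular ODE (\ref{AngularODE}) to get
\begin{equation*}
  \left( (\lambda^2+\tfrac14)\, b(y) - \mu_n^2 \right) Y_n(y) \;=\; \left( (\lambda^2+\tfrac14)\, \tilde b(y) - \tilde\mu_p^2 \right) Y_n(y),
\end{equation*}
divide by $Y_n$ away from its finitely many zeros (Sturm--Liouville eigenfunctions of a regular problem have finitely many zeros on $(0,B)$), and use continuity of $b,\tilde b$ to conclude $(\lambda^2+\tfrac14)(b-\tilde b)=\mu_n^2-\tilde\mu_p^2$ is constant. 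The only analytic input needed is that the eigenspaces of $M(\lambda)$ eventually have dimension $1$ or $2$, which the paper gets from the strict monotonicity of $|M_q(\mu^2)|$ for large real $\mu$ (a consequence of the unitarity relation and the asymptotics of $\Delta_q$); your ``separation of values'' caveat is resolved exactly there, and the two-dimensional case is handled by linearity.

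Two secondary points. First, in the CAM step you propose to interpolate the \emph{meromorphic} function $M_q$ directly; the Cartwright-class uniqueness theorem applies to entire functions, so you must clear denominators and work with $\delta_q(\mu^2)\,\Delta_{\tilde q}(\mu^2) - \delta_{\tilde q}(\mu^2)\,\Delta_q(\mu^2)$, which is entire, of exponential type, and bounded on $i\R$; the M\"untz condition (\ref{MuntzCond}) then kills it. (The paper additionally needs equality of the characteristic functions separately, which it extracts from the product identity via the unitarity relation -- $\Delta_q$ and $\delta_q$ cannot vanish simultaneously -- Hadamard factorization and the large-$\mu$ asymptotics; this is also where $A=\tilde A$ comes out.) Second, your ``closing the argument'' paragraph about $C'=C$ is an assertion, not a proof. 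The clean mechanism is the shift identity $M_q(\mu^2) = M_{q+L}(\mu^2-L)$ (the paper's Lemma \ref{Sym}, immediate since the same fundamental systems solve the shifted equation): applying it with $L=-C(\lambda^2+\tfrac14)$ converts your hypothesis into $M_q(\mu_n^2)=M_{\tilde q - C(\lambda^2+1/4)}(\mu_n^2)$, so Freiling--Yurko returns $q = \tilde q - C(\lambda^2+\tfrac14)$ directly, i.e. $\tilde a = a + C$ with the \emph{same} constant $C$, and no separate matching of $C'$ with $C$ is required.
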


Let us make a few comments on this result. First, the hypothesis $B = \tilde{B}$ has the following geometrical interpretation. Recall that the Liouville surface has infinite area in the asymptotically hyperbolic ends $\{x = 0\}$ and $\{x = A\}$. We can quantify how the area in the ends tends to infinity by the following computation. Let $\e > 0$ and consider the part $\S_\e$ of $\S$ given by $\e < x < \frac{A}{2}$ and $0 < y < B$. Then
$$
  \A_\e = Vol_g(S_\e) = \int_\e^{\frac{A}{2}} \int_0^B \left( a(x) - b(y) \right) dx dy.
$$	
A short calculation using (\ref{AH-0}) shows that
$$
  \A_\e \sim \frac{B}{\e}, \quad \e \to 0.
$$
Hence, the constant $B$ appears in the leading term of $\A_\e$ which measures the rate of growth of the area in the asymptotically hyperbolic ends $\{x=0\}$. A similar calculation of course holds for the asymptotically hyperbolic end $\{x=A\}$ where the constant $B$ has the same geometrical interpretation.

Second, our result asserts that asymptotically hyperbolic Liouville surfaces are determined uniquely from the knowledge of the generalized WT operators at a fixed energy \emph{up to isometries}. We stress the fact that generically, the group of isometries of the Liouville surfaces we are considering is at most \emph{discrete} and not \emph{continuous}. Examples of such isometries would be the following. Assume for instance that the function $a$ has the reflection symmetry $a(x) = a(A-x), \ \forall x \in (0,A)$, then the change of variable $(x,y) \mapsto (X = A-x, y)$ would clearly be an isometry since $g = (a(X) - b(y)) (dX^2 + dy^2)$. An example of continuous group of isometries would be if the function $b$ were a constant, \textit{i.e.} $b(y) = b, \ \forall y \in (0,B)$. Then, the Liouville surface would have a cylindrical symmetry with respect to the variable $y$ and thus would be a surface a revolution. This can be seen straightforwardly introducing the radial variable $X = \int \sqrt{a(s)-b} \,ds$ for which the metric $g$ takes the form $g = dX^2 + (a(x(X)) - b) dy^2$.

Third, the characteristic and generalized Weyl-Titchmarsh operators introduced in Definition \ref{WT-Function} can be interpreted as the usual transmission and reflection operators associated to the wave equation for the corresponding asymptotically hyperbolic Liouville surface. More precisely, let us recall the construction of the scattering matrix for general asymptotically hyperbolic surfaces as given by Isozaki, Kurylev, Lassas in \cite{IKL} and adapted to our case.

In our particular model, we have two ends and so we introduce two cutoff functions $\chi_0, \chi_1$ defined by
\begin{equation} \label{Cutoff}
  \chi_j \in C^\infty(\R), \ j=0,1, \quad \chi_0 = 1, \ \textrm{on} \ (0,\frac{A}{4}), \quad \chi_1 = 1 \ \textrm{on} \ (\frac{3A}{4}, A), \quad  \chi_0 + \chi_1 = 1 \ \textrm{on} \ (0,A),
\end{equation}
	in order to separate these two ends. It is shown in \cite{IKL} that the solutions of the shifted stationary equation (\ref{ShiftedWaveEq}) are unique when imposing some radiation conditions at infinities. Precisely, we define some Besov spaces that encode these radiation conditions at infinities as follows.

\begin{defi} \label{AbstractBesov}
  Let $\H_B = L^2((0,B);dy) $. Let the intervals $(0,+\infty)$ and $(-\infty, A)$ be decomposed as
	$$
	  (0,+\infty) = \cup_{k \in \Z} I_k, \quad (-\infty, A) = \cup_{k \in \Z} J_k,
	$$
	where
	$$
	  I_k = \left\{ \begin{array}{cc}
		(exp(e^{k-1}), exp(e^k)], & k \geq 1, \\
		(e^{-1}, e], & k = 0, \\
		(exp(-e^{|k|}), exp(-e^{|k|-1})], & k \leq -1,
		\end{array} \right.,
		$$
		$$
		J_k = \left\{ \begin{array}{cc}
		(A-exp(e^{k-1}), A-exp(e^k)], & k \geq 1, \\
		(A-e^{-1}, A-e], & k = 0, \\
		(A-exp(-e^{|k|}), A-exp(-e^{|k|-1})], & k \leq -1.
		\end{array} \right.
	$$
	We define the Besov spaces  $\B_0 = \B_0(\H_B), \B_1 = \B_1(\H_B)$ to be the Banach spaces of $\H_B$-valued functions on $(0,\infty)$ and $(-\infty, A)$ satisfying respectively
	$$
	  \| f \|_{\B_0} = \sum_{k \in \Z} e^{\frac{|k|}{2}} \left( \int_{I_k} \| f(x) \|^2_{\H_B} \frac{dx}{x^2} \right)^{\frac{1}{2}} < \infty,
	$$
	$$
	  \| f \|_{\B_1} = \sum_{k \in \Z} e^{\frac{|k|}{2}} \left( \int_{J_k} \| f(x) \|^2_{\H_B} \frac{dx}{(A-x)^2} \right)^{\frac{1}{2}} < \infty.
	$$
	The dual spaces $\B_0^*$ and $\B_1^*$ are then identified with the spaces equipped with the norms
	$$
	  \| f \|_{\B_0^*} = \left( \sup_{R > e} \frac{1}{\log R} \int_{\frac{1}{R}}^R \| f(x) \|^2_{\H_B} \frac{dx}{x^2} \right)^{\frac{1}{2}} < \infty,
	$$
	$$
	  \| f \|_{\B_1^*} = \left( \sup_{R > e} \frac{1}{\log R} \int_{A-R}^{A-\frac{1}{R}} \| f(x) \|^2_{\H_B} \frac{dx}{(A-x)^2} \right)^{\frac{1}{2}} < \infty.
	$$
\end{defi}

\begin{rem}
  To give a better idea of the nature of these Besov spaces, we may compare them to the more familiar weighted $L^2$ spaces. If we define the spaces $L_0^{2,s}((0,\infty); \H_B)$ for $s \in \R$ by
	$$
	  \|f \|_s = \left( \int_0^\infty (1 + |\log x|)^{2s} \| f(x) \|^2_{\H_B} \frac{dx}{x^2} \right)^{\frac{1}{2}} < \infty,
	$$
  then for $s > \half$, we have
		$$
		 L^{2,s}_0 \subset \B_0 \subset L^{2,\half}_0 \subset L^2_0 \subset L^{2,-\half}_0 \subset \B_0^* \subset L^{2,-s}_0.
		$$
	There is of course a similar description of the Besov spaces $\B_1$ and $\B_1^*$. 	
\end{rem}

\begin{defi} \label{Besov}
  We define the Besov spaces $\B$ and $\B^*$ as the Banach spaces of $\H_B$-valued functions on $(0,A)$ with norms
	$$
	  \| f \|_\B = \| \chi_0 f \|_{\B_0} + \| \chi_1 f \|_{\B_1},
	$$
	and
	$$
	  \| f \|_{\B^*} = \| \chi_0 f \|_{\B_0^*} + \| \chi_1 f \|_{\B_1^*}.
	$$
	We also define the scattering Hilbert space
  $$
    \H_\infty = \H_B \otimes \C^2 \simeq \H_B \oplus \H_B.
  $$
\end{defi}
Then Isozaki, Kurylev and Lassas proved \cite{IKL}

\begin{thm} \label{Stat-Sol}
  a) For any solution $f \in \B^*$ of the shifted stationary wave equation at non-zero energy $\lambda^2$ (\ref{ShiftedWaveEq}), there exists a unique $\psi^{(\pm)} = (\psi_0^{(\pm)}, \psi_1^{(\pm)}) \in \H_\infty$ such that
\begin{eqnarray} \label{Asymp-StatSol}
  	f \simeq & \omega_-(\lambda) \left( \chi_0 \,x^{\half + i\lambda} \psi_0^{(-)} + \chi_1 \,(A-x)^{\half + i\lambda} \psi_1^{(-)} \right) \\
		         & - \omega_+(\lambda) \left( \chi_0 \, x^{\half - i\lambda} \psi_0^{(+)} + \chi_1 \, (A-x)^{\half - i\lambda} \psi_1^{(+)} \right), \nonumber
\end{eqnarray}
where
\begin{equation} \label{Omega}
  \omega_\pm(\lambda) = \frac{\pi}{2 \lambda \sinh(\pi \lambda))^\half \Gamma(1 \mp i \lambda)}.
\end{equation}
b) For any $\psi^{(-)} \in \H_\infty$, there exists a unique $\psi^{(+)} \in \H_\infty$ and $f \in \B^*$ satisfying (\ref{ShiftedWaveEq}) for which the decomposition (\ref{Asymp-StatSol}) above holds. This defines uniquely the scattering operator $S(\lambda)$ as the $\H_\infty$-valued operator such that for all $\psi^{(-)} \in \H_\infty$
\begin{equation}
  \psi^{(+)} = S(\lambda) \psi^{(-)}.
\end{equation}
c) The scattering operator $S(\lambda)$ is unitary on $\H_\infty$.
\end{thm}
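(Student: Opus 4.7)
The plan is to exploit the full strength of the Liouville separability of (\ref{SWEq}), which reduces the proof of Theorem \ref{Stat-Sol} to a countable family of one-dimensional Sturm--Liouville problems already introduced in (\ref{R-ODE}). All three parts should then follow from a careful Frobenius/Jost analysis at the two regular singular points of the radial ODE, combined with a flux-type Wronskian identity.

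For part (a), I would take $f \in \B^*$ satisfying (\ref{ShiftedWaveEq}) and expand it on the angular basis furnished by Proposition \ref{Angular-Op}:
\[
  f(x,y) = \sum_{n \in \N} u_n(x) \, Y_{n \lambda}(y),
\]
in which case each radial coefficient $u_n$ solves the singular equation (\ref{R-ODE}) with $\mu^2 = \mu_{n \lambda}^2$ and the potential (\ref{q}). By (\ref{Asymp}) both endpoints are regular singular points with Frobenius exponents $\half \pm i\lambda$, so using the fundamental systems $\{S_{10}, S_{20}\}$ and $\{S_{11}, S_{21}\}$ of (\ref{Jost0})--(\ref{JostA}) each $u_n$ admits unique asymptotic expansions
\begin{eqnarray*}
u_n(x) & = & \omega_-(\lambda) \alpha^{(-)}_{0,n} x^{\half+i\lambda} - \omega_+(\lambda) \alpha^{(+)}_{0,n} x^{\half-i\lambda} + r^{(0)}_n(x), \quad x \to 0, \\
u_n(x) & = & \omega_-(\lambda) \alpha^{(-)}_{1,n} (A-x)^{\half+i\lambda} - \omega_+(\lambda) \alpha^{(+)}_{1,n} (A-x)^{\half-i\lambda} + r^{(A)}_n(x), \quad x \to A,
\end{eqnarray*}
with remainders vanishing in the appropriate $\B^*$-sense. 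Setting $\psi_j^{(\pm)} := \sum_n \alpha^{(\pm)}_{j,n} Y_{n \lambda}$, the hypothesis $f \in \B^*$ should translate into $\sum_n |\alpha^{(\pm)}_{j,n}|^2 < \infty$, placing $\psi^{(\pm)} \in \H_\infty$. Uniqueness is then automatic: if both $\psi^{(\pm)}$ vanished, the remainders would force $f \in L^2(\S, dVol_g)$, contradicting the absence of embedded eigenvalues of $-\triangle_g - \frac{1}{4}$ recalled in the introduction.

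For part (b), conversely given $\psi^{(-)} \in \H_\infty$, I would decompose $\psi_j^{(-)} = \sum_n \alpha^{(-)}_{j,n} Y_{n \lambda}$ and solve, for each $n$, the two-sided boundary value problem of finding $u_n$ solving (\ref{R-ODE}) whose incoming parts at $x=0$ and $x=A$ are prescribed by $\alpha^{(-)}_{0,n}$ and $\alpha^{(-)}_{1,n}$. The associated homogeneous problem admits only the trivial solution (again by the absence of embedded eigenvalues), hence this two-point problem is uniquely solvable by writing $u_n$ in either Jost basis and matching through the connection coefficients. The outgoing data $\alpha^{(+)}_{j,n}$ then depend linearly on $\alpha^{(-)}_{j,n}$ through a $2 \times 2$ partial scattering matrix $S_n(\lambda)$, and one sets $S(\lambda) := \bigoplus_n S_n(\lambda)$. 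For (c), unitarity follows from flux conservation: the Wronskian $W(u_n, \overline{u_n})$ is $x$-independent, and evaluating it at both ends using the expansions above, together with the normalization $|\omega_+(\lambda)| = |\omega_-(\lambda)|$ (which is precisely why the $\omega_\pm$ were inserted in (\ref{Omega})), yields $|\alpha^{(-)}_{0,n}|^2 + |\alpha^{(-)}_{1,n}|^2 = |\alpha^{(+)}_{0,n}|^2 + |\alpha^{(+)}_{1,n}|^2$, i.e.\ each $S_n(\lambda)$ is unitary on $\C^2$; orthogonality of the angular basis promotes this to unitarity of $S(\lambda)$ on $\H_\infty$.

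The main obstacle is making the correspondence between the global Besov condition $f \in \B^*$ and the summability of the coefficients $\alpha^{(\pm)}_{j,n}$ fully quantitative. Justifying the radial expansions in the $\B^*$-sense, and in particular controlling the remainders $r^{(0)}_n, r^{(A)}_n$ uniformly in $n$, requires fine bounds on the Frobenius solutions as the angular momentum $\mu_{n\lambda} \to +\infty$; a Liouville--Green/WKB analysis of (\ref{R-ODE}) in this large-$\mu^2$ regime, combined with the Weyl law (\ref{WeylLaw}), should yield at most polynomial growth in $n$ that is absorbed by the exponentially weighted norms of Definition \ref{AbstractBesov}. This uniform-in-$n$ analysis, rather than the essentially algebraic Wronskian computation underlying unitarity, is where the genuine analytic work lies.
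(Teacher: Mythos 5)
The first thing to observe is that the paper contains no proof of Theorem \ref{Stat-Sol}: it is quoted from Isozaki, Kurylev and Lassas \cite{IKL}, where it is established for \emph{general} asymptotically hyperbolic surfaces, with no separability assumption, via resolvent estimates and a limiting absorption principle. So your mode-by-mode strategy is a genuinely different route; in fact it is essentially the computation the paper performs in Section 3.3 \emph{after} taking the theorem for granted -- the per-harmonic expansions (\ref{IKL-Asymp}), the connection coefficients $a_j(\mu),b_j(\mu)$ as Wronskians, and the unitarity of the $2\times2$ blocks $S(\lambda,\mu_n)$. Your flux argument for (c) is sound: since $q$ and $\mu_n^2$ are real, $W(u_n,\overline{u_n})$ is constant, and evaluating it at both ends using $|\omega_+(\lambda)|=|\omega_-(\lambda)|$ gives exactly the isometry of each $S_n(\lambda)$ on $\C^2$.

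Two genuine gaps remain. First, the solvability/uniqueness of the per-mode two-point problem in (b) does not follow from ``absence of embedded eigenvalues'': a purely outgoing solution behaves like $x^{\half-i\lambda}$, so $|u_n(x)|^2\,dx/x^2\sim dx/x$ and it only \emph{logarithmically} fails to be in $L^2(\S,dVol_g)$ -- the embedded-eigenvalue statement says nothing about it. What is actually needed is $\Delta_q(\mu_n^2)=W(S_{11},S_{10})\neq0$ for real $\mu_n$, which you must prove directly (you cannot invoke Corollary \ref{Increasing-Char-WT}, since the paper derives it \emph{from} unitarity): if $S_{10}=c\,S_{11}$, evaluating the constant Wronskian $W(S_{10},\overline{S_{10}})$ at $x=0$ gives $2i\lambda|C_{10}|^2$ while at $x=A$ it gives $-2i\lambda|c|^2|C_{11}|^2$, a contradiction for $\lambda\neq0$. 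Second -- and this is the real content of the theorem -- the equivalence between the global condition $f\in\B^*$ and the square-summability of the boundary data $\alpha^{(\pm)}_{j,n}$, uniformly in $n$, is asserted but not carried out. For large $n$ one has $\mu_{n\lambda}^2\to+\infty$, the radial equation is in the classically forbidden regime, and the fundamental solutions grow like $e^{\mu_n x}/|\mu_n|^{1/2}$ in the interior (Corollaries \ref{UniEst-Sj0} and \ref{UniEst-Sj1}); showing that $\sum_n u_nY_{n\lambda}$ converges in $\B^*$, that $f\mapsto\psi^{(\pm)}$ is bounded with the stated asymptotic meaning of $\simeq$, and that the remainders are controlled uniformly in $n$ is precisely where the analysis of \cite{IKL} lives. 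You correctly identify this as ``where the genuine analytic work lies,'' but identifying the gap is not the same as closing it: as written the proposal is a correct blueprint for the separable case, not a complete proof.
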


Note that the above scattering operator has the structure of a $2 \times 2$ matrix whose components are $\H_B$-valued operators. Precisely, we write
$$
  S(\lambda) = \left[ \begin{array}{cc} L(\lambda) & T_R(\lambda) \\
	                                      T_L(\lambda) & R(\lambda)
											\end{array} \right], 									
$$
where $T_L(\lambda), T_R(\lambda)$ are the transmission operators and $L(\lambda), R(\lambda)$ are the reflection operators from the right and from the left. Their interpretation is that the transmission operators measure what is transmitted from one end to the other in a scattering experiment, while the reflection operators measure the part of a signal sent from one end that is reflected to itself.

Our last result is the fact that the characteristic and Weyl-Titchmarsh operators of Definition \ref{WT-Function} are nothing but the inverse of the transmission operator for the former and the reflection operator for the latter. Precisely, we shall prove

\begin{prop} \label{WT-Reflection}
  Let $\lambda \not=0$ be a fixed energy. Then we have
\begin{eqnarray*}
	  \Delta(\lambda) & = & \frac{2i\lambda C_{10} C_{11} \Gamma(1-i\lambda)}{\Gamma(1+i\lambda)} \left( T(\lambda) \right)^{-1}, \\
	  M(\lambda)      & = & - \frac{\Gamma(1+i\lambda)}{2i\lambda C_{10}^2 \Gamma(1-i\lambda)} L(\lambda) = 
                            \frac{|C_{11}|^2 \Gamma(1+i\lambda) }{2i\lambda C_{11}^2  | C_{10}|^2\Gamma(1-i\lambda)} R(\lambda),
\end{eqnarray*}	
where $C_{10}, C_{11}$ are the arbitrary constants that appear in (\ref{Jost0}) - (\ref{JostA}).
\end{prop}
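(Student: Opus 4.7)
The plan is to project onto angular harmonics and match, in each channel, the asymptotic expansion from Theorem \ref{Stat-Sol} against the fundamental systems of solutions of the radial ODE. Given any stationary solution $f \in \B^*$ of (\ref{ShiftedWaveEq}), I decompose $f(x,y) = \sum_{n \in \N} u_n(x) Y_{n\lambda}(y)$ and the scattering data $\psi_j^{(\pm)} = \sum_n \psi_{j,n}^{(\pm)} Y_{n\lambda}$ for $j = 0, 1$. Each $u_n$ solves (\ref{R-ODE}) at spectral value $\mu^2 = \mu_{n\lambda}^2$, and is therefore simultaneously expressible as
$$
  u_n = \alpha_n S_{10} + \beta_n S_{20} = \gamma_n S_{11} + \delta_n S_{21}
$$
in the two fundamental systems. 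Matching the $x \to 0$ asymptotics (\ref{Jost0}) with the two leading terms in (\ref{Asymp-StatSol}) yields $\alpha_n = -\omega_+(\lambda) \psi_{0,n}^{(+)}/C_{10}$ and $\beta_n = 2i\lambda C_{10} \omega_-(\lambda) \psi_{0,n}^{(-)}$; the analogous matching at $x \to A$ using (\ref{JostA}) yields $\gamma_n = -\omega_+(\lambda) \psi_{1,n}^{(+)}/C_{11}$ and $\delta_n = -2i\lambda C_{11} \omega_-(\lambda) \psi_{1,n}^{(-)}$.

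Next I use Cramer's rule for the change of basis, which, combined with the Wronskian normalizations $W(S_{10}, S_{20}) = W(S_{11}, S_{21}) = 1$, gives
$$
  S_{11} = W(S_{11}, S_{20}) S_{10} - \Delta_q S_{20}, \qquad S_{10} = W(S_{10}, S_{21}) S_{11} + \Delta_q S_{21}.
$$
Specializing to the left channel $\psi_{1,n}^{(-)} = 0$ forces $\delta_n = 0$ and $u_n = \gamma_n S_{11}$; equating with the $\{S_{10}, S_{20}\}$-representation via the first formula gives $\alpha_n = \gamma_n W(S_{11}, S_{20})$ and $\beta_n = -\gamma_n \Delta_q$, and the linear system above then solves $\psi_{0,n}^{(\pm)}, \psi_{1,n}^{(+)}$ as explicit scalar multiples of $\psi_{0,n}^{(-)}$. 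Substituting $W(S_{11}, S_{20}) = -M_q \Delta_q$ from (\ref{WT}) and the identity $\omega_-(\lambda)/\omega_+(\lambda) = \Gamma(1-i\lambda)/\Gamma(1+i\lambda)$, one reads off
$$
  T_L = \frac{2i\lambda C_{10} C_{11}\, \Gamma(1-i\lambda)}{\Gamma(1+i\lambda)\, \Delta_q}, \qquad L = -\frac{2i\lambda C_{10}^2\, \Gamma(1-i\lambda)}{\Gamma(1+i\lambda)}\, M_q,
$$
which gives the first identity of the statement and the $M$--$L$ identity. The dual specialization $\psi_{0,n}^{(-)} = 0$ gives $\beta_n = 0$, $u_n = \alpha_n S_{10}$; the symmetric calculation produces $T_R = T_L$ (so the transmission operator is unambiguous) and yields $R$ in the form $\frac{2i\lambda C_{11}^2\, \Gamma(1-i\lambda)}{\Gamma(1+i\lambda)} \cdot W(S_{10}, S_{21})/\Delta_q$.

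The remaining task, which is the main conceptual point, is to rewrite $W(S_{10}, S_{21})$ in terms of $W(S_{11}, S_{20}) = -M_q \Delta_q$. The key observation is that the potential $q$ is real (since $\lambda \in \R$) and the angular eigenvalues $\mu_{n\lambda}^2$ are real, so complex conjugation is an involution on solutions of (\ref{R-ODE}). Comparing the asymptotics (\ref{Jost0})--(\ref{JostA}) for $\overline{S_{jm}}$ with those of the unconjugated FSS yields the symmetry identities
$$
  \overline{S_{10}} = 2i\lambda |C_{10}|^2 S_{20}, \qquad \overline{S_{21}} = \frac{1}{2i\lambda |C_{11}|^2} S_{11},
$$
and the analogous expressions for $\overline{S_{20}}$ and $\overline{S_{11}}$. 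Substituting these into $\overline{W(S_{10}, S_{21})} = W(\overline{S_{10}}, \overline{S_{21}})$ produces the Wronskian relation $W(S_{10}, S_{21}) = -\frac{|C_{10}|^2}{|C_{11}|^2}\, \overline{W(S_{11}, S_{20})}$, which accounts for the factor $|C_{11}|^2/C_{11}^2$ in the stated $M$--$R$ formula. Beyond this reality-symmetry argument, the rest of the proof is routine bookkeeping of the multiplicative constants $\omega_\pm(\lambda)$, $C_{10}$, $C_{11}$, and of the signs coming from the Wronskian manipulations.
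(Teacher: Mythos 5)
Your proposal is correct in substance and, for the transmission coefficient and the $M$--$L$ identity, follows essentially the same route as the paper: project onto the harmonics $Y_{n\lambda}$, expand each radial component in the two fundamental systems, match the asymptotics (\ref{Jost0})--(\ref{JostA}) against (\ref{Asymp-StatSol}), and use the change-of-basis Wronskians together with (\ref{WT}) and $\omega_-(\lambda)/\omega_+(\lambda)=\Gamma(1-i\lambda)/\Gamma(1+i\lambda)$. (The paper applies the decomposition directly to the particular solutions $S_{10}$ and $S_{11}$ rather than to a general $u_n$ specialized to each channel, but this is the same computation.) Your coefficients $\alpha_n,\beta_n,\gamma_n,\delta_n$ and the resulting expressions for $T_L=T_R$ and $L$ agree with the paper's (\ref{L-WT})--(\ref{T-Char}).

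The genuinely different step is your treatment of $R$. The paper obtains the $M$--$R$ identity by invoking the unitarity of $S(\lambda)$ (specifically $L\bar T+T\bar R=0$, which is imported from Theorem \ref{Stat-Sol}~c)), whereas you derive the needed Wronskian relation
$W(S_{10},S_{21})=-\tfrac{|C_{10}|^2}{|C_{11}|^2}\,\overline{W(S_{11},S_{20})}$
directly from the reality of the radial equation at real $\mu^2$ via the conjugation symmetries $\overline{S_{10}}=2i\lambda|C_{10}|^2S_{20}$, $\overline{S_{21}}=\tfrac{1}{2i\lambda|C_{11}|^2}S_{11}$. This is a more self-contained and elementary argument, since it does not rely on the (nontrivial) unitarity statement; the symmetries themselves are legitimate because the potential $q$ and the Green kernel (\ref{Green}) are real for real $\mu$, so the asymptotics at the two ends pin down the conjugated solutions. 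One caveat: your relation, exactly like the paper's unitarity computation (its equation (\ref{expression1R})), delivers $R$ proportional to $\overline{M_q(\mu^2)}\,\overline{\Delta_q(\mu^2)}/\Delta_q(\mu^2)$ rather than to $M_q(\mu^2)$ outright; identifying the two requires that $\delta_q(\mu^2)=W(S_{11},S_{20})$ be real on the real $\mu$-axis, which amounts to a compatible choice of the phases of the arbitrary constants $C_{10},C_{11}$. You gloss over this last normalization step, but so does the paper, so this is a shared imprecision rather than a defect of your approach.
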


Hence our main Theorem \ref{Main} could be rephrased  as follows. \emph{The knowledge of the reflection operators $L(\lambda)$ or $R(\lambda)$ and thus, of the full scattering matrix, at a fixed non-zero energy $\lambda$ determines uniquely an asymptotically hyperbolic Liouville surface up to isometries}. This gives a complete answer to the uniqueness part in the inverse scattering problem at fixed energy for such surfaces. In the case of general (without hidden symmetry) asymptotically hyperbolic surfaces or higher dimensional asymptotically hyperbolic manifolds, we refer to \cite{IK, IKL, JSB, SB} for uniqueness results from the knowledge of the scattering matrix \emph{at all energies}.

We conclude this introduction saying a few words on the strategy of the proof of Theorem \ref{Main} from our main assumption $M(\lambda) = \tilde{M}(\lambda)$ on the WT operators at a fixed energy $\lambda^2$. We recall that we add a $\,\tilde{}$ at any quantities related to the Liouville surface $(\tilde{\S}, \tilde{g})$. By construction of the WT operator as a diagonalizable operator on the Hilbert basis $\{Y_{n \lambda}\}$ associated to the eigenfunctions $M_q(\mu_{n\lambda}^2)$, this hypothesis will roughly speaking imply that
\begin{equation} \label{1}
  Y_{n \lambda}(y) = \tilde{Y}_{n\lambda}(y), \quad \forall y \in (0,B), \ \forall n \in \N,
\end{equation}
and
\begin{equation} \label{2}
  M_q(\mu_{n\lambda}^2) = M_{\tilde{q}}(\tilde{\mu}_{n\lambda}^2), \quad \forall n \in \N.
\end{equation}

Recalling that the functions $Y_{n\lambda}$ are eigenfunctions of the angular separated equation (\ref{AngularODE}), we can use (\ref{1}) in a direct way to conclude that there exists a constant $C$ such that
$$
  \tilde{b}(y) = b(y) + C, \quad \forall y \in (0,B),
$$
and
$$
  \tilde{\mu}^2_{n\lambda} = \mu^2_{n\lambda} + C (\lambda^2 +\frac{1}{4}), \quad \forall n \in \N.
$$
In other words, the angular separated equation together with our main assumption allow us to recover the potential $b$ (up to a constant). For simplicity, we assume in this introduction that this constant $C$ vanishes. Thus the condition (\ref{2}) becomes
\begin{equation} \label{3}
  M_q(\mu_{n\lambda}^2) = M_{\tilde{q}}(\mu_{n\lambda}^2), \quad \forall n \in \N.
\end{equation}

The second step is to use the Complex Angular Momentum (CAM) method, that is to say to allow the angular momentum $\mu_{n\lambda}$ to be complex - denoted generically by $\mu$ - and to use some uniqueness results for analytic functions to extend the validity of (\ref{3}) to all $\mu \in \C \setminus \{ \textrm{poles} \}$. We shall show in Section \ref{SeparatedRadialODE} that the numerator and denominator of the function $\mu \mapsto M_q(\mu^2)$ are entire in $\mu \in \C$ and even better, belong to the Cartwright class of entire functions. Now, it is well known that such functions are uniquely determined by their values on a sequence of complex number $(z_n)_n$ satisfying a M\"untz condition of the following type
$$
  \sum_{n \in \N} \frac{1}{|z_n|} = \infty.
$$
But this is precisely the case of the eigenvalues $\mu_{n \lambda}$ that grows as the integers as $n\to \infty$ according to (\ref{WeylLaw}). This will help us to prove that if the equality (\ref{3}) holds on all $\mu_{n \lambda}$, then it holds for all $\mu \in \C \setminus \{ \textrm{poles} \}$.

The last step of the proof is an application of the theory of generalized Weyl-Titchmarsh functions in one-dimensional inverse problems. Recalling that $M_q(\mu^2)$ is the generalized WT function associated to the separated radial ODE (\ref{R-ODE}), it is an almost standard procedure to show that we can recover the potential $q$ - and thus the function $a$ - from the knowledge of $M_q(\mu^2)$ for all $\mu \in \C$. This third step finishes the proof of our main Theorem \ref{Main} since the metric $g$ only depends on $a - b$.

In the remainder of this paper, we shall provide the details of the results stated above. In Section \ref{LiouvilleSurfaces}, we come back to the definition and properties of Liouville surfaces. We explain in which sense each Riemanniann surface possessing this property of separability of the wave equation can be written locally as a Liouville surface. In Section \ref{SeparatedRadialODE}, we construct the FSS for the separated radial ODE. We provide estimates on these FSS and also on the scattering coefficients $\Delta_q(\mu^2)$ and $M_q(\mu^2)$ as $|\mu| \to \infty$. In Section \ref{IP}, we solve the inverse problems in three main steps as explained above. Finally, we finish this paper with an open problem, namely can we determine the metric from the transmission operators at a fixed energy? We provide some preliminary remarks on this problem and conjecture a result of non uniqueness.

%At last, in appendix \ref{WT-DN}, we study the case of compact Liouville surfaces with boundaries. We define in this setting the generalized WT operator as we did for asymptotically hyperbolic surfaces and show that it coincides with the Dirichlet-to-Neumann map. Then we recover the uniqueness (up to isometries) result for the class of Liouville metrics which is of course not new in 2D. But we found interesting to make the link between a priori different objects in the litterature.

%%%%%%%%%%%%%%%%%%%%%%%%%%%%%%%%%%%%%%%%%%%%%%%%%%%% ASYMPTOTICALLY HYPERBOLIC STAECKEL SURFACES %%%%%%%%%%%%%%%%%%%%%%%%%%%%%%%%%%%%%

\Section{The wave equation on Liouville surfaces. Separation of variables.} \label{LiouvilleSurfaces}

Our purpose in this section is to describe and characterize geometrically the class of manifolds in which we shall solve the inverse scattering problem at fixed energy for the wave equation.
Specifically, we consider three-dimensional manifolds $\M$ which we assume to be diffeomorphic to the product of the real line $\mathbb{R}_{t}$ with a surface $\S$ which is diffeomorphic to a cylinder.  We endow the manifold $\M$ with a Lorentzian product metric $g_{\M}$, given by
\[
g_{\M}=-dt^{2}+g_{\S},
\]
where $g_{\S}$ is a Riemannian metric on $\S$. We shall impose on the Riemannian surface $(M,g_{\S})$ that it be asymptotically hyperbolic, and that there exist a global coordinate chart on $\S$ such that the wave equation
\begin{equation}\label{wave3}
\Box_{g_{\M}}\psi=0,
\end{equation}
is separable into ordinary differential equations when $\psi$ is expressed as a product of functions on one variable. There is still quite a bit of freedom left in choosing coordinates metrics satisfying the above hypotheses, and the Liouville form arises naturally when trying to construct a simple canonical form in which all the requirements are met. We begin with the separability condition, which is local in nature.

We first observe that in view of the product structure of $g_{\M}$, the dependence of $\psi$ on the time variable $t$ can be separated by setting
\begin{equation}
\psi=e^{i\lambda t}f,
\end{equation}
where $\lambda$ is a separation constant and $f$ is independent of $t$. The equation~(\ref{wave3}) then reduces to
\begin{equation}\label{Helm}
-\Delta_{g}f=\lambda^{2}f,
\end{equation}
which is the eigenvalue equation for the Laplace-Beltrami operator on the Riemannian surface $(M,g_{\S})$, also called the Helmholtz equation.

The $n$-dimensional metrics admitting local coordinates in which the Helmholtz equation is separable into ordinary differential equations have been studied extensively, both in the orthogonal case~\cite{Eisen} and in the non-orthogonal case~\cite{KM}. These works have their origin in the pioneering investigations of St\"ackel on the separability of the Hamilton-Jacobi equation for the geodesic flow of a Riemannian metric admitting orthogonal coordinates, and are anchored around the notion of metrics in St\"ackel form, which we now recall. Consider an open subset $U$ of $\mathbb{R}^{n}$ endowed with a metric for which the coordinate lines are orthogonal,
\begin{equation}\label{orth}
ds^{2}=\sum_{i=1}^{n}H_{i}^{2}(x^{1},\ldots,x^{n})(dx^{i})^{2}.
\end{equation}
We say that the metric is in \emph{St\"ackel form} if there exists a matrix $S=(s_{ij})$ of $C^{\infty}$ functions on $U$, called a St\"ackel matrix, such that
\[
H_{i}^{2}=\frac{s}{s^{i1}},
\]
where
\[
\frac{\partial s_{ij}}{\partial x^{k}}=0\quad \mbox {for} \quad k\neq i,
\]
where $s:=\det S$ and $s^{i1}$ denotes the cofactor of $s_{i1}$. It is well-known that the Hamilton-Jacobi equation for the geodesic flow of the metric~(\ref{orth})
\begin{equation}\label{HJ}
\sum_{i=1}^{n}H_{i}^{-2}(\frac{\partial W}{\partial x^{i}})^{2}=E,
\end{equation}
admits an additively separable complete integral if and only if the metric is in St\"ackel form~\cite{Eisen}. It is proved in~\cite{Eisen} that necessary and sufficient conditions for the product separability of the Helmholtz equation~(\ref{Helm}) in the orthogonal metric~(\ref{orth}) are that the metric be in St\"ackel form and furthermore that the {\it Robertson conditions}
\[
s=\prod_{i=1}^{n}\frac{H_{i}}{\psi_{i}},
\]
where the $\psi_{i}$ are $C^{\infty}$ functions on $U$ such that
\[\frac{\partial \psi_{i}}{\partial x^{j}}=0\quad \mbox{for} \quad 1\leq i\neq j \leq n,
\]
be satisfied. It is proved in \cite{Eisen} that the Robertson conditions are equivalent to
\[
R_{ij}=0 \quad \mbox{for} \quad 1\leq i\neq j \leq n,
\]
where $R_{ij}$ denotes the Ricci tensor. In the case $n=2$ of surfaces, it is easily verified that \emph{both} the Hamilton-Jacobi equation~(\ref{HJ}) and the Helmholtz equation~(\ref{Helm}) are separable if the metric is in St\"ackel form, in other words that the Robertson condition is vacuous. Furthermore, it is also straightforward to verify~\cite{Mor} that every separable coordinate on an open set of a  Riemannian system is necessarily orthogonal. We can therefore assume without loss of generality that the metric on our Riemannian surface $(\S,g_{\S})$ is locally given in St\"ackel form as
\begin{equation}\label{Stackel2}
ds^{2}=\frac{s_{11}s_{22}-s_{12}s_{21}}{s_{22}}(dx^{1})^{2}+\frac{s_{11}s_{22}-s_{12}s_{21}}{s_{12}}(dx^{2})^{2}.
\end{equation}

We are now going to implement our global assumptions about the surface $(\S,g_{\S})$, namely that it is diffeomorphic to a cylinder, with asymptotically hyperbolic ends. The first step will be to choose a specific system of coordinates within the class of coordinates in which the metric is in the St\"ackel form (\ref{Stackel2}). Indeed, we may use the freedom to redefine our separable coordinates $(x^{1},x^{2})$ by a local diffeomorphism of the form $(x^{1},x^{2})\mapsto (x(x^{1}), y(x^{2}))$, where the functions $x,y$ are $C^{\infty}$ and monotone\footnote{Just define $$x= \int \sqrt{s_{12}} dx^1, \quad y= \int \sqrt{s_{22}} dx^2.$$}, to put the metric (\ref{Stackel2}) in \emph{Liouville form},
\begin{equation}
ds^{2}=(a(x)-b(y))(dx^{2}+dy^{2}).
\end{equation}

By the preceding discussion, we know that  the wave equation (\ref{wave3}) will admit product separable solutions of the form
\begin{equation}\label{sep}
\psi=e^{i\lambda t}u(x)v(y).
\end{equation}
The second step, which can be thought of as adapting the coordinates to the cylindrical topology of our surface $\S$, is to assume that  $\S$ is covered with a single coordinate chart $(0,A)_x \times (0,B)_y$, where the coordinate $y$ is to be thought of as an angular variable defined on a circle of length $2\pi B$, and the coordinate $x$ which can be thought of as the radial variable for our problem, defines the asymptotic ends of the cylinder as $\{x=0\}$ and $\{x=A\}$.  The final step is to require from the functions $a(x)$ and $b(y)$ that their asymptotic behaviour near  $\{x=0\}$ and $\{x=A\}$ correspond to the boundary behaviour of the metric of the hyperbolic plane, with precise error estimates. These are precisely the conditions appearing in Definition \ref{AHLS}. These behaviour of these error terms will be crucial to our analysis of the inverse problem for Liouville surfaces at fixed energy.

%%%%%%%%%%%%%%%%%%%%%%%%%%%%%%%%%%%%%%%%%%%%%%%%% THE DIRECT SCATTERING %%%%%%%%%%%%%%%%%%%%%%%%%%%%%%%%%%%%%%%%%%%%%%%%%%%%%%%%%%

\Section{The separated radial ODE} \label{SeparatedRadialODE}

In this Section, we investigate the separated radial ODE (\ref{R-ODE})
$$
  -u''(x) + q(x) u(x) = -\mu^2 u(x),
$$
where the potential $q$ is given by
$$
  q(x) = \left((i\lambda)^2 - \frac{1}{4}\right) a(x),
$$
and satisfies the asymptotics (\ref{Asymp}). In particular, we construct two fundamental systems of solutions (FSS) of solutions $\{S_{10}, S_{20}\}$ and $\{S_{11}, S_{21}\}$ for (\ref{R-ODE}) having the properties a) - c) of Section \ref{Intro}. 

\subsection{The construction of the Fundamental Systems of Solutions (FSS)}

We shall only construct the FSS $\{S_{10}, S_{20}\}$ that satisfies the asymptotics (\ref{Jost0}) at $x = 0$ since the construction of $\{S_{11}, S_{21}\}$ that satisfies the same asymptotics at $\{x = A\}$ is  identical by symmetry. We rewrite (\ref{R-ODE}) as the inhomogeneous ordinary differential equation
\begin{equation} \label{Inhom}
  -u''(x) - \frac{\lambda^2 + \frac{1}{4}}{x^2} \,u(x) + \mu^2 u(x) = -q_{0}(x) u(x),
\end{equation}
where $x q_{0}(x) \in L^1(0,\frac{A}{2})$ according to (\ref{Asymp}). We first study the homogeneous equation
\begin{equation} \label{Hom}
  - u''(x) - \frac{\lambda^2 + \frac{1}{4}}{x^2} u(x) + \mu^2 u(x) = 0,
\end{equation}
in which we recognize a modified Bessel equation. It is well known that (\ref{Hom}) has a FSS of solutions given by $\{ \sqrt{x} I_{i\lambda}(\mu x), \sqrt{x} K_{i\lambda}(\mu x) \}$, where the modified Bessel functions $I_\nu(z)$ and $K_\nu(z)$ are defined by (see for instance \cite{Leb})
\begin{equation}\label{defI}
 I_{\nu}(z) = \sum_{k=0}^{\infty} \frac{\left( \frac{z}{2} \right)^{\nu + 2k}}{\Gamma(k+\nu+1) k!}, \quad \vert z \vert < \infty, \quad \vert \arg(z) \vert < \pi,
\end{equation}
\begin{equation}\label{defK}
  K_{\nu}(z) = \frac{\pi}{2} \frac{I_{-\nu}(z) - I_{\nu}(z)}{\sin{\nu \pi}}, \quad \vert \arg(z) \vert < \pi, \quad \nu \notin \mathbb{Z}.
\end{equation}
Recall that $W(\sqrt{x} I_{i\lambda}(\mu x), \sqrt{x} K_{i\lambda}(\mu x) ) = -1$ by \cite{Leb}, eq. (5.9.5). We can thus define the Green's function associated to (\ref{Hom}) by the kernel
\begin{equation} \label{Green}
  G(x,t, \mu) = \sqrt{xt} \left( I_{i\lambda}(\mu x) K_{i\lambda}(\mu t) - I_{i\lambda}(\mu t) K_{i\lambda}(\mu x) \right).
\end{equation}
We shall use the following estimates on the Green's kernel:

\begin{prop} \label{GreenEst}
  For all $\mu \in \C$ such that $\Re(\mu) \geq 0$, we have
	\begin{equation} \label{Green1}
    |G(t,x,\mu)| \leq C \left( \frac{x}{1+|\mu| x}\right)^{\frac{1}{2}} \left( \frac{t}{1+|\mu| t}\right)^{\frac{1}{2}} e^{ \Re(\mu) (x-t)}, \quad \forall \ 0 \leq t \leq x,
	\end{equation}
	\begin{equation} \label{Green2}
	  |\partial_x G(t,x,\mu)| \leq C \left( \frac{1+|\mu| x}{x} \right)^{\frac{1}{2}} \left( \frac{t}{1+|\mu| t} \right)^{\frac{1}{2}} e^{ \Re(\mu) (x-t)}, \quad \forall \ 0 \leq t \leq x.
	\end{equation}
\end{prop}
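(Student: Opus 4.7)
The plan is to reduce both estimates to uniform bounds on the modified Bessel functions of purely imaginary order $i\lambda$ valid on the closed right half-plane $\Re z \geq 0$. Specifically, I would first establish that for some constant $C = C(\lambda)$,
\begin{equation*}
  |I_{i\lambda}(z)| \leq C\,\frac{e^{\Re z}}{\sqrt{1+|z|}}, \qquad |K_{i\lambda}(z)| \leq C\,\frac{e^{-\Re z}}{\sqrt{1+|z|}} \qquad (\Re z \geq 0).
\end{equation*}
For $|z| \leq 1$ these follow from the series (\ref{defI}) and the identity (\ref{defK}): since $|(z/2)^{\pm i\lambda}| = e^{\mp \lambda \arg(z/2)}$ stays bounded on the right half-plane, both $I_{\pm i\lambda}(z)$ are uniformly bounded there, and $K_{i\lambda}(z)$ inherits uniform boundedness via (\ref{defK}) (the apparent log-singularity at $z=0$ is purely oscillatory since $\lambda \neq 0$ forces $\sin(i\lambda\pi) \neq 0$). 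For $|z| \geq 1$, the Hankel-type asymptotics $I_{i\lambda}(z) \sim e^z/\sqrt{2\pi z}$ and $K_{i\lambda}(z) \sim \sqrt{\pi/(2z)}\,e^{-z}$, available uniformly in any closed sector strictly inside $|\arg z| < \pi$, take care of the remaining range.

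Granted these bounds, (\ref{Green1}) is immediate. Substituting into (\ref{Green}) and using the triangle inequality gives
\begin{equation*}
  |G(t,x,\mu)| \leq \sqrt{xt}\left[ |I_{i\lambda}(\mu x)||K_{i\lambda}(\mu t)| + |I_{i\lambda}(\mu t)||K_{i\lambda}(\mu x)| \right] \leq \frac{C\sqrt{xt}\,\bigl( e^{\Re(\mu)(x-t)} + e^{-\Re(\mu)(x-t)} \bigr)}{\sqrt{(1+|\mu|x)(1+|\mu|t)}}.
\end{equation*}
For $0 \leq t \leq x$ and $\Re(\mu) \geq 0$ we have $e^{-\Re(\mu)(x-t)} \leq 1 \leq e^{\Re(\mu)(x-t)}$, so both terms are dominated by $e^{\Re(\mu)(x-t)}$, and regrouping the factors produces exactly (\ref{Green1}).

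For (\ref{Green2}), I would rewrite
\begin{equation*}
  \partial_x G(t,x,\mu) = \sqrt{t}\left[ K_{i\lambda}(\mu t)\,\frac{d}{dx}\bigl(\sqrt{x}\,I_{i\lambda}(\mu x)\bigr) - I_{i\lambda}(\mu t)\,\frac{d}{dx}\bigl(\sqrt{x}\,K_{i\lambda}(\mu x)\bigr) \right]
\end{equation*}
and apply the companion bounds
\begin{equation*}
  \left|\tfrac{d}{dx}\bigl(\sqrt{x}\,I_{i\lambda}(\mu x)\bigr)\right| \leq C\sqrt{\tfrac{1+|\mu|x}{x}}\,e^{\Re(\mu)x}, \qquad \left|\tfrac{d}{dx}\bigl(\sqrt{x}\,K_{i\lambda}(\mu x)\bigr)\right| \leq C\sqrt{\tfrac{1+|\mu|x}{x}}\,e^{-\Re(\mu)x},
\end{equation*}
obtained by combining the recurrences $2 I'_{\nu}(z) = I_{\nu-1}(z) + I_{\nu+1}(z)$ and $2 K'_{\nu}(z) = -K_{\nu-1}(z) - K_{\nu+1}(z)$ with the same two-regime analysis, now applied also to $I_{i\lambda \pm 1}$ and $K_{i\lambda \pm 1}$. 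The same domination $e^{-\Re(\mu)(x-t)} \leq e^{\Re(\mu)(x-t)}$ then yields (\ref{Green2}). The genuinely delicate step will be the derivation of the uniform Bessel bounds themselves, and in particular the correct handling of the small-$|z|$ behaviour of the shifted indices $i\lambda \pm 1$, where $K_{i\lambda \pm 1}(z)$ has an honest $|z|^{-1}$ singularity at $z=0$; the patching across $|z| \sim 1$ must be arranged so that the constant $C$ depends only on $\lambda$ and holds uniformly in $\{\Re \mu \geq 0\}$. Once these Bessel estimates are secured, the passage to (\ref{Green1})--(\ref{Green2}) is purely algebraic.
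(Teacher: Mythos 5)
Your proposal is correct and follows essentially the same route as the paper: the paper simply quotes the uniform half-plane bounds $|I_{i\lambda}(z)| \leq C(1+|z|)^{-1/2}e^{\Re z}$, $|K_{i\lambda}(z)| \leq C(1+|z|)^{-1/2}e^{-\Re z}$ and the corresponding bounds $|zI'_{i\lambda}(z)| \leq C(1+|z|)^{1/2}e^{\Re z}$, $|zK'_{i\lambda}(z)| \leq C(1+|z|)^{1/2}e^{-\Re z}$ from the literature and then obtains (\ref{Green1})--(\ref{Green2}) by the same direct calculation you perform. The only difference is that you sketch a proof of those Bessel estimates (two-regime splitting at $|z|\sim 1$) instead of citing them, which is a harmless elaboration.
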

\begin{proof}
  We need the following estimates on the modified Bessel functions $I_\nu(z), K_\nu(z)$ when $\Re(\nu) = 0, \ \nu \ne 0$ (\cite{IK, Leb, Ol}).
	For $\nu \in \C^*, \ \Re(\nu) = 0$, we have for $\Re(z) \geq 0$
	\begin{equation} \label{Est-I}
	  |I_\nu(z)| \leq C (1+|z|)^{-\half} e^{\Re(z)},
	\end{equation}
	\begin{equation} \label{Est-K}
	  |K_\nu(z)| \leq C (1+|z|)^{-\half} e^{-\Re(z)},
	\end{equation}
	$$
	  |zI'_\nu(z)| \leq C (1+|z|)^{\half} e^{\Re(z)},
	$$
	$$
	  |zK'_\nu(z)| \leq C (1+|z|)^{\half} e^{-\Re(z)},
	$$
	where the constant $C$ does not depend on $z, \ \Re(z) \geq 0$. The estimates on the Green's kernel follow by a direct calculation.
\end{proof}

We now come back to the inhomogenous equation (\ref{Inhom}). From the previous discussion, the solutions $S_{j0}, \ j=1,2$ of (\ref{Inhom}) we are constructing must satisfy the integral equation
\begin{equation} \label{Int}
  S_{j0}(x,\mu) = \alpha_j \sqrt{x} I_{i\lambda}(\mu x) + \beta_j \sqrt{x} K_{i\lambda}(\mu x) - \int_0^x G(x,t,\mu) q_{0}(t) S_{j0}(t,\mu) dt,
\end{equation}
for a certain choice of constants $\alpha_j, \beta_j$. Let us solve (\ref{Int}) by iteration, \textit{i.e.} we construct solutions under the form
\begin{equation} \label{IterSerie}
  S_{j0}(x,\mu) = \sum_{k=0}^\infty g_{j k}(x,\mu),
\end{equation}
where
$$
  g_{j0}(x,\mu) = \alpha_j \sqrt{x} I_{i\lambda}(\mu x) + \beta_j \sqrt{x} K_{i\lambda}(\mu x),
$$
and
$$
  g_{j, k+1}(x,\mu) = -\int_0^x G(x,t,\mu) q_0(t) g_{j,k}(t,\mu) dt.
$$

Let us first determine the constants $\alpha_j, \beta_j$ for $j = 1,2$. Recall from (\ref{Jost0}) that we want
$$
  S_{10}(x,\mu) \sim_{x \to 0} C_{10} x^{\frac{1}{2} - i\lambda}, \quad \quad S_{20}(x,\mu) \sim_{x \to 0} \frac{1}{2i\lambda C_{10}} x^{\frac{1}{2} + i\lambda}.
$$
Hence we impose the conditions
\begin{equation} \label{Goal}
  g_{10}(x,\mu) \sim_{x \to 0} C_{10} x^{\frac{1}{2} - i\lambda}, \quad \quad g_{20}(x,\mu) \sim_{x \to 0} \frac{1}{2i\lambda C_{10}} x^{\frac{1}{2} + i\lambda}.
\end{equation}
But we know from (\ref{defI}) that
\begin{equation} \label{I-0}
  \sqrt{x} I_{i\lambda}(\mu x) \sim_{x \to 0} \frac{\mu^{i\lambda}}{2^{i\lambda} \Gamma(1 + i\lambda)} x^{\frac{1}{2} + i\lambda},
\end{equation}
\begin{equation} \label{K-0}
  \sqrt{x} K_{i\lambda}(\mu x) \sim_{x \to 0} \frac{\pi}{2i \sinh(\lambda \pi)} \left[ \frac{\mu^{-i\lambda}}{2^{-i\lambda} \Gamma(1 - i\lambda)} x^{\frac{1}{2} - i\lambda}  - \frac{\mu^{i\lambda}}{2^{i\lambda} \Gamma(1 + i\lambda)} x^{\frac{1}{2} + i\lambda} \right].
\end{equation}
Since $g_{10}(x,\mu) = \alpha_1 \sqrt{x} I_{i\lambda}(\mu x) + \beta_1 \sqrt{x} K_{i\lambda}(\mu x) $, it is easy to see from (\ref{I-0}) and (\ref{K-0}) that we must choose
\begin{equation} \label{c1}
  \alpha_1 = C_{10} \Gamma(1 - i\lambda) \left( \frac{\mu}{2} \right)^{i\lambda}, \quad \beta_1 = C_{10} \frac{2i \sinh(\lambda \pi)}{\pi} \Gamma(1 - i\lambda) \left( \frac{\mu}{2} \right)^{i\lambda},
\end{equation}
in order to get (\ref{Goal}).

Similarly, we must impose
\begin{equation} \label{c2}
  \alpha_2 = \frac{1}{2i\lambda C_{10}} \Gamma(1 + i\lambda) \left( \frac{\mu}{2} \right)^{-i\lambda}, \quad \beta_2 = 0,
\end{equation}
in order to get (\ref{Goal}).

In conclusion, we choose as initial functions of our iterated series
\begin{eqnarray}
   g_{10}(x,\mu) & = & C_{10} \Gamma(1 - i\lambda) \left( \frac{\mu}{2} \right)^{i\lambda} \left[ \sqrt{x} I_{i\lambda}(\mu x) + \frac{2i \sinh(\lambda \pi)}{\pi} \sqrt{x} K_{i\lambda}(\mu x) \right]  \nonumber \\
                 & =&  C_{10} \Gamma(1 - i\lambda) \left( \frac{\mu}{2} \right)^{i\lambda} \sqrt{x} I_{-i\lambda}(\mu x), \label{g10}\\
   g_{20}(x,\mu) & = & \frac{1}{2i\lambda C_{10}} \Gamma(1 + i\lambda) \left( \frac{\mu}{2} \right)^{-i\lambda} \sqrt{x} I_{i\lambda}(\mu x). \label{g20}	
\end{eqnarray}

To prove the convergence of the iterated series (\ref{IterSerie}) and obtain useful estimates as $|\mu| \to \infty$, we shall use

\begin{lemma} \label{IterEst}
  For all $x \in (0,A)$ and for all $\mu \in \C$ such that $\Re(\mu) \geq 0$, for $j = 1,2$, we have
	$$
	  |g_{j0}(x,\mu)| \leq C \left( \frac{x}{1+|\mu| x}\right)^{\frac{1}{2}}  e^{ \Re(\mu) x}.
	$$
\end{lemma}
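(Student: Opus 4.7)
The plan is to substitute the explicit closed forms \eqref{g10}--\eqref{g20} into the desired inequality and reduce everything to the Bessel function estimate already used in the proof of Proposition \ref{GreenEst}. Up to fixed multiplicative constants depending only on $\lambda$ and $C_{10}$, $g_{10}$ and $g_{20}$ are of the form $(\mu/2)^{\pm i\lambda}\sqrt{x}\,I_{\mp i\lambda}(\mu x)$, so the task splits naturally into (a) estimating the Bessel factor $\sqrt{x}\,I_{\mp i\lambda}(\mu x)$, and (b) bounding the power-of-$\mu$ prefactor $(\mu/2)^{\pm i\lambda}$, uniformly in $\mu$ with $\Re(\mu)\geq 0$.

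For part (a), I would invoke the bound \eqref{Est-I}, namely $|I_\nu(z)| \leq C(1+|z|)^{-1/2} e^{\Re(z)}$, valid for $\Re(\nu)=0$, $\nu\neq 0$ and $\Re(z)\geq 0$, applied with $\nu = \mp i\lambda$ and $z = \mu x$. This immediately gives
$$
|\sqrt{x}\,I_{\mp i\lambda}(\mu x)| \leq C \sqrt{x}\,(1+|\mu|x)^{-1/2}\,e^{\Re(\mu)\,x} = C \left(\frac{x}{1+|\mu|x}\right)^{1/2} e^{\Re(\mu)\,x},
$$
which is already the exact shape of the claim. For part (b), for $\mu\neq 0$ with $\Re(\mu)\geq 0$, writing $\mu = |\mu|\,e^{i\theta}$ with $\theta \in [-\pi/2,\pi/2]$, one finds $|(\mu/2)^{\pm i\lambda}| = e^{\mp \lambda \theta}\leq e^{|\lambda|\pi/2}$, a bound independent of $\mu$. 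The other factors $|C_{10}|$, $|\Gamma(1\pm i\lambda)|$, $(2|\lambda||C_{10}|)^{-1}$ are fixed constants depending only on $\lambda$ and the choice of $C_{10}$, so they may be absorbed into $C$.

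The boundary case $\mu=0$ requires a brief separate remark, since $(\mu/2)^{\pm i\lambda}$ is singular there while $I_{\mp i\lambda}(\mu x)$ vanishes in a matching way: by the series \eqref{defI}, the product $(\mu/2)^{i\lambda}\,I_{-i\lambda}(\mu x)$ extends continuously at $\mu=0$ to $(x/2)^{-i\lambda}/\Gamma(1-i\lambda)$, and similarly for the other factor, so $|g_{j0}(x,0)|\lesssim \sqrt{x}$, which is exactly the claim when $\mu=0$. There is no genuine obstacle here; the lemma is a bookkeeping exercise once the Bessel bound \eqref{Est-I} is in hand. The only subtlety worth flagging is ensuring that the constant $C$ in the final estimate does not depend on $\mu$ or $x$; this is guaranteed because every intervening bound, including the uniform control of $(\mu/2)^{\pm i\lambda}$ on the closed right half-plane, is itself independent of $(\mu,x)$.
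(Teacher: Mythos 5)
Your proof is correct and follows exactly the paper's route: the paper disposes of this lemma in one line as ``a direct consequence of (\ref{Est-I})--(\ref{Est-K}) and (\ref{g10})--(\ref{g20})'', which is precisely your substitution of the closed forms for $g_{j0}$ followed by the modified Bessel bound; your extra care with the uniform bound $|(\mu/2)^{\pm i\lambda}|\leq e^{|\lambda|\pi/2}$ on the closed right half-plane and with the removable singularity at $\mu=0$ only makes explicit what the paper leaves implicit.
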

\begin{proof}
  This is a direct consequence of (\ref{Est-I})-(\ref{Est-K}) and (\ref{g10})-(\ref{g20}).
\end{proof}

From Proposition \ref{GreenEst} and Lemma \ref{IterEst}, it follows by an easy induction that, for all $x \in (0,A)$ and for all $\mu \in \C$ such that $\Re(\mu) \geq 0$, for $j = 1,2$, we have

\begin{equation} \label{IterEst-k}
  |g_{jk}(x,\mu)| \leq \frac{C^{k+1}}{k!} \left( \frac{x}{1+|\mu| x}\right)^{\frac{1}{2}}  e^{ \Re(\mu) x} \left( \int_0^x \frac{t |q_0(t)|}{1+|\mu|t} dt \right)^k, \quad \forall \ k \in \N, \ j=1,2.
\end{equation}
Hence, the iterated serie (\ref{IterSerie}) converges and satisfies (\ref{Int}). Moreover, for all $x \in (0,A)$ and for all $\mu \in \C$ such that $\Re(\mu) \geq 0$, for $j = 1,2$, we also have the estimates

\begin{equation} \label{Est-Sj0}
  |S_{j0}(x,\mu)| \leq C \left( \frac{x}{1+|\mu| x}\right)^{\frac{1}{2}}  e^{ \Re(\mu) x} e^{C \left( \int_0^x \frac{t |q_0(t)|}{1+|\mu|t} dt \right)}, \quad j=1,2,
\end{equation}
and
\begin{equation} \label{Est-Sj0-1}
  |S_{j0}(x,\mu) - g_{j0}(x,\mu)| \leq C \left( \frac{x}{1+|\mu| x}\right)^{\frac{1}{2}}  e^{ \Re(\mu) x} \left[ e^{C \left( \int_0^x \frac{t |q_0(t)|}{1+|\mu|t} dt \right)} - 1 \right], \quad j=1,2.
\end{equation}

Recall now that by (\ref{AH-0}) with $\alpha = 0$ and $n = 0$, we have for all $x \in (0, x_0)$, where $x_0 \in (0,A)$ is fixed,
\begin{equation} \label{Est-q0}
  |x q_0(x) | \leq \frac{C}{x (1 + |\log x |)^{1 + \eo}}, \quad \eo > 0.
\end{equation}
Thus we obtain for all $x\sqrt{|\mu|} \geq 1$ such that $|\mu| \geq 1$ and $\Re(\mu) \geq 0$
\begin{eqnarray*}
  \int_0^x \frac{t |q_0(t)|}{1+|\mu|t} dt & = & \int_0^{\frac{1}{\sqrt{|\mu|}}} \frac{t |q_0(t)|}{1+|\mu|t} dt  + \int_{\frac{1}{\sqrt{|\mu|}}}^x \frac{t |q_0(t)|}{1+|\mu|t} dt, \\
	& \leq & C \int_0^{\frac{1}{\sqrt{|\mu|}}} \frac{dt}{t |\log t|^{1 + \eo}} dt + \int_{\frac{1}{\sqrt{|\mu|}}}^x \frac{|t q_0(t)|}{|\mu|t} dt , \\
	& \leq & \frac{C}{\left( - \log \frac{1}{\sqrt{|\mu|}} \right)^{\eo}} + \frac{1}{\sqrt{|\mu|}} \int_{\frac{1}{\sqrt{|\mu|}}}^x |t q_0(t)| dt, \\
	& \leq & C \left[ \frac{1}{(\log|\mu|)^{\eo}} + \frac{1}{\sqrt{|\mu|}} \right], \\
	& \leq & \frac{C}{(\log|\mu|)^{\eo}},
\end{eqnarray*}
where the constants $C$ may differ from line to line. As a consequence, we get the following estimates as $|\mu|\to \infty, \ \Re(\mu) \geq 0$ and $x|\mu| \geq 1$.
\begin{equation} \label{Est-mu1}
  \int_0^x \frac{t |q_0(t)|}{1+|\mu|t} dt = O \left( \frac{1}{(\log|\mu|)^{\eo}} \right),
\end{equation}
\begin{equation} \label{Est-mu2}
  e^{C \int_0^x \frac{t |q_0(t)|}{1+|\mu|t} dt} - 1 = O \left( \frac{1}{(\log|\mu|)^{\eo}} \right),
\end{equation}

We thus infer from (\ref{Est-Sj0}), (\ref{Est-Sj0-1}), (\ref{Est-mu1}) and (\ref{Est-mu2}) that, given a $x_0 \in (0,A)$ fixed, we have for all $x \in (0,x_0)$ and for $j=1,2$, as $|\mu| \to \infty$ with $\Re(\mu) \geq 0$
  $$
	  |S_{j0}(x,\mu)| \leq C \frac{e^{\Re(\mu) x}}{|\mu|^\half},
	$$
	$$
	  |S_{j0}(x,\mu) - g_{j0}(x,\mu)| \leq C \frac{e^{\Re(\mu) x}}{|\mu|^\half \, \log|\mu|^{\eo}}.
	$$

In order to compute the necessary Wronskians of the FSS $\{S_{10}, S_{20}\}$ and $\{S_{11}, S_{21}\}$, we shall need good estimates on the derivatives $S'_{jm}, \ j=1,2, \ m=0,1$. Since $G(x,x,\mu) = 0$ by (\ref{Green}), we easily see that the derivatives $S'_{j0}$ satisfy the integro-differential equation
\begin{equation} \label{Int-Deri}
  S'_{j0}(x,\mu) = g'_{j0}(x,\mu) + \int_0^x \partial_x G(x,t,\mu) q_{0}(t) S_{j0}(t,\mu) dt.
\end{equation}
Using the estimate (\ref{Green2}) on $\partial_x G(x,t,\mu)$, we get for all $x \in (0,A)$ and for all $\mu \in \C$ such that $\Re(\mu) \geq 0$, for $j = 1,2$,
\begin{equation} \label{Est-Sj0'-1}
  |S'_{j0}(x,\mu) - g'_{j0}(x,\mu)| \leq C \left( \frac{1+|\mu| x}{x} \right)^{\frac{1}{2}}  e^{ \Re(\mu) x} \left[ e^{C \left( \int_0^x \frac{t |q_0(t)|}{1+|\mu|t} dt \right)} - 1 \right], \quad j=1,2.
\end{equation}
We thus deduce from (\ref{Est-mu2}) that given a $x_0 \in (0,A)$ fixed, for all $x \in (0,x_0)$ and for $j=1,2$, we have as $|\mu| \to \infty$ with $\Re(\mu) \geq 0$
$$
 |S'_{j0}(x,\mu) - g'_{j0}(x,\mu)| \leq C |\mu|^\half \frac{e^{\Re(\mu) x}}{\log|\mu|^{\eo}}.
$$

We summarize all these results in a corollary.

\begin{coro} \label{UniEst-Sj0}
Let $x_0 \in (0,A)$ be fixed. Then for all $x \in (0,x_0)$ and for $j=1,2$, we have as $|\mu| \to \infty$ with $\Re(\mu) \geq 0$,
$$
 |S_{j0}(x,\mu)| \leq C \frac{e^{\Re(\mu) x}}{|\mu|^\half},
$$
$$
 |S_{j0}(x,\mu) - g_{j0}(x,\mu)| \leq C \frac{e^{\Re(\mu) x}}{|\mu|^\half \, \log|\mu|^{\eo}}.
$$
Moreover, we have
$$
 |S'_{j0}(x,\mu) - g'_{j0}(x,\mu)| \leq C |\mu|^\half \frac{e^{\Re(\mu) x}}{\log|\mu|^{\eo}}.
$$
\end{coro}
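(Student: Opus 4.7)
The plan is essentially to assemble the estimates already built up in the preceding discussion into the three stated bounds. The proof is in fact immediate from the calculations performed just before the corollary, but let me describe the logical structure I would follow.

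First I would construct the solutions $S_{j0}$ via the integral equation \eqref{Int} using the iterated series \eqref{IterSerie} with initial terms $g_{j0}$ given by \eqref{g10}--\eqref{g20}. To bound the base term $g_{j0}$, I would apply the modified Bessel function estimates \eqref{Est-I}--\eqref{Est-K} directly, which yields Lemma \ref{IterEst}. Then, using the Green's kernel bound \eqref{Green1} from Proposition \ref{GreenEst}, I would prove by a straightforward induction on $k$ the estimate \eqref{IterEst-k}, namely
\[
  |g_{jk}(x,\mu)| \leq \frac{C^{k+1}}{k!}\left(\frac{x}{1+|\mu|x}\right)^{1/2} e^{\Re(\mu)x}\left(\int_0^x \frac{t|q_0(t)|}{1+|\mu|t}\,dt\right)^k.
\]
Summing this geometric-type majorant gives convergence of the series and at the same time the bounds \eqref{Est-Sj0} on $S_{j0}$ and \eqref{Est-Sj0-1} on the difference $S_{j0} - g_{j0}$.

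The core quantitative step is then to estimate the weighted potential integral $\int_0^x \frac{t|q_0(t)|}{1+|\mu|t}\,dt$ as $|\mu|\to\infty$. Here I would use the asymptotic bound \eqref{Est-q0} coming from hypothesis \eqref{AH-0}, namely $|t q_0(t)| \leq C/(t(1+|\log t|)^{1+\eo})$ near $t=0$. I would split the integral at $t = 1/\sqrt{|\mu|}$. On the interval $(0, 1/\sqrt{|\mu|})$, applying \eqref{Est-q0} and integrating $dt/(t|\log t|^{1+\eo})$ gives a contribution of order $1/(\log|\mu|)^{\eo}$. On the remaining interval $(1/\sqrt{|\mu|}, x)$, I would use the crude bound $1/(1+|\mu|t) \leq 1/(|\mu|t)$, which produces a contribution of order $|\mu|^{-1/2}$ after using $tq_0(t) \in L^1$. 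Both contributions are dominated by $1/(\log|\mu|)^{\eo}$, yielding \eqref{Est-mu1} and, via $e^{Cu}-1 \leq C'u$ for small $u$, also \eqref{Est-mu2}. Substituting these into \eqref{Est-Sj0} and \eqref{Est-Sj0-1} and using $(x/(1+|\mu|x))^{1/2} \leq |\mu|^{-1/2}$ gives the first two claimed bounds.

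Finally, for the derivative estimate, I would differentiate the integral equation \eqref{Int} (noting $G(x,x,\mu)=0$ from \eqref{Green}) to obtain \eqref{Int-Deri}, and apply the kernel bound \eqref{Green2} from Proposition \ref{GreenEst} together with the already-established bound on $|S_{j0}|$. This reproduces the weighted integral of $q_0$ studied above, multiplied now by $((1+|\mu|x)/x)^{1/2} \leq C|\mu|^{1/2}$, yielding the third claimed estimate \eqref{Est-Sj0'-1}. The principal technical point throughout is the log-decay estimate for the potential integral; the rest of the argument is just keeping track of the factors $(1+|\mu|x)^{\pm 1/2}$ and $e^{\Re(\mu)x}$ in the Bessel bounds and Green's kernel.
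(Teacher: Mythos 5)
Your proposal is correct and follows essentially the same route as the paper: the same iterated series for the integral equation with the Green's kernel bounds of Proposition \ref{GreenEst}, the same splitting of the weighted potential integral at $t = 1/\sqrt{|\mu|}$ using the logarithmic decay (\ref{Est-q0}) to obtain the $O\left((\log|\mu|)^{-\e_0}\right)$ remainder, and the same differentiation of the integral equation (using $G(x,x,\mu)=0$) for the derivative bound. No gaps.
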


Symmetrically, we can construct the FSS $\{S_{11}, S_{21} \}$ and prove the corresponding estimates. We first define
\begin{eqnarray*}
  g_{11}(x,\mu) & = & C_{11} \Gamma(1 - i\lambda) \left( \frac{\mu}{2} \right)^{i\lambda} \left[ \sqrt{A-x} I_{i\lambda}(\mu (A-x)) + \frac{2i \sinh(\lambda \pi)}{\pi} \sqrt{(A-x)} K_{i\lambda}(\mu (A-x)) \right], \nonumber \\
	              & = & C_{11} \Gamma(1 - i\lambda) \left( \frac{\mu}{2} \right)^{i\lambda} \sqrt{A-x} I_{-i\lambda}(\mu (A-x)), \label{g11} \\
	g_{21}(x,\mu) & = & -\frac{1}{2i\lambda C_{11}} \Gamma(1 + i\lambda) \left( \frac{\mu}{2} \right)^{-i\lambda} \sqrt{A-x} I_{i\lambda}(\mu (A-x)) \label{g22}
\end{eqnarray*}
which are obtained using the hypothesis (\ref{AH-A}). Then using the same procedure as above, we obtain

\begin{coro} \label{UniEst-Sj1}
Let $x_0 \in (0,A)$. Then for all $x \in (x_0, A)$ and for $j=1,2$, as $|\mu| \to \infty$ with $\Re(\mu) \geq 0$
  $$
	  |S_{j1}(x,\mu)| \leq C \frac{e^{\Re(\mu) (A-x)}}{|\mu|^\half},
	$$
	$$
	  |S_{j1}(x,\mu) - g_{j1}(x,\mu)| \leq C \frac{e^{\Re(\mu) (A-x)}}{|\mu|^\half \, \log|\mu|^{\eu}}.
	$$
	$$
 |S'_{j1}(x,\mu) - g'_{j1}(x,\mu)| \leq C |\mu|^\half \frac{e^{\Re(\mu) (A-x)}}{\log|\mu|^{\eu}}.
$$
\end{coro}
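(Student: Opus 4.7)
The plan is to carry out a construction fully parallel to the one already done at the end $\{x=0\}$, transplanting every step via the change of variable $\xi = A - x$. First I would rewrite the radial ODE (\ref{R-ODE}) near $x = A$ as the inhomogeneous equation
\begin{equation*}
  -u''(x) - \frac{\lambda^2 + \frac{1}{4}}{(A-x)^2}\, u(x) + \mu^2 u(x) = -q_1(x) u(x),
\end{equation*}
where $(A-x)\, q_1(x) \in L^1(\tfrac{A}{2}, A)$ by (\ref{Asymp}), and where (\ref{AH-A}) with $\alpha = 0, n = 0$ furnishes the analogue of (\ref{Est-q0}):
\begin{equation*}
  |(A-x) q_1(x)| \leq \frac{C}{(A-x)(1 + |\log(A-x)|)^{1+\eu}}.
\end{equation*}
Setting $\xi = A - x$, the homogeneous part becomes the same modified Bessel equation, so a FSS of the homogeneous equation is $\{\sqrt{A-x}\, I_{i\lambda}(\mu(A-x)),\ \sqrt{A-x}\, K_{i\lambda}(\mu(A-x))\}$ with Wronskian $+1$ (the sign flipping because of the change of variable). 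This yields a Green's kernel
\begin{equation*}
  \tilde{G}(x,t,\mu) = \sqrt{(A-x)(A-t)}\bigl(I_{i\lambda}(\mu(A-x)) K_{i\lambda}(\mu(A-t)) - I_{i\lambda}(\mu(A-t)) K_{i\lambda}(\mu(A-x))\bigr),
\end{equation*}
which satisfies exactly the estimates of Proposition \ref{GreenEst} with $x, t$ replaced by $A-x, A-t$.

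Next I would set up the integral equation for $S_{j1}$ on the interval $(x, A)$, using $\tilde{G}$ and $q_1$. The initial functions $g_{j1}$ are already dictated (up to the sign in $\alpha_2$, see the minus sign in (\ref{JostA})) by the prescribed asymptotics (\ref{JostA}) at $x = A$; the computation matching the constants $\alpha_j, \beta_j$ is identical to (\ref{c1})--(\ref{c2}) after the change of variable, giving the $g_{j1}$ written just before the statement of the Corollary. The bound of Lemma \ref{IterEst}, applied in the variable $A-x$, becomes
\begin{equation*}
  |g_{j1}(x,\mu)| \leq C \left(\frac{A-x}{1+|\mu|(A-x)}\right)^{1/2} e^{\Re(\mu)(A-x)},
\end{equation*}
and the Picard iteration $g_{j,k+1}(x,\mu) = -\int_x^A \tilde{G}(x,t,\mu) q_1(t) g_{j,k}(t,\mu)\, dt$ then yields by induction the analogue of (\ref{IterEst-k}) with the integral $\int_x^A \frac{(A-t)|q_1(t)|}{1+|\mu|(A-t)}\, dt$ in place of $\int_0^x \frac{t|q_0(t)|}{1+|\mu|t}\, dt$. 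Summing gives convergence and the pointwise bounds analogous to (\ref{Est-Sj0}) and (\ref{Est-Sj0-1}).

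Finally I would repeat verbatim the two-range splitting argument at $(A-x)\sqrt{|\mu|} = 1$: on the small range use the logarithmic decay from (\ref{AH-A}), on the large range use the $L^1$-control of $(A-t)q_1(t)$. This produces, for $|\mu| \to \infty$ with $\Re(\mu) \geq 0$,
\begin{equation*}
  \int_x^A \frac{(A-t)|q_1(t)|}{1+|\mu|(A-t)}\, dt = O\!\left(\frac{1}{(\log|\mu|)^{\eu}}\right),
\end{equation*}
which is exactly (\ref{Est-mu1})--(\ref{Est-mu2}) with $\eo$ replaced by $\eu$. Combining with the uniform bound on $g_{j1}$ yields the first two estimates of the Corollary. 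For the estimate on $S'_{j1} - g'_{j1}$, the integro-differential analogue of (\ref{Int-Deri}) uses $\partial_x \tilde{G}$, which satisfies the bound of (\ref{Green2}) in the variable $A-x$ up to a harmless sign; plugging the same integral bound gives the last inequality.

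The argument is genuinely symmetric to the one at $\{x=0\}$; the only point that requires mild care is keeping track of the sign produced by $\frac{d}{dx} = -\frac{d}{d\xi}$ when checking the Wronskian normalization $W(S_{11}, S_{21}) = 1$ and the prescribed coefficient $-\frac{1}{2i\lambda C_{11}}$ in (\ref{JostA}), but this is a bookkeeping matter rather than an obstacle.
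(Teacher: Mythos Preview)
Your proposal is correct and matches the paper's approach exactly: the paper simply says ``Symmetrically, we can construct the FSS $\{S_{11}, S_{21}\}$ \dots\ Then using the same procedure as above, we obtain'' the corollary, and what you have written is precisely that symmetric procedure spelled out in detail via the change of variable $\xi = A-x$.
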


We finish this Section showing the additional properties of the FSS $\{S_{10}, S_{20}\}$ and $\{S_{11}, S_{21}\}$. First, we can compute their Wronskians easily. Indeed, from (\ref{Est-Sj0-1}) and (\ref{Est-Sj0'-1}), it is clear that for $j=1,2$
$$
  S_{j0}(x,\mu) = g_{j0}(x,\mu) + o(x^\half), \quad \quad S'_{j0}(x,\mu) = g'_{j0}(x,\mu) + o(x^{-\half}), \quad x \to 0.
$$
Moreover, we deduce from (\ref{I-0}), (\ref{K-0}) and the corresponding estimates for the derivatives of the modified Bessel functions that
$$
  g_{j0}(x,\mu) = O(x^\half), \quad \quad g'_{j0}(x,\mu) = O(x^{-\half}), \quad x \to 0.
$$
Hence, we conclude with the help of (\ref{Goal}) that
\begin{equation}\label{Asympx-S10}
  S_{10}(x,\mu) \sim_{x \to 0} C_{10} x^{\frac{1}{2} - i\lambda}, \quad \quad S_{20}(x,\mu) \sim_{x \to 0} \frac{1}{2i\lambda C_{10}} x^{\frac{1}{2} + i\lambda},
\end{equation}
from which we can compute the Wronskian $W(S_{20}, S_{10})$ at $x = 0$. We find
$$
  W(S_{10}, S_{20}) = 1.
$$
Similarly we prove
$$
  W(S_{11}, S_{21}) = 1.
$$

Moreover it is clear from the general results on the analytic dependence of solutions of ODEs with respect to parameters that the functions $\mu \to S_{jm}(x,\mu)$ are entire and even for $j=1,2$, for $m=0,1$ and for $x$ fixed.

%We now show that the functions $\mu \to S_{jm}(x,\mu)$ are entire and even for $j=1,2$ and $m=0,1$ and $x$ fixed. From (\ref{defI}), it is clear that the function $\left( \frac{z}{2} \right)^{-\nu} I_\nu(z)$ is entire and even in $z$. but, from (\ref{defI}), (\ref{defK}), (\ref{g10}) and (\ref{g20}), we have
%\begin{eqnarray}
  %& g_{10}(x,\mu) = C_{10} \Gamma(1 - i\lambda) \left( \frac{\mu}{2} \right)^{i\lambda} \sqrt{x} I_{-i\lambda}(\mu x), \label{g10-Alt}\\
	%& g_{20}(x,\mu) = \frac{1}{2i\lambda C_{10}} \Gamma(1 + i\lambda) \left( \frac{\mu}{2} \right)^{-i\lambda} \sqrt{x} I_{i\lambda}(\mu x). \nonumber
%\end{eqnarray}
%Hence we conclude that the functions $g_{10}(x,\mu)$ and $g_{20}(x,\mu)$ are entire and even in $\mu$.
%
%Moreover, the Green Kernel (\ref{Green}) can be written as
%$$
  %G(x,t, \mu) = \sqrt{xt} \frac{\pi}{2 \sinh(\lambda \pi)} \left( I_{i\lambda}(\mu x) I_{-i\lambda}(\mu t) - I_{i\lambda}(\mu t) I_{-i\lambda}(\mu x) \right)
%$$
%and so is also entire and even in $\mu$. By induction, we deduce that each term $g_{j0}(x,\mu)$ of the iterated serie (\ref{IterSerie}) is entire and even in $\mu$. At last, using the uniform convergence of (\ref{IterSerie}), we prove the claim for $S_{j0}(x,\mu), \ j=1,2$. A similar proof shows the same result for $S_{j1}(x,\mu), \ j=1,2$.

\subsection{Estimates of $\Delta_q(\lambda)$ and $M_q(\lambda)$ for large angular momentum $\mu$}

In this Section, we use the results of the Corollaries \ref{UniEst-Sj0} and \ref{UniEst-Sj1} to obtain estimates of the characteristic function $\Delta_q(\mu^2)$ and the generalized Weyl-Titchmarsh function $M_q(\mu^2)$ for large angular momentum $\mu$. Let us start determining the asymptotics of the functions $S_{jm}(x,\mu)$ for $j=1,2$ and $m=0,1$ when $x$ is fixed in $(0,A)$ and $|\mu|$ is large.

We recall the following estimates of the modified Bessel function $I_\nu(z)$ as $|z|$ is large (see for instance \cite{Ol}).
\begin{equation} \label{Est-I-mu}
  I_\nu(z) = \frac{e^z}{\sqrt{2\pi z}} \left( 1 + O(\frac{1}{z}) \right) \pm \frac{e^{-z + i (\pm \nu \pi + \frac{\pi}{2}})}{\sqrt{2\pi z}} \left( 1 + O(\frac{1}{z}) \right),
\end{equation}
for $-\frac{\pi}{2} + \delta \leq \pm Arg(z) \leq \frac{\pi}{2}$ with $\delta >0$.

%Recalling for convenience reader (see  (\ref{g10-Alt}) and (\ref{g20})),
%\begin{eqnarray*}
%  & g_{10}(x,\mu) = C_{10} \Gamma(1 - i\lambda) \left( \frac{\mu}{2} \right)^{i\lambda} \sqrt{x} I_{-i\lambda}(\mu x), \\
%	& g_{20}(x,\mu) = \frac{1}{2i\lambda C_{10}} \Gamma(1 + i\lambda) \left( \frac{\mu}{2} \right)^{-i\lambda} \sqrt{x} I_{i\lambda}(\mu x),
% \end{eqnarray*}

We deduce from (\ref{g10}) and (\ref{g20}) that
\begin{equation} \label{Est-g10-mu}
  g_{10}(x,\mu) = C_{10} \frac{ \Gamma(1 - i\lambda)}{\sqrt{\pi} 2^{i\lambda + \half}} \mu^{i\lambda - \half} \left( e^{\mu x} [1] \pm e^{-\mu x \pm \lambda \pi + i \frac{\pi}{2}} [1] \right),
\end{equation}
\begin{equation} \label{Est-g20-mu}
  g_{20}(x,\mu) =  \frac{ \Gamma(1 + i\lambda)}{2i\lambda C_{10} \sqrt{\pi} 2^{-i\lambda + \half}} \mu^{-i\lambda - \half} \left( e^{\mu x} [1] \pm e^{-\mu x \mp \lambda \pi + i \frac{\pi}{2}} [1] \right),
\end{equation}
for $-\frac{\pi}{2} + \delta \leq \pm Arg(\mu) \leq \frac{\pi}{2}$ with $\delta >0$ and $x \in (0,A)$ fixed and where $[1] = 1 + O\left( \frac{1}{|\mu|} \right)$ as $|\mu| \to \infty$.

Combining the asymptotics (\ref{Est-g10-mu}) and (\ref{Est-g20-mu}) with Corollary \ref{UniEst-Sj0}, we finally get
\begin{equation} \label{Est-S10-mu}
  S_{10}(x,\mu) = C_{10} \frac{ \Gamma(1 - i\lambda)}{\sqrt{\pi} 2^{i\lambda + \half}} \mu^{i\lambda - \half} \left( e^{\mu x} [1]_0 \pm e^{-\mu x \pm \lambda \pi + i \frac{\pi}{2}} [1]_0 \right),
\end{equation}
\begin{equation} \label{Est-S20-mu}
  S_{20}(x,\mu) =  \frac{ \Gamma(1 + i\lambda)}{2i\lambda C_{10} \sqrt{\pi} 2^{-i\lambda + \half}} \mu^{-i\lambda - \half} \left( e^{\mu x} [1]_0 \pm e^{-\mu x \mp \lambda \pi + i \frac{\pi}{2}} [1]_0 \right),
\end{equation}
for $-\frac{\pi}{2} + \delta \leq \pm Arg(\mu) \leq \frac{\pi}{2}$ with $\delta >0$ and $x \in (0,A)$ fixed and where $[1]_0 = 1 + O\left( \frac{1}{\log|\mu|^{\eo}} \right)$ as $|\mu| \to \infty$.

Also, using the asymptotics for the derivatives of the Bessel functions $I_\nu(z)$ and $K_\nu(z)$ for large $z$, we get analogously
\begin{equation} \label{Est-S10'-mu}
  S'_{10}(x,\mu) = C_{10} \frac{ \Gamma(1 - i\lambda)}{\sqrt{\pi} 2^{i\lambda + \half}} \mu^{i\lambda + \half} \left( e^{\mu x} [1]_0 \mp e^{-\mu x \pm \lambda \pi + i \frac{\pi}{2}} [1]_0 \right),
\end{equation}
\begin{equation} \label{Est-S20'-mu}
  S'_{20}(x,\mu) =  \frac{ \Gamma(1 + i\lambda)}{2i\lambda C_{10} \sqrt{\pi} 2^{-i\lambda + \half}} \mu^{-i\lambda + \half} \left( e^{\mu x} [1]_0 \mp e^{-\mu x \mp \lambda \pi + i \frac{\pi}{2}} [1]_0 \right),
\end{equation}
for $-\frac{\pi}{2} + \delta \leq \pm Arg(\mu) \leq \frac{\pi}{2}$ with $\delta >0$ and $x \in (0,A)$ fixed.

At last, symmetrically, we can perform the same analysis with $S_{j1}, \ j=1,2$ and find
\begin{equation} \label{Est-S11-mu}
  S_{11}(x,\mu) = C_{11} \frac{ \Gamma(1 - i\lambda)}{\sqrt{\pi} 2^{i\lambda + \half}} \mu^{i\lambda - \half} \left( e^{\mu (A-x)} [1]_1 \pm e^{-\mu (A-x) \pm \lambda \pi + i \frac{\pi}{2}} [1]_1 \right),
\end{equation}
\begin{equation} \label{Est-S21-mu}
  S_{21}(x,\mu) =  -\frac{ \Gamma(1 + i\lambda)}{2i\lambda C_{11} \sqrt{\pi} 2^{-i\lambda + \half}} \mu^{-i\lambda - \half} \left( e^{\mu (A-x)} [1]_1 \pm e^{-\mu (A-x) \mp \lambda \pi + i \frac{\pi}{2}} [1]_1 \right),
\end{equation}
\begin{equation} \label{Est-S11'-mu}
  S'_{11}(x,\mu) = C_{11} \frac{ \Gamma(1 - i\lambda)}{\sqrt{\pi} 2^{i\lambda + \half}} \mu^{i\lambda + \half} \left( -e^{\mu (A-x)} [1]_1 \pm e^{-\mu (A-x) \pm \lambda \pi + i \frac{\pi}{2}} [1]_1 \right),
\end{equation}
\begin{equation} \label{Est-S21'-mu}
  S'_{21}(x,\mu) =  -\frac{ \Gamma(1 + i\lambda)}{2i\lambda C_{11} \sqrt{\pi} 2^{-i\lambda + \half}} \mu^{-i\lambda + \half} \left( - e^{\mu (A-x)} [1]_1 \pm e^{-\mu (A-x) \mp \lambda \pi + i \frac{\pi}{2}} [1]_1 \right),
\end{equation}
for $-\frac{\pi}{2} + \delta \leq \pm Arg(\mu) \leq \frac{\pi}{2}$ with $\delta >0$ and $x \in (0,A)$ fixed and where $[1]_1 = 1 + O\left( \frac{1}{\log|\mu|^{\eu}} \right)$ as $|\mu| \to \infty$.

We can now compute the asymptotics of the characteristic and generalized Weyl-Titchmarsh functions $\Delta_q(\mu^2)$ and $M_q(\mu^2)$ from the previous ones. From (\ref{Char}), (\ref{WT}) and (\ref{Est-S10-mu}) - (\ref{Est-S21'-mu}), we obtain

\begin{prop} \label{Asymp-WT}
The following asymptotics hold for $-\frac{\pi}{2} + \delta \leq \pm Arg(\mu) \leq \frac{\pi}{2}$ with $\delta >0$ as $|\mu| \to \infty$
  $$
	  \Delta_q(\mu^2) = \frac{C_{10}C_{11} \Gamma(1-i\lambda)^2}{\pi 2^{2 i \lambda}} \mu^{2i \lambda} e^{\pm \lambda \pi} 2 \cosh(\mu A \mp \lambda \pi) [1]_2,
	$$
	$$
	  \delta_q(\mu^2) = W(S_{11}, S_{20}) = \frac{C_{11} \Gamma(1-i\lambda) \Gamma(1+i\lambda)}{C_{10} 2 i \lambda \pi} 2 \cosh(\mu A) [1]_2,
	$$
	$$
	  M_q(\mu^2) = -\frac{\delta_q(\mu)}{\Delta_q(\mu)} = - \frac{\Gamma(1+i\lambda) e^{\mp \lambda \pi} 2^{2 i \lambda}}{2 i \lambda C_{10}^2 \Gamma(1-i\lambda)} \mu^{-2i \lambda} \frac{\cosh(\mu A)}{\cosh(\mu A \mp \lambda \pi)} [1]_2,
	$$
	where $[1]_2 = 1 + O\left( \frac{1}{\log|\mu|^{\epsilon}} \right)$ as $|\mu| \to \infty$ and $\epsilon:= min\ (\eo, \eu)$.
\end{prop}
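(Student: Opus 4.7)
My approach is to substitute the asymptotic expansions (\ref{Est-S10-mu})--(\ref{Est-S21'-mu}) directly into the Wronskian definitions $\Delta_q(\mu^2) = W(S_{11},S_{10}) = S_{11}S_{10}' - S_{11}'S_{10}$ and $\delta_q(\mu^2) = W(S_{11},S_{20}) = S_{11}S_{20}' - S_{11}'S_{20}$, and then form $M_q(\mu^2) = -\delta_q(\mu^2)/\Delta_q(\mu^2)$ via (\ref{WT}). Because each Wronskian is independent of $x$, I plan to evaluate at an arbitrary fixed interior point $x_0 \in (0,A)$, which preserves the uniformity of the error factors $[1]_0$ and $[1]_1$ provided by Corollaries \ref{UniEst-Sj0} and \ref{UniEst-Sj1}.

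For $\Delta_q$, the two products $S_{11}S_{10}'$ and $S_{11}'S_{10}$ share the common scalar prefactor $P = \frac{C_{10}C_{11}\Gamma(1-i\lambda)^2}{\pi\,2^{2i\lambda+1}}\mu^{2i\lambda}$. Expanding the bracketed exponentials produces four exponential types: $e^{\mu A}$, $e^{-\mu A \pm 2\lambda\pi + i\pi}$, and the $x$-dependent crossed terms $e^{\pm\mu(A-2x_0)+\lambda\pi+i\pi/2}$. A direct sign check using the coefficients in (\ref{Est-S10-mu}), (\ref{Est-S10'-mu}), (\ref{Est-S11-mu}), (\ref{Est-S11'-mu}) shows that in the difference $S_{11}S_{10}' - S_{11}'S_{10}$ the crossed terms vanish identically at leading order, while the two pure exponentials each pick up coefficient $+2$. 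Using $e^{i\pi}=-1$, the surviving combination becomes $2e^{\mu A}+2e^{-\mu A\pm 2\lambda\pi}=2e^{\pm\lambda\pi}\cdot 2\cosh(\mu A\mp\lambda\pi)$, producing the stated formula for $\Delta_q$. The computation of $\delta_q$ is structurally identical but cleaner: since $S_{20}$ carries the opposite phase $\mp\lambda\pi$ in its decaying exponential compared to $S_{10}$, the factor $e^{\pm 2\lambda\pi}$ collapses to unity and the bracket simplifies to $4\cosh(\mu A)$, yielding the claimed value of $\delta_q(\mu^2)$. The formula for $M_q$ then follows by straightforward division of the two expressions.

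The remaining bookkeeping concerns the $[1]_2$ error estimate. The error factors $[1]_0 = 1+O(\log|\mu|^{-\eo})$ and $[1]_1 = 1+O(\log|\mu|^{-\eu})$ attached to each individual exponential multiply to produce a composite correction of the form $1+O(\log|\mu|^{-\epsilon})$ with $\epsilon=\min(\eo,\eu)$, which is precisely the $[1]_2$ appearing in the statement. The only subtle point is that at subleading order the coefficients of the crossed terms $e^{\pm\mu(A-2x_0)}$ no longer cancel exactly; however at the fixed interior $x_0$ their modulus is smaller than the leading $e^{\pm\mu A}$ by an exponential factor $e^{-c|\Re\mu|}$ with $c=\min(2x_0,\,2(A-x_0))>0$, so in the relevant wedge they are absorbed into the composite error. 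Aside from careful phase and sign arithmetic, no further analytic input is required, and this error control is the main technical point of the proof.
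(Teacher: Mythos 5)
Your proposal is correct and follows exactly the route the paper intends: substitute the large-$\mu$ asymptotics (\ref{Est-S10-mu})--(\ref{Est-S21'-mu}) into the Wronskians (\ref{Char}) and (\ref{WT}), observe the leading-order cancellation of the $x$-dependent crossed exponentials, and collect the surviving terms into the $\cosh$ factors; your sign bookkeeping and the resulting constants match the stated formulas. The paper gives no further detail than this computation, so your write-up (including the treatment of the residual crossed terms in the error factor $[1]_2$) is simply a fleshed-out version of the same argument.
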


We shall need the following Corollary of Proposition \ref{Asymp-WT} in the next Section.

\begin{coro} \label{Bounded-iR}
  The functions $ \mu \mapsto \Delta_q(\mu^2)$ and $\mu \mapsto \delta_q(\mu^2)$ are of exponential type and are bounded on the imaginary axis $i\R$. More precisely, we have
	$$
	  \forall \mu\in \C, \quad |\Delta_q(\mu)|, \ |\delta_q(\mu)| \leq C^{|\Re(\mu)|A},
	$$
	and
	$$
	  \forall y \in \R, \quad |\Delta_q(iy)|, \ |\delta_q(iy)| \leq C.
	$$
\end{coro}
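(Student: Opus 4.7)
My plan is to read off both bounds directly from the asymptotic expansions of Proposition \ref{Asymp-WT}, exploiting the fact that $\mu \mapsto \Delta_q(\mu^2)$ and $\mu \mapsto \delta_q(\mu^2)$ are entire and even in $\mu$, as recorded in property c) of Section \ref{Intro}. No Phragm\'en--Lindel\"of argument will be needed, because the two sectors of validity in Proposition \ref{Asymp-WT} already cover the closed right half-plane; evenness then takes care of the left half-plane.

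For the exponential-type bound, I will fix a small $\delta > 0$ and set $\Sigma_+ = \{\mu \in \C : -\pi/2 + \delta \leq Arg(\mu) \leq \pi/2\}$ and $\Sigma_- = \{\mu \in \C : -\pi/2 \leq Arg(\mu) \leq \pi/2 - \delta\}$, whose union is the closed right half-plane. On each $\Sigma_\pm$ I apply the corresponding $\pm$-branch of the asymptotic formulas from Proposition \ref{Asymp-WT}. Because $\lambda \in \R$, the prefactor $\mu^{2i\lambda}$ has modulus $e^{-2\lambda\, Arg(\mu)}$, which is uniformly bounded on each fixed sector, and $[1]_2$ is bounded as $|\mu|\to\infty$. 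The remaining factors $|\cosh(\mu A)|$ and $|\cosh(\mu A \mp \lambda\pi)|$ are both dominated by $C\, e^{|\Re(\mu)|A}$. This yields $|\Delta_q(\mu^2)|, |\delta_q(\mu^2)| \leq C\, e^{|\Re(\mu)|A}$ for $|\mu|$ larger than some $R$ with $\mu$ in the closed right half-plane. Evenness of $\mu \mapsto \Delta_q(\mu^2)$ and $\mu \mapsto \delta_q(\mu^2)$ then propagates the bound to the left half-plane, while the compact disk $\{|\mu|\leq R\}$ is handled trivially by entirety; adjusting $C$ completes the global estimate.

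For the bound on $i\R$, I will substitute $\mu = iy$ into the asymptotic expansions. Then $|\mu^{2i\lambda}| = e^{\mp\lambda\pi}$ depending on the sign of $y$, still a bounded quantity, while $|\cosh(iyA)| = |\cos(yA)| \leq 1$ and the elementary identity $|\cosh(iyA \mp \lambda\pi)|^2 = \cosh^2(\lambda\pi) - \sin^2(yA) \leq \cosh^2(\lambda\pi)$ controls the remaining factor. Hence $\Delta_q((iy)^2)$ and $\delta_q((iy)^2)$ stay bounded as $|y|\to\infty$, and combining with continuity on any bounded interval of $i\R$ gives the uniform bound $|\Delta_q(iy)|, |\delta_q(iy)| \leq C$ for all $y\in\R$.

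There is no real obstacle: the hard analytical work is absorbed into Proposition \ref{Asymp-WT}, and the corollary is mostly bookkeeping, namely checking that each factor in the asymptotic formulas is controlled by $e^{|\Re(\mu)|A}$ globally and by a pure constant on the imaginary axis, then invoking the $\mu \mapsto -\mu$ symmetry.
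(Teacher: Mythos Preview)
Your proposal is correct and follows essentially the same approach as the paper, which states the result as a direct corollary of Proposition \ref{Asymp-WT} without giving an explicit proof. Your argument simply makes the bookkeeping explicit: covering the right half-plane by the two sectors of validity of the asymptotic formula, bounding each factor, invoking evenness in $\mu$ to extend to the left half-plane, and handling the bounded region by entirety.
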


\begin{rems}
  1) Putting $\rho = -i\mu$ in Proposition \ref{Asymp-WT}, we obtain exactly the asymptotics of \cite{FY} up to a multiplicative constant (encoded by the degree of freedom in the choice of $C_{10}$ and $C_{11}$). The only difference is that our remainder is in $O\left( \frac{1}{\log|\mu|^{\epsilon}} \right)$ as $|\mu| \to \infty$ whereas their remainder was in $0 \left( \frac{1}{|\mu|^{\beta}} \right)$ for a given $\beta > 0$ as $|\mu| \to \infty$.  \\
	
\noindent 2) For later use, we need to know the asymptotics of $\dot{\Delta}_q(\mu^2) : = \frac{d \Delta_q(\mu^2)}{d\mu^2}$ as $\mu \to \infty$ and $\mu \in \R$. Since it has been done in \cite{FY}, eq. (3.15), we only give the result here without proof.
\begin{equation} \label{Delta'}
	 \dot{\Delta}_q(\mu^2) = \frac{C_{10}C_{11} \Gamma(1-i\lambda)^2 A}{\pi 2^{2 i \lambda}} \mu^{2i \lambda} e^{\pm \lambda \pi} 2 \sinh(\mu A \mp \lambda \pi) [1]_2.
\end{equation}	
As a consequence, we have

\begin{prop} \label{Increasing}
  The function $\mu^2 \in \R^+ \mapsto |\Delta_q(\mu^2)|$ is strictly increasing for large enough $\mu >>1$.
\end{prop}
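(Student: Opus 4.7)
The plan is to reduce the monotonicity of $\mu^2 \mapsto |\Delta_q(\mu^2)|$ for large positive $\mu$ to the positivity of the real part of $\dot\Delta_q(\mu^2)\,\overline{\Delta_q(\mu^2)}$. For any differentiable complex-valued function $F(s)$ with $F(s) \neq 0$, one has
$$\frac{d}{ds}|F(s)|^2 = 2 \, \Re\!\left[\, F'(s)\, \overline{F(s)}\, \right],$$
so for $s = \mu^2$ and $F = \Delta_q$,
$$\frac{d}{d(\mu^2)} |\Delta_q(\mu^2)|^2 = 2\, \Re\!\left[\, \dot\Delta_q(\mu^2)\, \overline{\Delta_q(\mu^2)}\,\right].$$
Since Proposition \ref{Asymp-WT} shows that $\Delta_q(\mu^2)$ grows exponentially as $\mu \to +\infty$, it is non-zero for large $\mu$, and hence $|\Delta_q(\mu^2)|$ and $|\Delta_q(\mu^2)|^2$ are monotone together. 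It thus suffices to establish the strict positivity of the right-hand side above for $\mu$ sufficiently large.

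Next I would substitute the asymptotics from Proposition \ref{Asymp-WT} and equation (\ref{Delta'}). Specializing to real $\mu > 0$ (which lies in the overlap of the two sectors of validity) and setting
$$K := \frac{2\, C_{10}\, C_{11}\, \Gamma(1-i\lambda)^2\, e^{\lambda\pi}}{\pi \, 2^{2i\lambda}} \neq 0,$$
one has
$$\Delta_q(\mu^2) = K\, \mu^{2i\lambda}\, \cosh(\mu A - \lambda\pi)\, [1]_2, \qquad \dot\Delta_q(\mu^2) = K A\, \mu^{2i\lambda}\, \sinh(\mu A - \lambda\pi)\, [1]_2.$$
Since $\lambda \in \R$, the factor $\mu^{2i\lambda}$ has modulus one, so $\mu^{2i\lambda}\,\overline{\mu^{2i\lambda}} = 1$. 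Multiplying the two expressions and using the identity $\cosh(x)\sinh(x) = \tfrac{1}{2}\sinh(2x)$ then yields
$$\dot\Delta_q(\mu^2)\, \overline{\Delta_q(\mu^2)} \;=\; \frac{|K|^2\, A}{2}\, \sinh\!\bigl(2\mu A - 2\lambda\pi\bigr)\, \bigl(1 + o(1)\bigr), \qquad \mu \to +\infty.$$

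For $\mu$ large and positive, $\sinh(2\mu A - 2\lambda\pi)$ is real, positive, and grows exponentially fast, while $|K|^2 A$ is a strictly positive constant. Taking real parts, I conclude that $\Re[\dot\Delta_q(\mu^2)\,\overline{\Delta_q(\mu^2)}]$ is strictly positive once $\mu$ is large enough, which proves the claim. The only subtlety to watch is that the error term $[1]_2 = 1 + O(1/\log|\mu|^\epsilon)$ decays only logarithmically, but this is harmless: the leading term $\tfrac{|K|^2 A}{2}\sinh(2\mu A - 2\lambda\pi)$ grows like $\tfrac{|K|^2 A}{4} e^{2\mu A}$, which dwarfs any multiplicative perturbation tending to $1$. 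I expect no serious obstacle here; the argument is essentially an asymptotic sign analysis powered by Proposition \ref{Asymp-WT} and formula (\ref{Delta'}).
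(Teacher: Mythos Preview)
Your proof is correct and follows essentially the same approach as the paper: differentiate $|\Delta_q(\mu^2)|^2$ via the identity $\frac{d}{d\mu^2}|\Delta_q|^2 = 2\Re(\overline{\Delta_q}\,\dot\Delta_q)$, insert the asymptotics from Proposition~\ref{Asymp-WT} and (\ref{Delta'}), and observe that the leading term is a positive constant times $\sinh(2\mu A - 2\lambda\pi)$, which is positive for large real $\mu$. Your write-up is in fact more detailed than the paper's (you spell out why $|\mu^{2i\lambda}|=1$ and why the logarithmic error term is harmless), but the argument is the same.
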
	
\begin{proof}
  Writing $\frac{d |\Delta_q(\mu^2)|^2}{d\mu^2} = 2 Re \left( \overline{\Delta_q(\mu^2)} \dot{\Delta}_q(\mu^2) \right)$, we get from Proposition \ref{Asymp-WT}  and (\ref{Delta'})
	$$
	  \frac{d |\Delta_q(\mu^2)|^2}{d\mu^2} = K \sinh(2 \mu A \mp 2 \lambda \pi) [1]_2,
	$$
	for a positive constant $K$. Hence $\frac{d |\Delta_q(\mu^2)|^2}{d\mu^2} > 0$ for large enough (real) $\mu >>1$ which proves the assertion.
\end{proof}

\noindent 3) As shown in \cite{FY}, the asymptotics of $\Delta_q(\mu^2)$ in Proposition \ref{Asymp-WT} as well as a standard Rouch\'e argument allow us to determine precise asymptotics for the zeros $(\alpha_n)_{n \in \Z}$ of $\Delta_q(\mu^2)$. Precisely, we obtain
	\begin{equation} \label{ReggePoles-1}
	  \alpha_n = \frac{\lambda \pi}{A} + i\frac{(n+\half + p) \pi}{A} + O \left( \frac{1}{(\log n)^{\epsilon}} \right), \quad Arg(\alpha_n) \in [0, \pi], \quad n \in \N, \ n \to\infty,
	\end{equation}
	and by symmetry
	\begin{equation} \label{ReggePoles-2}
	  \alpha_{-n} = -\frac{\lambda \pi}{A} - i\frac{(n+\half + p) \pi}{A} + O\left( \frac{1}{(\log n)^{\epsilon}} \right), \quad Arg(\alpha_n) \in [-\pi, 0], \quad n \in \N, \ n \to\infty,
	\end{equation}
	for a certain $p \in \Z$. Similarly, we obtain for the zeros $(\beta_n)_{n \in \Z}$ of $\delta_q(\mu^2)$
	\begin{equation} \label{ReggePoles-3}
	  \beta_n = i\frac{(n+\half + p) \pi}{A} + O \left( \frac{1}{(\log n)^{\epsilon}} \right), \quad Arg(\beta_n) \in [0, \pi], \quad n \in \N, \ n \to \infty,
	\end{equation}
	\begin{equation} \label{ReggePoles-4}
	  \beta_{-n} = -i\frac{(n+\half + p) \pi}{A} + O \left( \frac{1}{(\log n)^{\epsilon}} \right), \quad Arg(\beta_n) \in [-\pi, 0], \quad n \in \N, \ |n| \to\infty,
	\end{equation}
	for a certain $p \in \Z$. \\
	
	\noindent 4) Finally , fix $\delta > 0$ and denote $G_\delta = \{ \mu \in \C, \ |\mu - \alpha_n| \geq 0, n \in \Z \}$. Then it proved in \cite{FY}, eq. (3.16) the following estimate from below
	\begin{equation} \label{Delta-Minoration}
	  |\Delta_q(\mu^2)| \geq C \  e^{|\Re(\mu)| A}, \quad \forall \mu \in G_\delta.
	\end{equation}
\end{rems}

\subsection{Scattering operator Vs. Characteristic and Weyl-Titchmarsh operators}

In this Section, we prove Proposition \ref{WT-Reflection} and give a few consequences.

First, observe that the scattering operator defined in Theorem \ref{Stat-Sol} leaves invariant the span of each generalized spherical harmonic $Y_{n}$. Hence, it suffices to calculate the scattering operator on each vector space generated by the $Y_{n}$'s.

To do this, we recall from Theorem \ref{Stat-Sol} that, given any solution $f = u_n(x) Y_n(y) \in \B^*$ of (\ref{ShiftedWaveEq}), there exists a unique $\psi_n^{(\pm)} = (\psi_{0n}^{(\pm)}, \psi_{1n}^{(\pm)}) \in \C^2$ such that
\begin{eqnarray} \label{IKL-Asymp}
  	u_n(x) \simeq & \omega_-(\lambda) \left( \chi_0(x) \,x^{\half + i\lambda} \psi_{0n}^{(-)} + \chi_1(x) \,(A-x)^{\half + i\lambda} \psi_{1n}^{(-)} \right) \\
		         & - \omega_+(\lambda) \left( \chi_0(x) \, x^{\half - i\lambda} \psi_{0n}^{(+)} + \chi_1(x) \, (A-x)^{\half - i\lambda} \psi_{1n}^{(+)} \right), \nonumber
\end{eqnarray}
where $\omega_\pm(\lambda)$ is given by (\ref{Omega}) and the cutoffs $\chi_{0}$ and $\chi_1$ are defined in (\ref{Cutoff}). We apply this result to the FSS $\{S_{jm}, \ j=1,2,\ m=0,1\}$ obtained in the previous Section.

From now on, we fix a given $n \in \N$ and omit the indices $n$ in the next calculations. We compute first the $\psi_j^{(\pm)}, \ j=0,1$ corresponding to the solutions $S_{10}$. From (\ref{Asympx-S10}), we know that
$$
  S_{10}(x,\mu) \sim_{x \to 0} \chi_0(x) C_{10} x^{\half - i\lambda}.
$$
Comparing with (\ref{IKL-Asymp}), we obtain for $S_{10}$ the following associated vector
\begin{equation} \label{0pm}
  \psi_0^{(-)} = 0, \quad \psi_0^{(+)} = -\frac{C_{10}}{\omega_+(\lambda)}.
\end{equation}

Let us write now $S_{10}$ as a linear combination of $S_{11}, S_{21}$, \textit{i.e.}
$$
  S_{10} = a_1(\mu) S_{11} + b_1(\mu) S_{21}.
$$
Since $W(S_{11},S_{21}) = 1$, we get immediately
\begin{equation} \label{a1b1}
  a_1(\mu) = W(S_{10}, S_{21}), \quad b_1(\mu) = W(S_{11}, S_{10}).
\end{equation}
But recall that (similarly to (\ref{Asympx-S10}))
$$
  S_{11}(x,\mu) \sim_{x \to A} \chi_1(x) C_{11} (A-x)^{\half - i\lambda}, \quad S_{21}(x,\mu) \sim_{x \to A} \chi_1(x) \frac{-1}{2i\lambda C_{11}} (A-x)^{\half + i\lambda}.
$$
Hence, comparing again with (\ref{IKL-Asymp}), we obtain for $S_{10}$ the following associated vector
\begin{equation} \label{1pm}
  \psi_1^{(-)} = \frac{-1}{2i\lambda C_{11} \omega_-(\lambda)} b_1(\mu) , \quad \psi_1^{(+)} = -\frac{C_{11}}{\omega_+(\lambda)} a_1(\mu).
\end{equation}
Summarizing, we find that the solution $S_{10}$ satisfies the decomposition (\ref{IKL-Asymp}) and is associated to the following vectors
\begin{equation} \label{S10-a1B1}
  \psi^{(-)} = \left( \begin{array}{c} 0 \\ \frac{-1}{2i\lambda C_{11} \omega_-(\lambda)} b_1(\mu) \end{array} \right), \quad \psi^{(+)} = \left( \begin{array}{c} -\frac{C_{10}}{\omega_+(\lambda)} \\ -\frac{C_{11}}{\omega_+(\lambda)} a_1(\mu) \end{array} \right).
\end{equation}

Using a similar procedure, we find that the solution $S_{11}$ is associated to the following vectors in the decomposition (\ref{IKL-Asymp})
\begin{equation} \label{S11-a0B0}
  \phi^{(-)} = \left( \begin{array}{c} \frac{1}{2i\lambda C_{10} \omega_-(\lambda)} b_0(\mu) \\ 0 \end{array} \right), \quad \phi^{(+)} = \left( \begin{array}{c} -\frac{C_{10}}{\omega_+(\lambda)} a_0(\mu) \\ -\frac{C_{11}}{\omega_+(\lambda)} \end{array} \right),
\end{equation}
where
\begin{equation} \label{a0b0}
  a_0(\mu) = W(S_{11}, S_{20}), \quad b_0(\mu) = W(S_{10}, S_{11}).
\end{equation}

Recall now that for any $\psi_n^{(-)} \in \C_2$, there exists a unique $\psi_n^{(+)} \in \C_2$ and $u_n(x) Y_n \in \B^*$ satisfying (\ref{ShiftedWaveEq}) for which the expansion (\ref{IKL-Asymp}) above holds. This defines the scattering operator $S(\lambda, \mu_n)$ as the $2 \times 2$ matrix such that for all $\psi_n^{(-)} \in \C^2$
\begin{equation} \label{Sn}
  \psi_n^{(+)} = S(\lambda, \mu_n) \psi_n^{(-)}.
\end{equation}
Using the notation
$$
  S(\lambda,\mu_n) = \left[ \begin{array}{cc} L(\lambda,\mu_n) & T_L(\lambda,\mu_n) \\ T_R(\lambda,\mu_n) & R(\lambda,\mu_n) \end{array} \right],
$$
and using the definition (\ref{Sn}) of the partial scattering matrix together with (\ref{S10-a1B1}) - (\ref{S11-a0B0}), we find
\begin{equation} \label{TnLn}
  S(\lambda,\mu_n) = \left[ \begin{array}{cc} -\frac{2i\lambda C_{10}^2 \omega_-(\lambda)}{\omega_+(\lambda)} \frac{a_0(\mu_n)}{b_0(\mu_n)} & \frac{2i\lambda C_{10} C_{11} \omega_-(\lambda)}{\omega_+(\lambda)} \frac{1}{b_1(\mu_n)} \\ -\frac{2i\lambda C_{10} C_{11} \omega_-(\lambda)}{\omega_+(\lambda)} \frac{1}{b_0(\mu_n)} &  \frac{2i\lambda C_{11}^2 \omega_-(\lambda)}{\omega_+(\lambda)} \frac{a_1(\mu_n)}{b_1(\mu_n)} \end{array} \right].
\end{equation}
In this expression of the partial scattering matrix, we recognize the usual transmission coefficients $T_{L}(\lambda,\mu_n), \, T_{R}(\lambda,\mu_n)$ and the reflection coefficients $L(\lambda,\mu_n)$ (from the left) and $R(\lambda,\mu_n)$ (from the right). Since they are written in terms of Wronskians of the $S_{jm}, \, j=1,2, \ m=0,1$, we can make the link between the characteristic function (\ref{Char}) and generalized Weyl-Titchmarsh function (\ref{WT}) as follows. Remarking that
\begin{equation} \label{Link}
  \Delta_q(\mu^2) = b_1(\lambda,\mu) = - b_0(\lambda,\mu), \quad M_q(\mu^2) = \frac{a_0(\mu)}{b_0(\mu)},
\end{equation}
we eventually get
\begin{eqnarray}
  L(\lambda,\mu_n) = -\frac{2i\lambda C_{10}^2 \omega_-(\lambda)}{\omega_+(\lambda)} M_q(\mu_n^2), \label{L-WT} \\
	T(\lambda,\mu_n) = T_L(\lambda,\mu_n) = T_R(\lambda,\mu_n) = \frac{2i\lambda C_{10} C_{11} \omega_-(\lambda)}{\omega_+(\lambda)} \frac{1}{\Delta_q(\mu_n^2)} \label{T-Char}.
\end{eqnarray}
Hence, we have proved that the formulae stated in Theorem \ref{WT-Reflection} are indeed valid for each harmonic $Y_{n}$. The full theorem is thus proved summing up onto the whole harmonics.

Let us finish this Section with a few remarks concerning the scattering operator $S(\lambda)$. The scattering operator $S(\lambda)$ is unitary on $\H_\infty$ or equivalently, the matrices $S(\lambda,\mu_n)$ are unitary for all $n \in \N$. In fact, we can prove that for all $\mu \in \R$, we have
\begin{eqnarray}
  |T(\lambda,\mu)|^2 + |L(\lambda,\mu)|^2 = 1, \\
	|T(\lambda,\mu)|^2 + |R(\lambda,\mu)|^2 = 1, \\
	L(\lambda,\mu) \bar{T}(\lambda,\mu) + T(\lambda,\mu) \bar{R}(\lambda,\mu) = 0. \label{unitary}
\end{eqnarray}

So, using (\ref{L-WT}), (\ref{T-Char}) and (\ref{unitary}), we obtain:
\begin{equation}\label{expression1R}
R(\lambda, \mu_n) =  \frac{2i\lambda   \mid C_{10}\mid^2 C_{11} \omega_-(\lambda)}{\overline{C_{11}} \omega_+(\lambda)} \
\frac{\overline{\Delta_q}(\mu_n^2)}{\Delta_q (\mu_n^2)} \ \overline{M_q( \mu_n^2)}.
\end{equation}
On the other hand, by (\ref{TnLn}), we have
\begin{equation}\label{expression2R}
R(\lambda, \mu_n) =  \frac{2i\lambda   C_{11}^2 \omega_-(\lambda)}{\omega_+(\lambda)} \
\frac{a_1(\lambda,\mu_n)}{\Delta_q(\mu_n^2)}.
\end{equation}
We deduce from (\ref{expression1R}) and(\ref{expression2R}) the following expression for the reflexion coefficient $R(\lambda, \mu_n)$:
\begin{equation} \label{expressionR}
R(\lambda, \mu_n) =  \frac{2i\lambda   C_{11}^2 \mid C_{10}\mid^2 \omega_-(\lambda)}{\mid C_{11}\mid^2 \omega_+(\lambda)} \ M_q(\mu_n^2).
\end{equation}

\noindent
Using (\ref{L-WT}) - (\ref{T-Char}),  and the fact that $|\omega_-(\lambda)|^2 = |\omega_+(\lambda)|^2$ , we thus obtain for all $\mu \in \R$,
\begin{equation} \label{UnitaryRelation}
  4\lambda^2 |C_{10}|^2 \left[ |M_q(\mu^2)|^2 |C_{10}|^2 + \frac{1}{|\Delta_q(\mu^2)|^2}|C_{11}|^2 \right] = 1,
\end{equation}
or equivalently
\begin{equation} \label{Unitary-Delta}
  \frac{1}{4\lambda^2 |C_{10}|^2 |C_{11}|^2 } |\Delta_q(\mu^2)|^2 - \frac{|C_{10}|^2}{|C_{11}|^2} |\delta_q(\mu^2)|^2 = 1.
\end{equation}
Recalling that $\mu \mapsto \Delta_q(\mu^2), \delta_q(\mu^2)$ are entire, we deduce from (\ref{Unitary-Delta}) that
\begin{equation} \label{Unitary-DeltaExt}
  \frac{1}{4\lambda^2 |C_{10}|^2 |C_{11}|^2 } \Delta_q(\mu^2) \overline{\Delta_q(\bar{\mu}^2)} - \frac{|C_{10}|^2}{|C_{11}|^2} \delta_q(\mu^2) \overline{\delta_q(\bar{\mu}^2)} = 1, \quad \forall \mu \in \C.
\end{equation}

We shall use the following result in the next Section.

\begin{coro} \label{Increasing-Char-WT}
  For $\lambda \not=0$ and for all $\mu \in \R$ we have \\
	1) $|\Delta_q(\mu^2)| \geq 2 |\lambda| |C_{10}| |C_{11}|$, \\
	2) $|\Delta_q(\mu^2)|$ is strictly increasing to $+\infty$ for large enough $\mu$ as $\mu \to \infty, \ \mu \in \R$. \\
	3) $|M_q(\mu^2)| \leq \frac{1}{2|\lambda| |C_{10}|^2}$, \\
	4) $|M_q(\mu^2)|$ is strictly increasing to $\frac{1}{2|\lambda| |C_{10}|^2}$ for large enough $\mu$ as $\mu \to \infty, \ \mu \in \R$.
\end{coro}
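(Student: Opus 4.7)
The plan is to read off all four statements directly from the unitarity relation (\ref{UnitaryRelation}), combined with the monotonicity already established in Proposition \ref{Increasing} and the asymptotics of Proposition \ref{Asymp-WT}. No new analytic machinery is needed; the corollary is essentially a rearrangement.

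First, for statements (1) and (3), I rewrite (\ref{UnitaryRelation}) as
$$
4\lambda^2 |C_{10}|^4 |M_q(\mu^2)|^2 + \frac{4\lambda^2 |C_{10}|^2 |C_{11}|^2}{|\Delta_q(\mu^2)|^2} = 1, \quad \mu \in \R.
$$
Each term on the left is nonnegative, so each is bounded above by $1$. The bound on the first term gives immediately $|M_q(\mu^2)| \leq \frac{1}{2|\lambda| |C_{10}|^2}$, which is (3). The bound on the second term gives $|\Delta_q(\mu^2)|^2 \geq 4\lambda^2 |C_{10}|^2 |C_{11}|^2$, which is (1).

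For (2), Proposition \ref{Increasing} already asserts that $\mu^2 \mapsto |\Delta_q(\mu^2)|$ is strictly increasing for $\mu \in \R$ sufficiently large, and the leading asymptotics of Proposition \ref{Asymp-WT} show $|\Delta_q(\mu^2)| \sim K |\mu|^{2i\lambda - \textrm{exp. growth}}\cosh(\mu A)$, which diverges to $+\infty$ as $\mu \to +\infty$ along the real axis. Combining these gives (2).

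Finally for (4), solving the unitarity identity for $|M_q(\mu^2)|^2$ yields
$$
|M_q(\mu^2)|^2 = \frac{1}{4\lambda^2 |C_{10}|^4}\left(1 - \frac{4\lambda^2 |C_{10}|^2 |C_{11}|^2}{|\Delta_q(\mu^2)|^2}\right).
$$
The right-hand side is a strictly increasing function of $|\Delta_q(\mu^2)|$, so the monotonicity of $|\Delta_q(\mu^2)|$ established in (2) transfers directly to $|M_q(\mu^2)|$; moreover the right-hand side tends to $\frac{1}{4\lambda^2|C_{10}|^4}$ as $|\Delta_q(\mu^2)| \to \infty$, which gives the asserted limit. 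There is no real obstacle here: the only nontrivial ingredient is Proposition \ref{Increasing}, which has already been proved from the asymptotics of $\Delta_q$ and $\dot{\Delta}_q$.
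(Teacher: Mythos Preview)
Your argument is correct and follows exactly the route the paper takes: the bounds in (1) and (3) are read off from the unitarity relation (\ref{UnitaryRelation}), the monotonicity in (2) comes from Proposition \ref{Increasing} together with the asymptotics of Proposition \ref{Asymp-WT}, and (4) then follows by solving (\ref{UnitaryRelation}) for $|M_q(\mu^2)|^2$ and using (2). The only cosmetic issue is the garbled exponent in your display for the asymptotics of $|\Delta_q(\mu^2)|$; since $|\mu^{2i\lambda}|=1$ for real $\mu$, the divergence comes cleanly from the $\cosh(\mu A \mp \lambda\pi)$ factor.
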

\begin{proof}
The points 1) and 2) follow immediately from (\ref{UnitaryRelation}) and from Propositions \ref{Asymp-WT} and \ref{Increasing}. The points 3) and 4) are then consequence of 1), 2) and (\ref{UnitaryRelation}).
\end{proof}

%%%%%%%%%%%%%%%%%%%%%%%%%%%%%%%%%%%%%%%%%%%%%% THE INVERSE PROBLEM %%%%%%%%%%%%%%%%%%%%%%%%%%%%%%%%%%%%%%%%%%%%%%%%%%%%%%%%%%%5

\Section{The inverse problem at fixed energy} \label{IP}

In this Section , we prove Theorem \ref{Main}, that is we consider two asymptotically hyperbolic Liouville surfaces $(\S,g)$ and $(\tilde{\S}, \tilde{g})$ described in Definition \ref{AHLS}. Recall that we shall add a $\ \tilde{} \ $ to any quantities related to $(\tilde{\S}, \tilde{g})$. Our main assumptions are $B = \tilde{B}$ and given $\lambda \not=0$ be a fixed energy of the stationary wave equation (\ref{ShiftedWaveEq}), assume that the generalized Weyl-Titchmarsh operators coincide
$$
  M(\lambda) = \tilde{M}(\lambda),
$$
as operators from $\H_B$ to $\H_B$.

\subsection{The inverse problem for the angular equation}

By construction and hypothesis, we know that the generalized Weyl-Titchmarsh operators $M(\lambda)$ and $\tilde{M}(\lambda)$ both act on $\H_B$\footnote{Note that this is here where we use our assumption $B = \tilde{B}$ in order to compare objects acting on the same Hilbert space.} and are diagonalizable on the Hilbert bases $\{Y_{n\lambda}\}$ and $\{\tilde{Y}_{n\lambda}\}$ of $\H_B$ associated to the eigenvalues $M_q(\mu_{n\lambda}^2)$ and $M_{\tilde{q}}(\tilde{\mu}_{n\lambda}^2)$ respectively. Recall that $\mu_{n\lambda}^2$ are the eigenvalues of the angular separated second order ODE (\ref{AngularODE}). Thus, the multiplicity of $\mu_{n\lambda}^2$ is at most two. In what follows, we work at a fixed non-zero energy $\lambda$ and we drop the subscript $_\lambda$ from the notation from now on.

In this Section, we prove the following result:

\begin{prop} \label{AngularInvPb}
Assume that for a non-zero fixed energy $\lambda$, either $\Delta(\lambda) = \tilde{\Delta}(\lambda)$ or  $M(\lambda) = \tilde{M}(\lambda)$.
%Then there exists $N_1 \in \N$ such that for all $n \geq N_1$,
%$$
%\tilde{Y}_n = c_n Y_n, \quad |c_n| = 1,
%$$
%where the $Y_n$ and $\tilde{Y_n}$'s satisfy the angular equations (\ref{AngularODE})
%$$
%  -Y_n'' + (\lambda^2 +\frac{1}{4})\ b(y) Y_n = -\mu_n^2 Y_n, \quad -\tilde{Y}_n'' + (\lambda^2 +\frac{1}{4})\ \tilde{b}(y) \tilde{Y}_n = -\mu_n^2 \tilde{Y}_n.
%$$
Then there exists a constant $C$ such that
\begin{equation}\label{Inv-b}
  b(y) = \tilde{b}(y) + C, \quad \forall y \in (0,B),
\end{equation}
and
\begin{equation} \label{Inv-mu}
  \mu_n^2 = \tilde{\mu}_n^2 +  C (\lambda^2 +\frac{1}{4}), \quad \forall n  \in \N.
\end{equation}
\end{prop}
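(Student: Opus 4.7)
The plan is to exploit that both $M(\lambda)$ and $\tilde M(\lambda)$ are diagonalizable operators on the \emph{same} Hilbert space $\H_B$ (using $B = \tilde B$) in order to force an identification of eigenspaces at large angular index, and then to read off the conclusion directly from the angular Sturm--Liouville equation. The key analytic input will be the monotonicity furnished by Corollary \ref{Increasing-Char-WT}.

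First I would match eigenvalues and eigenspaces. By Corollary \ref{Increasing-Char-WT}(4), the map $\mu^2 \mapsto |M_q(\mu^2)|$ is strictly increasing for $\mu^2$ large, so $M_q(\cdot)$ is injective on $(\mu_N^2,+\infty)$ for some threshold $N$, and similarly on the $\tilde{\phantom{a}}$ side. Consequently, for $n \ge N$ the multiplicity of the eigenvalue $M_q(\mu_n^2)$ of $M(\lambda)$ equals the multiplicity of $\mu_n^2$ as an eigenvalue of $A(\lambda)$ (at most $2$ by Proposition \ref{Angular-Op}). The hypothesis $M(\lambda) = \tilde M(\lambda)$ then forces: for every $n \ge N$ there exists an index $\sigma(n)$ such that either $Y_n = c_n\, \tilde Y_{\sigma(n)}$ with $M_q(\mu_n^2) = M_{\tilde q}(\tilde\mu_{\sigma(n)}^2)$ (simple case), or $\mathrm{span}\{Y_n, Y_{n+1}\} = \mathrm{span}\{\tilde Y_{\sigma(n)}, \tilde Y_{\sigma(n)+1}\}$ with a common double eigenvalue (degenerate case).

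Next I would extract $b$. In both scenarios, $Y_n$ lies in an eigenspace of $\tilde A(\lambda)$ for some eigenvalue $\tilde\nu_n^2$, so it satisfies simultaneously
\begin{equation*}
-Y_n'' + (\lambda^2 + \tfrac{1}{4})\, b(y)\, Y_n = \mu_n^2 Y_n \quad \text{and}\quad -Y_n'' + (\lambda^2 + \tfrac{1}{4})\, \tilde b(y)\, Y_n = \tilde \nu_n^2 Y_n.
\end{equation*}
Subtracting yields $\bigl[(\lambda^2 + \tfrac{1}{4})(b - \tilde b) - (\mu_n^2 - \tilde \nu_n^2)\bigr] Y_n = 0$ on $(0,B)$. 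Since $Y_n$, as a nontrivial solution of a second-order linear ODE, has only isolated zeros, this identity forces $(\lambda^2 + \tfrac{1}{4})(b(y) - \tilde b(y)) = \mu_n^2 - \tilde \nu_n^2$ on a dense subset, hence on all of $(0,B)$ by continuity of $b$ and $\tilde b$. Setting $C := (\mu_n^2 - \tilde\nu_n^2)/(\lambda^2 + \tfrac{1}{4})$ produces (\ref{Inv-b}).

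Finally, once $b = \tilde b + C$ is established, the angular operators are related by $A(\lambda) = \tilde A(\lambda) + (\lambda^2 + \tfrac{1}{4})C$, so sorting their spectra in nondecreasing order as in Proposition \ref{Angular-Op} yields $\mu_n^2 = \tilde\mu_n^2 + C(\lambda^2 + \tfrac{1}{4})$ for \emph{every} $n \in \N$, which is (\ref{Inv-mu}); note that this automatically forces the preliminary permutation $\sigma$ to be the identity for large $n$. The argument under the alternative hypothesis $\Delta(\lambda) = \tilde\Delta(\lambda)$ is entirely analogous, using Corollary \ref{Increasing-Char-WT}(2) in place of (4) to secure injectivity of $\Delta_q$. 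The delicate step will be Step 1, where one must carefully read the spectral multiplicities out of the monotonicity and handle the degenerate case cleanly; after that, Steps 2 and 3 are essentially automatic.
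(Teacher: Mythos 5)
Your proposal is correct and follows essentially the same route as the paper: identify eigenspaces of $M(\lambda)$ and $\tilde M(\lambda)$ for large $n$ using the strict monotonicity of $|M_q(\mu^2)|$ from Corollary \ref{Increasing-Char-WT}, subtract the two angular ODEs satisfied by a common eigenfunction $Y_n$ (whose zeros are isolated) to get $b-\tilde b$ constant, and then recover (\ref{Inv-mu}) from the shifted operator identity $A(\lambda)=\tilde A(\lambda)+(\lambda^2+\tfrac14)C$ and the ordering of the spectra. The only cosmetic difference is that the paper invokes finiteness of the zero set of $Y_n$ (citing Zettl) where you use that the zeros are isolated; both suffice.
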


\begin{proof}
We give the proof for $M(\lambda)$ since the proof for $\Delta(\lambda)$ is similar. By construction and by our main hypothesis, we know that
$$
 \{ M_q(\mu_n^2) \}_{n \in \N} = \{ M_{\tilde{q}}(\tilde{\mu}_n^2) \}_{n \in \N}.
$$
For a fixed $n \in \N$, let us denote $E(\mu_{n}^2) = Span \ \{Y_m, \ M_q(\mu_m^2) = M_q(\mu_n^2) \}$ the eigenspace associated to $M_q(\mu_n^2)$. We also obtain from our main hypothesis that
$$
  \{ E(\mu_{n}^2) \}_{n \in \N} = \{ \tilde{E}(\tilde{\mu}_{n}^2) \}_{n \in \N}.
$$	

Since $\mu \rightarrow |M_q (\mu^2)|$ is {\it{strictly}} increasing to infinity for large enough real $\mu$ and since $\mu_{n}^2 \to \infty$, there exists $N \geq 0$ such that for
$n \geq N$, $dim\ (E(\mu_n^2)) =1 \ {\rm{or}} \  2$. For some $n \geq N$ fixed, let us assume for instance that $dim\ (E(\mu_n^2)) =1$, (the proof in the case where $dim\ (E(\mu_n^2)) =2$ is quite similar using an elementary linearity argument). 

Thus, there exists $p \in \N$ such that $E(\mu_{n}^2) = \tilde{E}(\tilde{\mu}_{p}^2)$. From the equations satisfied by the $Y_n$ and $\tilde{Y_n}$'s and setting $\tilde{Y_p} = c Y_n$ with $|c|=1$, we get easily
$$
	  \left( (\lambda^2+ \frac{1}{4}) \ b(y) - \mu_n^2 \right) Y_n(y) = \left( (\lambda^2+ \frac{1}{4}) \ \tilde{b}(y) - \tilde{\mu}_p^2 \right) Y_n(y), \quad \forall y \in (0,B).
$$
But it is known that the solutions $Y_n$ of the angular separated equation (\ref{AngularODE}) has only a finite number of zeros in (0,\,B) (see \cite{Ze}, Theorem 4.3.1). Hence away from this finite set of zeros $\{y_1, \dots, y_k\}$, we obtain
	\begin{equation} \label{pp5}
	  (\lambda^2+ \frac{1}{4})\  b(y) - \mu_n^2 = (\lambda^2+ \frac{1}{4}) \ \tilde{b}(y) - \tilde{\mu}_p^2, \quad \forall y \in (0,B) \setminus \{y_1, \dots, y_k\}.
	\end{equation}
	By continuity of $b, \tilde{b}$, we finally get from (\ref{pp5})
	\begin{equation} \label{pp6}
	  b(y) - \tilde{b}(y) = \frac{\mu_n^2 - \tilde{\mu}_p^2}{\lambda^2+ \frac{1}{4}}, \quad \forall y \in (0,B).
	\end{equation}
Since the left-hand-side only depends on the angular variable $y$ and the right-hand-side is a constant, we thus conclude that there exists a constant $C \in \R$ such that (\ref{Inv-b}) holds.

Recall that the angular separated ODE (\ref{AngularODE}) only depends on $B$ and the function $b$. Thus the eigenvalues $\mu_n^2$ of the equation also only depend on $B$ and the function $b$. We deduce then from (\ref{Inv-b}), (\ref{pp6}) and the ordering of the $\mu_{n}^2$ that
$$
  \mu_n^2 = \tilde{\mu}_n^2 +  C (\lambda^2+ \frac{1}{4}), \quad \forall n \in \N.
$$
\end{proof}

\begin{rem} \label{RE}
  In Proposition \ref{AngularInvPb}, we first proved that the angular separated ODEs for the two asymptotically hyperbolic Liouville surfaces $(\S,g)$ and $(\tilde{\S}, \tilde{g})$ coincide (up to a constant) from our main hypothesis. We deduced from this that the eigenvalues $\mu_n^2$  and $\tilde{\mu}_n^2$ of these angular ODEs coincide (up to a constant). Note that this does not imply that the corresponding eigenfunctions coincide since the eigenvalues can have multiplicity $2$. But, since the definition of the generalized WT operator is independent of the choice of the eigenfunctions $Y_{n}$, we can assume in the following that for all $n \in \N$, $Y_n = \tilde{Y}_n$ and thus, $M_q(\mu_n^2) = M_{\tilde{q}}(\tilde{\mu}_n^2)$.
\end{rem}

\subsection{Further reduction of the problem}

Recall that the generalized Weyl-Titchmarsh function $M_q(\mu^2)$ is entire and \emph{even} in $\mu$. Thus from Proposition \ref{AngularInvPb} and Remark \ref{RE}, we deduce from our main assumptions that there exists a constant $C$ such that
$$
  \forall n \in \N, \quad M_q(\mu_n^2) = M_{\tilde{q}}(\tilde{\mu}_n^2) =  M_{\tilde{q}}(\mu_n^2 - C (\lambda^2 + \frac{1}{4})),
$$
and also
$$
  \forall n \in \N, \quad \Delta_q(\mu_n^2) = \Delta_{\tilde{q}}(\tilde{\mu}_n^2) =  \Delta_{\tilde{q}}(\mu_n^2 - C (\lambda^2 + \frac{1}{4})),
$$
We first slightly modify the above equalities using the following Lemma.

\begin{lemma} \label{Sym}
  $$
	  \forall L \in \C, \ \forall \mu \in \C, \quad \Delta_q(\mu^2) = \Delta_{q + L}(\mu^2 - L), \quad M_q(\mu^2) = M_{q + L}(\mu^2 - L).
	$$
\end{lemma}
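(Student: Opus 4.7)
The plan is to observe that the lemma is really an invariance statement for the radial ODE itself, and to exploit the fact that the defining asymptotic normalizations for the FSS $\{S_{j0},S_{j1}\}$ involve only $\lambda$, not $\mu$ or $q$.

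First, I would note the elementary algebraic identity
\begin{equation*}
-u''(x)+q(x)u(x)=-\mu^{2}u(x)
\quad\Longleftrightarrow\quad
-u''(x)+(q(x)+L)u(x)=-(\mu^{2}-L)u(x).
\end{equation*}
Thus, for every $L\in\C$, the space of solutions of the radial ODE \eqref{R-ODE} with potential $q$ at spectral parameter $\mu^{2}$ coincides with the space of solutions of the radial ODE with potential $q+L$ at spectral parameter $\mu^{2}-L$. Next, I would verify that $q+L$ still falls in the framework of Section \ref{SeparatedRadialODE}: the singular $x^{-2}$ and $(A-x)^{-2}$ leading parts are unchanged (since $L$ is a constant), and the perturbations $q_{0}(x)+L$, $q_{1}(x)+L$ clearly still satisfy $x(q_{0}+L)\in L^{1}(0,A/2)$ and $(A-x)(q_{1}+L)\in L^{1}(A/2,A)$ because $xL$ and $(A-x)L$ are bounded on their respective intervals. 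Hence the FSS for the equation with potential $q+L$ is well defined and enjoys properties a)--c) stated in Section \ref{Intro}.

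The central point is then the following uniqueness observation. The functions $S_{10},S_{20}$ are characterized among solutions of \eqref{R-ODE} by the asymptotics \eqref{Jost0} at $x=0$, namely
\begin{equation*}
S_{10}(x)\sim C_{10}\,x^{\frac{1}{2}-i\lambda},\qquad
S_{20}(x)\sim\frac{1}{2i\lambda C_{10}}\,x^{\frac{1}{2}+i\lambda},\qquad x\to 0,
\end{equation*}
which depend only on $\lambda$ and on the chosen constants $C_{10},C_{11}$, and \emph{not} on $\mu$ or on the particular perturbation $q_{0}$. Uniqueness of the FSS given these normalizations (as produced by the Volterra iteration of the previous subsection, applied both to $q$ and to $q+L$) therefore forces
\begin{equation*}
S_{j0}^{\,q+L}(x,\mu^{2}-L)=S_{j0}^{\,q}(x,\mu^{2}),\qquad
S_{j1}^{\,q+L}(x,\mu^{2}-L)=S_{j1}^{\,q}(x,\mu^{2}),\qquad j=1,2,
\end{equation*}
for every $x\in(0,A)$ and every $\mu\in\C$.

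Finally, since the characteristic function and the generalized Weyl--Titchmarsh function are defined by
\begin{equation*}
\Delta_{q}(\mu^{2})=W(S_{11},S_{10}),
\qquad
M_{q}(\mu^{2})=-\frac{W(S_{11},S_{20})}{W(S_{11},S_{10})},
\end{equation*}
purely through Wronskians of the FSS, substituting the identities above immediately yields $\Delta_{q}(\mu^{2})=\Delta_{q+L}(\mu^{2}-L)$ and $M_{q}(\mu^{2})=M_{q+L}(\mu^{2}-L)$, as claimed. The only nontrivial point in the argument is the verification that $q+L$ meets the integrability hypotheses of Section \ref{SeparatedRadialODE} so that the FSS machinery applies on both sides, and this is straightforward.
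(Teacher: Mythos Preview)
Your proof is correct and follows essentially the same approach as the paper: both rely on the observation that the FSS $S_{jm}$ for the equation with potential $q$ at parameter $\mu^{2}$ are simultaneously the FSS for the equation with potential $q+L$ at parameter $\mu^{2}-L$, since the defining asymptotics at $x=0$ and $x=A$ depend only on $\lambda$ and the normalization constants. Your version adds a small amount of extra detail (the explicit check that $q+L$ still satisfies the integrability hypotheses, and phrasing the key step as a uniqueness statement for the FSS), but the argument is the same.
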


\begin{proof}
Recall that the FSS given by $S_{jm}(x,\mu), \ j=1,2, \ m= 0,1$ are solutions of (\ref{RadialODE})
$$
	-u'' + q(x) u = - \mu^2 u, \quad \forall \mu \in \C,
$$
satisfying the asymptotics (\ref{Jost0}) and (\ref{JostA}). The characteristic and generalized WT functions are then expressed in terms of Wronskians of the $S_{jm}(x,\mu), \ j=1,2, \ m= 0,1$ by (\ref{Char}) and (\ref{WT}).
	
But note that, for all $L \in \C$, the FSS $S_{jm}(x,\mu), \ j=1,2, \ m= 0,1$ are also solutions of
\begin{equation} \label{RadODE-Mod}
	-u'' + (q(x) + L) u = - (\mu^2 - L) u, \quad \forall \mu \in \C.
\end{equation}
The asymptotics of the new potential $q(x) + L$ at $x=0$ and $x=B$ do not change when we add a constant $L$ and are still given by (\ref{Asymp}). Hence we can define the characteristic and generalized WT functions for (\ref{RadODE-Mod}) using the same FSS $S_{jm}(x,\mu), \ j=1,2, \ m= 0,1$. The characteristic and generalized WT functions for (\ref{RadODE-Mod}) thus coincide with the old ones. We deduce from this fact the transformation laws given in the Lemma.
\end{proof}

Using Lemma \ref{Sym}, we can assume from now on that
\begin{equation} \label{Assump2}
  \forall n \in \N, \quad M_q(\mu_n^2) = M_{\tilde{q} - C (\lambda^2+\frac{1}{4}) }(\mu_n^2),
\end{equation}
and
\begin{equation} \label{Assump3}
  \forall n \in \N, \quad \Delta_q(\mu_n^2) = \Delta_{\tilde{q} - C (\lambda^2+\frac{1}{4}) }(\mu_n^2).
\end{equation}

\subsection{The CAM method}

We shall use now the Complex Angular Momentum method (CAM) to prove that the equality (\ref{Assump2}) extends meromorphically for $\mu \in \C$ (since $M_q(\mu^2)$ has poles at the $(\alpha_n)_{n \in \Z}$). Recall that the functions
$$
  \Delta_q(\mu^2) = W(S_{11}, S_{10}), \quad \delta_q(\mu^2) = W(S_{11}, S_{20}),
$$
are entire functions in the variable $\mu$ that satisfy the estimates in Corollary \ref{Bounded-iR}. We can use these estimates to prove that the functions $\mu \mapsto \Delta_q(\mu^2), \delta_q(\mu^2)$ belong to the Cartwright class (\cite{Lev}) defined by

\begin{defi} \label{Cartwright}
  We say that an entire function $f$ belongs to the Cartwright class $\mathcal{C}$ if $f$ is of exponential type (\textit{i.e.} $|f(\mu)| \leq C e^{A|\mu|}$ for some positive constants $A, C$) and satisfies
	$$
	  \int_{\R} \frac{\log^+(|f(iy)|)}{1+y^2} dy < \infty,
	$$
	where
	$$
	  \log^+(x) = \left\{ \begin{array}{cc} \log(x) & \log(x) \geq 0, \\ 0, & \log(x) < 0. \end{array} \right.
	$$
\end{defi}

\begin{rem} \label{Nevanlinna}
   Note that if we restrict the Cartwright class $\mathcal{C}$ to functions analytic on the half plane $\C^+ = \{ \mu \in \C, \ \Re(\mu) \geq 0 \}$, then it reduces to the Nevanlinna class $N^+(\C^+)$ used in \cite{DN3, DGN} (see \cite{Lev}, Remark, p 116).
\end{rem}

It is well known that the zeros of entire functions in the Cartwright class $\mathcal{C}$ have a certain distribution in the complex plane (see \cite{Lev}, Theorem 1, p 127). As a consequence, the following uniqueness Theorem holds.

\begin{prop} \label{UniquenessC}
  Let $f \in \mathcal{C}$ and $(z_n)_n$ be the zeros of $f$. If there exists a subset $\mathcal{L} \subset \N$ such that
	$$
	  \sum_{n \in \mathcal{L}} \frac{\Re(z_n)}{|z_n|^2} = \infty,
	$$
	then $f = 0$ on $\C$.
\end{prop}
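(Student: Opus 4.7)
This is the standard uniqueness statement for the Cartwright class, and the plan is to reduce it to the classical distributional result for the zeros of entire functions in $\mathcal{C}$, which I will then use in a contrapositive fashion. The argument has three short steps.

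First, I would invoke the Cartwright theorem for the class $\mathcal{C}$ (cf.\ \cite{Lev}, Chapter V; in particular the zero distribution theorem already cited as Theorem 1, p.~127), which states that if $f \in \mathcal{C}$ is not identically zero, then its zero sequence $(z_n)$ satisfies the convergence condition
$$
  \sum_n \left|\,\Re\!\left(\tfrac{1}{z_n}\right)\right| \;=\; \sum_n \frac{|\Re(z_n)|}{|z_n|^2} \;<\; +\infty.
$$
Alternatively, one can invoke Remark \ref{Nevanlinna}: the restriction $f|_{\C^+}$ lies in the Nevanlinna class $N^+(\C^+)$, for which the Blaschke condition in the half--plane reads
$\sum \Re(w_n)/(1+|w_n|^2) < \infty$
for the zeros $(w_n) \subset \C^+$; this is the half--plane reformulation of the Cartwright condition above and yields the same quantitative constraint.

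Second, I would argue by contrapositive. Assume $f \not\equiv 0$ in $\C$, and suppose we are given a subset $\mathcal{L} \subset \N$ with $\sum_{n \in \mathcal{L}} \Re(z_n)/|z_n|^2 = +\infty$. The positive divergence of this series forces the contribution of the terms with $\Re(z_n) > 0$ to be infinite; indeed, extracting $\mathcal{L}' = \{n \in \mathcal{L} : \Re(z_n) > 0\} \subset \mathcal{L}$, the partial sums over the complementary set $\mathcal{L} \setminus \mathcal{L}'$ are non--positive, so one necessarily has
$$
  \sum_{n \in \mathcal{L}'} \frac{\Re(z_n)}{|z_n|^2} \;=\; +\infty.
$$
Bounding $\sum_n |\Re(z_n)|/|z_n|^2$ from below by the sum over $\mathcal{L}'$ then contradicts the Cartwright condition of Step~1 (or, in the Nevanlinna picture, contradicts the Blaschke condition for $f|_{\C^+} \in N^+(\C^+)$, since $\Re(z_n)/|z_n|^2 \sim \Re(z_n)/(1+|z_n|^2)$ for $|z_n|$ large).

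Third, this contradiction forces $f \equiv 0$ on $\C^+$, and then by analytic continuation (or directly, since $f$ is entire on $\C$ to begin with) $f$ vanishes identically on $\C$. The main obstacle is not in the argument itself, which is two lines once the right tool is in hand, but in correctly identifying and citing the zero distribution theorem for $\mathcal{C}$ (or, equivalently, the Blaschke condition for $N^+(\C^+)$); once that classical input is accepted, the rest is bookkeeping on signs to ensure that a positively divergent subsum persists after restricting to zeros with positive real part.
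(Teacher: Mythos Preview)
Your proposal is correct and takes essentially the same approach as the paper: the paper does not spell out a proof but simply cites the zero distribution theorem for the Cartwright class (\cite{Lev}, Theorem~1, p.~127) and states Proposition~\ref{UniquenessC} as an immediate consequence, which is precisely what you do in Steps~1--3. Your contrapositive argument with the extraction of $\mathcal{L}'$ is the natural way to make the ``as a consequence'' explicit.
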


We can apply these results to the entire functions $\mu \mapsto \Delta_q(\mu^2), \delta_q(\mu^2)$.

\begin{coro} \label{UniquenessDelta}
  1) The functions $\Delta_q$ and $\delta_q$ belong to $\mathcal{C}$. \\
	2) Let $\mathcal{L} \subset \N$ such that $\ds \sum_{n \in \mathcal{L}} \frac{1}{n} = \infty$. Assume that one of the following equalities hold
	\begin{eqnarray*}
	  \Delta_q(\mu_n^2) = \tilde{\Delta}_{\tilde{q} - C(\lambda^2+\frac{1}{4})}(\mu_n^2), \quad \forall n \in \mathcal{L}, \\
		\delta_q(\mu_n^2) = \tilde{\delta}_{\tilde{q} - C(\lambda^2+\frac{1}{4})}(\mu_n^2), \quad \forall n \in \mathcal{L}.
	\end{eqnarray*}
	Then the corresponding equalities hold.
	\begin{eqnarray*}
	  \Delta_q(\mu^2) = \tilde{\Delta}_{\tilde{q} - C(\lambda^2+\frac{1}{4})}(\mu^2), \quad \forall \mu \in \C, \\
		\delta_q(\mu^2) = \tilde{\delta}_{\tilde{q} - C(\lambda^2+\frac{1}{4})}(\mu^2), \quad \forall \mu \in \C.
	\end{eqnarray*}
\end{coro}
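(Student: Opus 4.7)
The plan is to first establish part (1) directly from the bounds of Corollary \ref{Bounded-iR}, and then to derive part (2) as a straightforward application of Proposition \ref{UniquenessC} to the difference of the two entire functions, using the Weyl law (\ref{WeylLaw}) to verify the divergent-sum hypothesis on the zeros.

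For part (1), Corollary \ref{Bounded-iR} supplies both ingredients of the Cartwright class: the bound $|\Delta_q(\mu)|, |\delta_q(\mu)| \leq C\,e^{A|\Re(\mu)|}$ immediately gives exponential type, while the uniform bound on the imaginary axis $|\Delta_q(iy)|, |\delta_q(iy)| \leq C$ yields $\log^+|\Delta_q(iy)|, \log^+|\delta_q(iy)| \leq \log^+ C$, a finite constant, so that the required integral $\int_\R \log^+|\cdot(iy)|/(1+y^2)\,dy$ converges.

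For part (2), I would introduce
\[
F(\mu) := \Delta_q(\mu^2) - \tilde\Delta_{\tilde q - C(\lambda^2+\frac{1}{4})}(\mu^2),
\]
which lies in $\mathcal{C}$ by part (1) together with the obvious stability of the Cartwright class under linear combinations (both the exponential-type bound and the $\log^+$-integrability are preserved by sums). By hypothesis, $F$ vanishes on the family $(\mu_n)_{n \in \mathcal{L}}$, so these numbers constitute a distinguished subfamily of zeros of $F$. An identical setup handles the $\delta_q$ case, with the analogous entire difference $G(\mu) := \delta_q(\mu^2) - \tilde\delta_{\tilde q - C(\lambda^2+\frac{1}{4})}(\mu^2)$.

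Verifying the hypothesis of Proposition \ref{UniquenessC} then reduces to the Weyl law. By Proposition \ref{Angular-Op}, $\mu_n$ is real and positive for all $n$ sufficiently large, and (\ref{WeylLaw}) gives $\mu_n \sim \pi n/B$. Consequently, for $n \in \mathcal{L}$ with $n$ large,
\[
\frac{\Re(\mu_n)}{|\mu_n|^2} = \frac{1}{\mu_n} \sim \frac{B}{\pi n},
\]
so the assumption $\sum_{n \in \mathcal{L}} 1/n = \infty$ transfers directly to $\sum_{n \in \mathcal{L}} \Re(\mu_n)/|\mu_n|^2 = \infty$. The finitely many small indices where $\mu_n$ may be purely imaginary contribute zero to the sum but are harmless, since removing finitely many terms from a divergent positive series leaves it divergent. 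Proposition \ref{UniquenessC} then forces $F \equiv 0$ on $\C$, giving the claimed analytic continuation of the identity from the sparse set $\{\mu_n : n \in \mathcal{L}\}$ to all of $\C$; the same argument applies verbatim to $G$. I do not foresee any substantial obstacle here, as the proof is essentially a packaging of prior growth estimates; the only nontrivial input beyond standard Cartwright-class theory is the Weyl asymptotics, which guarantees that the given zero-set is dense enough on the positive real axis to enforce uniqueness.
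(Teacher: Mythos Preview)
Your proposal is correct and follows essentially the same route as the paper: part (1) is deduced from Corollary \ref{Bounded-iR} and the definition of $\mathcal{C}$, and part (2) is obtained by applying Proposition \ref{UniquenessC} to the difference function, with the divergent-sum hypothesis checked via the Weyl asymptotics (equivalently, the M\"untz condition (\ref{MuntzCond})). Your write-up is in fact more explicit than the paper's on two points---the closure of $\mathcal{C}$ under sums, and the passage from $\sum_{n\in\mathcal{L}}1/n=\infty$ to $\sum_{n\in\mathcal{L}}\Re(\mu_n)/|\mu_n|^2=\infty$ via $\mu_n\sim \pi n/B$---both of which are routine but worth spelling out.
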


\begin{proof}
  The first point is an immediate consequence of Definition \ref{Cartwright} and Corollary \ref{Bounded-iR}. The second point follows then from Proposition \ref{UniquenessC} and the M\"untz condition (\ref{MuntzCond}) satisfied by the $\mu_n$'s.
\end{proof}

Lastly, we need a slight improvement of the previous result that takes into account the fact that we know the coefficients $M_q(\mu^2)$, and not $\Delta_q(\mu^2)$ and $\delta_q(\mu^2)$ separately.

\begin{prop} \label{UniquenessWT}
  Let $\mathcal{L} \subset \N$ such that $\ds \sum_{n \in \mathcal{L}} \frac{1}{n} = \infty$. Assume that the following equalities hold
	$$
		M_q(\mu_n^2) = M_{\tilde{q} - C(\lambda^2+\frac{1}{4})}(\mu_n^2), \quad \forall n \in \mathcal{L}.
	$$
	Then both following equalities hold
	\begin{eqnarray*}
	  \Delta_q(\mu^2) = \Delta_{\tilde{q} - C(\lambda^2+\frac{1}{4})}(\mu^2), \quad \forall \mu \in \C, \\
		\delta_q(\mu^2) = \delta_{\tilde{q} - C(\lambda^2+\frac{1}{4})}(\mu^2), \quad \forall \mu \in \C.
	\end{eqnarray*}
\end{prop}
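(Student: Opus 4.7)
The plan is to upgrade the sparse equality of ratios $M_q/M_{q'}$ to identities on the full complex plane by first clearing denominators and applying the Cartwright class machinery already developed, then exploiting the unitarity identity of Section~3.3 to disentangle the characteristic function from the auxiliary Wronskian $\delta_q$. Throughout I write $q' := \tilde q - C(\lambda^2+\frac{1}{4})$ to lighten notation.

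The first step is to introduce the auxiliary entire function
\[
  F(\mu) \;=\; \delta_q(\mu^2)\,\Delta_{q'}(\mu^2) \;-\; \delta_{q'}(\mu^2)\,\Delta_q(\mu^2).
\]
Each of the four factors is entire of exponential type at most $A$ (resp.\ $\tilde A$) and bounded on $i\R$ by Corollary~\ref{Bounded-iR}, so $F$ itself is entire, of exponential type, and bounded on the imaginary axis. Hence $F\in\mathcal{C}$. At any $\mu_n$ with $n$ large enough, Corollary~\ref{Increasing-Char-WT} shows $\Delta_q(\mu_n^2)\,\Delta_{q'}(\mu_n^2)\ne 0$, so the hypothesis $M_q(\mu_n^2)=M_{q'}(\mu_n^2)$ reads exactly $F(\mu_n)=0$. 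By the Weyl law (\ref{WeylLaw}), the $\mu_n$ are positive reals with $\mu_n\sim n\pi/B$, so the hypothesis $\sum_{n\in\mathcal{L}}1/n=\infty$ delivers the M\"untz condition $\sum_{n\in\mathcal{L}}\Re(\mu_n)/|\mu_n|^2=\infty$. Proposition~\ref{UniquenessC} (or the reformulation in Corollary~\ref{UniquenessDelta}) then forces $F\equiv 0$ on $\C$, yielding the meromorphic identity
\[
  \frac{\delta_{q'}(\mu^2)}{\Delta_{q'}(\mu^2)} \;=\; \frac{\delta_q(\mu^2)}{\Delta_q(\mu^2)}, \qquad \mu\in\C,
\]
equivalently $M_q\equiv M_{q'}$ as meromorphic functions.

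The second step separates $\Delta$ from $\delta$ using the extended unitarity identity (\ref{Unitary-DeltaExt}), which holds for both $q$ and $q'$ with the same constants $C_{10},C_{11}$. Writing $h(\mu^2):=\Delta_{q'}(\mu^2)/\Delta_q(\mu^2)=\delta_{q'}(\mu^2)/\delta_q(\mu^2)$ and substituting into (\ref{Unitary-DeltaExt}) for $q'$, the left-hand side becomes $h(\mu^2)\,\overline{h(\bar\mu^2)}$ times the unitarity identity for $q$; comparing with the normalization yields the functional equation
\[
  h(\mu^2)\,\overline{h(\bar\mu^2)} \;=\; 1 \quad \text{on } \C.
\]
In particular $|h(\mu^2)|=1$ for every $\mu\in\R$ where $h$ is finite. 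Combining this with the asymptotic expansion of Proposition~\ref{Asymp-WT}, the leading term of $\Delta_q$ grows like $e^{\mu A}$ on $\R^+$ while that of $\Delta_{q'}$ grows like $e^{\mu\tilde A}$; the only way the ratio can stay bounded of modulus one on $\R$ is $A=\tilde A$. Granted this, the leading constants and exponentials cancel identically, so $h(\mu^2)\to 1$ as $|\mu|\to\infty$ along any ray $|\arg\mu|<\pi/2$.

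The final and genuinely delicate step is to deduce $h\equiv 1$ from the three pieces of information just gathered: $h$ is meromorphic on $\C$, $h\cdot\bar h\equiv 1$, and $h\to 1$ at infinity along the ``generic'' sectors. The Möbius example $h(\mu)=(\mu-i)/(\mu+i)$ shows these three conditions are not enough by themselves, so one must use the specific nature of $h$: its possible poles lie at the Regge poles $(\alpha_n)$ of $\Delta_q$, while its zeros lie at the Regge poles $(\tilde\alpha_n)$ of $\Delta_{q'}$. By the identity $h(\mu^2)=1/\overline{h(\bar\mu^2)}$, poles and zeros are paired by conjugation. However, the asymptotic distribution (\ref{ReggePoles-1})-(\ref{ReggePoles-2}) shows that the Regge poles of $\Delta_q$ (resp.\ $\Delta_{q'}$) in the upper half $\mu$-plane cluster along the vertical line $\Re\mu=\lambda\pi/A$, and their conjugates cluster along $\Re\mu=\lambda\pi/A$ too---whereas the zeros of $\Delta_q$ in the lower half-plane cluster along $\Re\mu=-\lambda\pi/A$. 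For $\lambda\ne 0$ these two vertical asymptotic lines are distinct, so the conjugation-symmetry forced by $h\bar h=1$ is incompatible with $h$ having any nonremovable poles (or zeros). Thus $h$ is entire with no zeros, hence $h=e^{g}$ for some entire $g$ with $\Re g\equiv 0$ on $\R$; combined with $h\to 1$ at infinity and $h\bar h=1$ a standard Phragm\'en-Lindel\"of argument applied in each sector of validity of Proposition~\ref{Asymp-WT} then forces $g\equiv 0$ and hence $h\equiv 1$. This last Regge-pole/Phragm\'en-Lindel\"of step is the main obstacle of the proof, all other ingredients being already in place from Sections~3.2 and~3.3.
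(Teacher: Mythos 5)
Your first step (forming $F=\delta_q\Delta_{q'}-\delta_{q'}\Delta_q$, checking it lies in the Cartwright class, and killing it via Proposition \ref{UniquenessC} and the M\"untz condition) is exactly the paper's opening move and is correct. The divergence, and the problem, is in how you pass from the meromorphic identity $\delta_q\Delta_{q'}=\delta_{q'}\Delta_q$ to the two separate identities. Your third step does not close. The functional equation $h(\mu^2)\,\overline{h(\bar\mu^2)}=1$ together with the Regge-pole asymptotics (\ref{ReggePoles-1})--(\ref{ReggePoles-2}) only constrains the poles and zeros of $h=\Delta_{q'}/\Delta_q$ for \emph{large} $|n|$: the asymptotic clustering of $\alpha_n^2$ in the upper half of the $z=\mu^2$ plane rules out conjugate pole/zero pairs in the tail, but says nothing about finitely many low-lying Regge poles, for which no asymptotic formula is available. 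A finite conjugate-paired configuration --- a finite Blaschke-type product $\prod_j (z-\bar w_j)/(z-w_j)$ --- satisfies $h\,h^*\equiv 1$, has modulus one on $\R$, and tends to $1$ along every ray, so none of the three properties you assemble (nor Phragm\'en--Lindel\"of applied afterwards) excludes it. Your own Möbius counterexample is thus not actually disposed of; the step ``hence $h$ is entire with no zeros'' is unsupported.

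The fix is short and you already have every ingredient on the table. Since $F\equiv 0$ gives $h=\Delta_{q'}/\Delta_q=\delta_{q'}/\delta_q$ as meromorphic functions, any pole of $h$ would have to be simultaneously a zero of $\Delta_q$ and of $\delta_q$; but the extended unitarity relation (\ref{Unitary-DeltaExt}) forbids $\Delta_q$ and $\delta_q$ from vanishing at the same point. Hence $h$ has no poles, and by symmetry no zeros, so $\Delta_q$ and $\Delta_{q'}$ (resp.\ $\delta_q$ and $\delta_{q'}$) share the same zeros with multiplicities. This is precisely the paper's route: one then invokes the Hadamard factorization of the order-$\frac12$ entire functions $\mu^2\mapsto\Delta_q(\mu^2),\delta_q(\mu^2)$ (which carry no exponential factor) to conclude that the ratios are the \emph{constants} $G/\tilde G$ and $g/\tilde g$, and finally uses the asymptotics of Proposition \ref{Asymp-WT} to get $A=\tilde A$ and to pin both constants to $1$. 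Your use of unitarity to derive $h\,h^*\equiv1$ and your extraction of $A=\tilde A$ from the growth comparison are fine and in the same spirit as the paper's last step, but the no-pole/no-zero statement must come from the non-simultaneous-vanishing argument, not from the Regge-pole distribution.
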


\begin{proof}
Under our assumption and using the definition of the generalized WT function, we thus get
$$
	\delta_q(\mu_n^2) \Delta_{\tilde{q} - C(\lambda^2+\frac{1}{4})}(\mu_n^2) = \delta_{\tilde{q} - C(\lambda^2+\frac{1}{4})}(\mu_n^2) \Delta_q(\mu_n^2), \quad \forall n \in \mathcal{L}.
$$
Since the product of functions $\mu \mapsto \delta_q(\mu^2) \Delta_{\tilde{q} - C(\lambda^2+\frac{1}{4})}(\mu^2), \  \delta_{\tilde{q} - C(\lambda^2+\frac{1}{4})}(\mu^2) \Delta_q(\mu^2)$ still belongs to the Cartwright class $\mathcal{C}$ by Corollary \ref{Bounded-iR}, we deduce first from Proposition \ref{UniquenessC} that
\begin{equation} \label{qq1}
	\delta_q(\mu^2) \Delta_{\tilde{q} - C(\lambda^2+\frac{1}{4})}(\mu^2) = \delta_{\tilde{q} - C(\lambda^2+\frac{1}{4})}(\mu^2) \Delta_q(\mu^2), \quad \forall \mu \in \C.
\end{equation}
	
Observe now that the unitary relations (\ref{Unitary-DeltaExt}) imply that the functions $\Delta_q(\mu^2)$ and $\delta_q(\mu^2)$ cannot vanish simultaneously. Hence we deduce from (\ref{qq1}) that the functions $\Delta_q(\mu^2)$ and $\Delta_{\tilde{q} - C(\lambda^2+\frac{1}{4})}(\mu^2)$ (resp. $\delta_q(\mu^2)$ and $\delta_{\tilde{q} - C(\lambda^2+\frac{1}{4})}(\mu^2)$) share the same zeros counted with multiplicities respectively denoted by $(\alpha_n^2)_n$ and $(\beta_n^2)_n$.

Notice also that the functions $\mu^2 \mapsto \Delta_q(\mu^2), \ \delta_q(\mu^2)$ are entire of order $\half$ by Corollary \ref{Bounded-iR}. Hence we can use the Hadamard factorization Theorem (see \cite{Lev}) to write these functions as
\begin{eqnarray} \label{qq2}
  \Delta_q(\mu^2) = G \prod_{n \in \N} \left( 1 - \frac{\mu^2}{\alpha_n^2} \right), \quad G = \Delta_q(0), \\
  \delta_q(\mu^2) = g \prod_{n \in \N} \left( 1 - \frac{\mu^2}{\beta_n^2} \right), \quad g = \delta_q(0). \nonumber
\end{eqnarray}

Recalling that the zeros of the functions $\Delta_q(\mu^2)$ and $\Delta_{\tilde{q} - C(\lambda^2+\frac{1}{4})}(\mu^2)$ (resp. $\delta_q(\mu^2)$ and $\delta_{\tilde{q} - C(\lambda^2+\frac{1}{4})}(\mu^2)$) coincide with the same multiplicities, we thus deduce from (\ref{qq1}) and (\ref{qq2})
\begin{eqnarray} 
  \frac{\Delta_q(\mu^2)}{\Delta_{\tilde{q} - C(\lambda^2+\frac{1}{4})}(\mu^2)} = \frac{G}{\tilde{G}}, \quad \forall \mu \in \C, \label{qq3} \\
  \frac{\delta_q(\mu^2)}{\delta_{\tilde{q} - C(\lambda^2+\frac{1}{4})}(\mu^2)} = \frac{g}{\tilde{g}}, \quad \forall \mu \in \C. \label{qq4}
\end{eqnarray}
	
We now use the asymptotics given in Proposition \ref{Asymp-WT} together with (\ref{qq3}) - (\ref{qq4}) and our main assumption to show successively that $A = \tilde{A}$, $G = \tilde{G}$ and $g = \tilde{g}$. We conclude that
\begin{eqnarray*}
	\Delta_q(\mu^2) = \Delta_{\tilde{q} - C(\lambda^2+\frac{1}{4})}(\mu^2), \quad \forall \mu \in \C, \\
	\delta_q(\mu^2) = \delta_{\tilde{q} - C(\lambda^2+\frac{1}{4})}(\mu^2), \quad \forall \mu \in \C.
\end{eqnarray*}

\end{proof}

\subsection{The inverse problem for the radial equation}

At this stage, we have proved from our main assumption that (see Proposition \ref{UniquenessWT})
\begin{eqnarray*}
	\Delta_q(\mu^2) = \Delta_{\tilde{q} - C(\lambda^2+\frac{1}{4})}(\mu^2), \quad \forall \mu \in \C, \\
  \delta_q(\mu^2) = \delta_{\tilde{q} - C(\lambda^2+\frac{1}{4})}(\mu^2), \quad \forall \mu \in \C.
\end{eqnarray*}
In particular, we have proved that the generalized WT functions
$$
	M_q(\mu^2) = M_{\tilde{q} - C(\lambda^2+\frac{1}{4})}(\mu^2), \quad \forall \mu \in \C \setminus \{ (\alpha_n)_{n \in \Z} \},
$$
coincide. We can thus use the celebrated Borg-Marchenko Theorem in the form given in \cite{FY} (see the details below) to prove that
$$
  A = \tilde{A}, \quad q(x) = \tilde{q}(x) - C(\lambda^2+\frac{1}{4}), \quad \forall x \in (0,A).
$$
Using the definition of $q$, we thus finally get
$$
  A = \tilde{A}, \quad \tilde{a}(x) = a(x) - C, \quad \forall x \in (0,A).
$$

In conclusion, under our main assumption, we have proved successively that there exists a constant $C$ such that
$$
  b(y)= \tilde{b}(y)+C  , \quad \forall y \in (0,B), \quad A = \tilde{A}, \quad  a(x) = \tilde{a}(x) + C, \quad \forall x \in (0,A).
$$
Recalling that the metric $g$ of the Liouville surface is
$$
  g = (a(x) - b(y)) [ dx^2 + dy^2],
$$
we see immediately that
$$
  \tilde{g} = g,
$$
which finishes the proof of Theorem \ref{Main}. \\

For completeness and since the Borg-Marchenko Theorem used above is not fully standard, we recall here its proof adapted from the one given by Freiling and Yurko in \cite{FY}.

\begin{thm} \label{BorgMarchenko}
  Let $M_q$ and $M_{\tilde{q}}$ be the two generalized WT functions associated to the equations
	\begin{equation} \label{Eq}
	  -u'' + q(x) u = -\mu^2 u, \quad \forall x \in (0,A), \quad -u'' + \tilde{q}(x) u = -\mu^2 u, \quad \forall x \in (0,\tilde{A}),
	\end{equation}
	where $q, \tilde{q}$ satisfy the asymptotics (\ref{Asymp}). If
	$$
	  M_q(\mu^2) = M_{\tilde{q}}(\mu^2), \quad \forall \mu \in \C \setminus \{ \textrm{poles}\},
	$$
	then $A = \tilde{A}$ and
	$$
	  q(x) = \tilde{q}(x), \quad \forall x \in (0,A).
	$$
\end{thm}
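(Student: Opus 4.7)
The plan is to follow the classical Borg--Marchenko strategy, adapted to the singular endpoint setting of Freiling--Yurko, in two main stages: first recover $A$ from the pole distribution of $M_q$, then recover $q$ via a $\mu$-independent transfer relation between the two Weyl solutions.

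\textbf{Stage 1 (recovery of $A$).} The poles of the meromorphic function $\mu^2 \mapsto M_q(\mu^2)$ are precisely the generalized Regge poles $(\alpha_n)_{n\in\Z}$, \textit{i.e.}~the zeros of $\Delta_q(\mu^2)$. By the asymptotics (\ref{ReggePoles-1})--(\ref{ReggePoles-2}), $\alpha_n \sim i(n+\tfrac{1}{2}+p)\pi/A$ as $|n|\to\infty$, so the linear growth rate of the imaginary parts of the poles encodes $A$. Since $M_q\equiv M_{\tilde q}$ forces these two pole sets (with multiplicities) to coincide, comparing the asymptotics for the two sequences $\alpha_n$ and $\tilde\alpha_n$ gives $A=\tilde A$ at once.

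\textbf{Stage 2 (construction of the transfer coefficients).} Now that the interval is common, introduce the Weyl solutions
$$\phi(x,\mu^2) = S_{20}(x,\mu^2) + M_q(\mu^2)\,S_{10}(x,\mu^2), \qquad \tilde\phi(x,\mu^2) = \tilde S_{20}(x,\mu^2) + M_{\tilde q}(\mu^2)\,\tilde S_{10}(x,\mu^2),$$
which both satisfy the same (Weyl-type) condition at $x=A$. Since $\{\tilde S_{10},\tilde\phi\}$ is a fundamental system of the second equation in (\ref{Eq}) (with Wronskian $W(\tilde S_{10},\tilde\phi)=W(\tilde S_{10},\tilde S_{20})=1$), I can expand
\begin{equation*}
S_{10}(x,\mu^2) = P_{11}(x)\,\tilde S_{10}(x,\mu^2) + P_{12}(x)\,\tilde\phi(x,\mu^2), \qquad \phi(x,\mu^2) = P_{11}^\sharp(x)\,\tilde S_{10}(x,\mu^2) + P_{12}^\sharp(x)\,\tilde\phi(x,\mu^2),
\end{equation*}
where the coefficients are obtained as $\mu$-dependent Wronskians of $S_{10},\phi$ with $\tilde S_{10},\tilde\phi$. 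Using the hypothesis $M_q=M_{\tilde q}$ one verifies directly that $P_{12}$ and $P_{12}^\sharp$ are entire in $\mu^2$ (the poles of $M_q$ cancel between the two tildes), and a short calculation with the identity $W(S_{10},\phi)=1=W(\tilde S_{10},\tilde\phi)$ forces these coefficients to be independent of $\mu$.

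\textbf{Stage 3 (triviality of the transfer and conclusion).} To conclude it remains to show $P_{12}\equiv 0$ and $P_{11}\equiv 1$ (and similarly for $P^\sharp$). This is done by sending $\mu$ to infinity in a sector where good bounds are available: using the uniform estimates on $S_{j0},\tilde S_{j0}$ from Corollary \ref{UniEst-Sj0}, the sharp asymptotics (\ref{Est-S10-mu})--(\ref{Est-S20'-mu}) of the FSS, and the bound on $M_q(\mu^2)$ from Proposition \ref{Asymp-WT}, each Wronskian expression for $P_{12}$ is seen to decay to $0$ as $|\mu|\to\infty$ along (say) the positive real axis; being $\mu$-independent, it must be identically zero. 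The same scheme pins down $P_{11}\equiv 1$. Thus $S_{10}(x,\mu^2)=\tilde S_{10}(x,\mu^2)$ and $\phi(x,\mu^2)=\tilde\phi(x,\mu^2)$ for all $x\in(0,A)$ and all $\mu\in\C$. Plugging this common solution into its two Sturm--Liouville equations and subtracting yields $(q(x)-\tilde q(x))S_{10}(x,\mu^2)=0$; since $S_{10}(\cdot,\mu^2)$ does not vanish identically, we conclude $q\equiv\tilde q$ on $(0,A)$.

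\textbf{Main obstacle.} The delicate point is the asymptotic step in Stage 3: one must track the cancellations between the singular Frobenius modes $x^{1/2\pm i\lambda}$ near $x=0$ when comparing $S_{10}$ with $\tilde S_{10}$ and $\phi$ with $\tilde\phi$, while controlling the logarithmic remainders $[1]_{0},[1]_{1},[1]_{2}$ that replace the power-rate remainders in the smooth case. Because these remainders are only $O((\log|\mu|)^{-\epsilon})$, the decay of the Wronskian expressions defining $P_{12}$ is weaker than in the classical regular setting, and one must combine the sectorial exponential decay $e^{-2\Re(\mu)x}$ of the off-diagonal terms with the polynomial growth $|\mu|^{\pm i\lambda}$ carefully to ensure $P_{12}\to 0$ rather than an $O(1)$ constant.
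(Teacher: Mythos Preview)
Your overall architecture matches the paper's---a transfer-matrix argument \`a la Freiling--Yurko---but Stage~2 contains a genuine gap. The claim that the Wronskian normalizations $W(S_{10},\phi)=W(\tilde S_{10},\tilde\phi)=1$ force the transfer coefficients $P_{ij}$ to be \emph{independent of $\mu$} is false. Those identities only give a determinant constraint ($\det P=1$); since $S_{10}$ and $\tilde S_{10}$ are solutions of Sturm--Liouville equations with \emph{different} potentials $q$ and $\tilde q$, the mixed Wronskians defining $P_{ij}(x,\mu^2)$ are not constant in $x$ and there is no algebraic mechanism making them constant in $\mu$ either. (Concretely, after cancelling $M_q=M_{\tilde q}$ one gets expressions such as $S_{20}\tilde S_{10}-S_{10}\tilde S_{20}$, which is visibly $\mu$-dependent.) Your Stage~3 then collapses: ``$P_{12}\to 0$ along the positive real axis'' no longer implies $P_{12}\equiv 0$.

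The paper closes this gap by complex analysis rather than algebra. From the first representation of $P_{11},P_{12}$ (after using $M_q=M_{\tilde q}$) one reads off that they are entire in $\mu$, of exponential type, and bounded on $i\R$ (Corollaries~\ref{UniEst-Sj0}--\ref{UniEst-Sj1}). One then needs a \emph{second} representation, coming from $\phi=-S_{11}/\Delta_q$ together with the preliminary equality $\Delta_q=\Delta_{\tilde q}$---which itself must be extracted from $M_q=M_{\tilde q}$ via the unitarity relation~(\ref{Unitary-DeltaExt}) and Hadamard factorization, as in the proof of Proposition~\ref{UniquenessWT}. Combined with the lower bound~(\ref{Delta-Minoration}) on $|\Delta_q|$, this second form yields $P_{11}\to 1$ and $P_{12}\to 0$ as $\mu\to\pm\infty$ on $\R$, with precisely the $O\big((\log|\mu|)^{-\epsilon}\big)$ rate you flag in your ``main obstacle''. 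Phragm\'en--Lindel\"of applied on each quadrant then upgrades this to global boundedness on $\C$, and Liouville's theorem gives $P_{11}\equiv 1$, $P_{12}\equiv 0$. Your single-ray asymptotic step is one ingredient of this argument, but it cannot substitute for the two-representation plus Phragm\'en--Lindel\"of/Liouville machinery.
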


\begin{proof}
From the asymptotics given in Proposition \ref{Asymp-WT}, we easily get $A = \tilde{A}$. Also recall from the proof of Proposition \ref{UniquenessWT} that the equalities between the generalized WT functions imply the equalities between the characteristic functions, \textit{i.e.} we also have
\begin{equation} \label{rr1}
  \Delta_q(\mu^2) = \Delta_{\tilde{q}}(\mu^2).
\end{equation}

Let us define the matrix
$$
  P(x,\mu^2) = \left[ \begin{array}{cc} P_{11}(x,\mu^2) & P_{12}(x,\mu^2) \\
	                    P_{21}(x,\mu^2) & P_{22}(x,\mu^2) \end{array} \right],
$$
by the formula
$$
  P(x,\mu^2) \left[ \begin{array}{cc} \tilde{S}_{10}(x,\mu^2) & \tilde{\phi}(x,\mu^2) \\
	                    \tilde{S'}_{10}(x,\mu^2) & \tilde{\phi}'(x,\mu^2) \end{array} \right] = \left[ \begin{array}{cc} S_{10}(x,\mu^2) & \phi(x,\mu^2) \\
	                    S'_{10}(x,\mu^2) & \phi'(x,\mu^2) \end{array} \right],
$$
where
\begin{equation} \label{phi}
  \phi = -\frac{S_{11}}{\Delta_q(\mu^2)} = S_{20} + M_q(\mu^2) S_{10}.
\end{equation}
 Remark that $W(\tilde{S}_{10}, \tilde{\phi}) = 1$. Hence $\{\tilde{S}_{10}, \tilde{\phi}\}$ is a fundamental system for the radial  equation and the matrix
$$
  \left[ \begin{array}{cc} \tilde{S}_{10}(x,\mu^2) & \tilde{\phi}(x,\mu^2) \\
	                    \tilde{S'}_{10}(x,\mu^2) & \tilde{\phi}'(x,\mu^2) \end{array} \right],
$$
is invertible. We obtain
\begin{eqnarray}
  P_{11}(x,\mu^2) = S_{10}(x,\mu^2) \tilde{\phi}'(x,\mu^2) - \tilde{S}'_{10}(x,\mu^2) \phi(x,\mu^2), \label{rrr}  \\
	P_{12}(x,\mu^2) = \phi(x,\mu^2) \tilde{S}_{10}(x,\mu^2) - S_{10}(x,\mu^2) \tilde{\phi}(x,\mu^2). \label{rrrr}
\end{eqnarray}

On one hand, after some calculations and using the definition (\ref{phi}) of $\phi$, we have the following expressions for $P_{11}$ and $P_{12}$
\begin{eqnarray*}
  P_{11}(x,\mu^2) = & S_{10}(x,\mu^2) \tilde{S'}_{20}(x,\mu^2) - S_{20}(x,\mu^2) \tilde{S'}_{10}(x,\mu^2) \\
	                  & \quad + (M_{\tilde{q}}(\mu^2) - M_q(\mu^2)) S_{10}(x,\mu^2) \tilde{S'}_{10}(x,\mu^2),   \\
	P_{12}(x,\mu^2) = & S_{20}(x,\mu^2) \tilde{S}_{10}(x,\mu^2) - S_{10}(x,\mu^2) \tilde{S}_{20}(x,\mu^2) \\
	                  & \quad + (M_q(\mu^2) - M_{\tilde{q}}(\mu^2)) S_{10}(x,\mu^2) \tilde{S}_{10}(x,\mu^2).
\end{eqnarray*}
Using our main hypothesis $M_{\tilde{q}}(\mu^2) = M_q(\mu^2)$, we thus get
\begin{eqnarray}
  P_{11}(x,\mu^2) = S_{10}(x,\mu^2) \tilde{S'_{20}}(x,\mu^2) - S_{20}(x,\mu^2) \tilde{S'_{10}}(x,\mu^2), \label{rr2}  \\
	P_{12}(x,\mu^2) = S_{20}(x,\mu^2) \tilde{S_{10}}(x,\mu^2) - S_{10}(x,\mu^2) \tilde{S_{20}}(x,\mu^2). \label{rr3}
\end{eqnarray}

% ----------------   Modification de la preuve avec Pragmen Lindelof   -------------------

We conclude that the functions $P_{11}(x,\mu^2)$ and $P_{12}(x,\mu^2)$ are entire in the variable $\mu$. Moreover, using Corollaries \ref{UniEst-Sj0} and \ref{UniEst-Sj1}, we see that these functions are of exponential type and bounded on the imaginary axis.

On the other hand, for real $\mu$,  using (\ref{rr1}), (\ref{phi}) and (\ref{rrr}), (\ref{rrrr}), we can express alternatively the functions $P_{11}(x,\mu^2)$ and $P_{12}(x,\mu^2)$ as
\begin{eqnarray}
  P_{11}(x,\mu^2) & = & 1 + \frac{S_{11}(x,\mu^2)}{\Delta_q(\mu^2)} ( \tilde{S'}_{10}(x,\mu^2) - S'_{10}(x,\mu^2)) \nonumber \\
	                &   &  \quad \quad + \frac{S_{10}(x,\mu^2)}{\Delta_q(\mu^2)} ( \tilde{S'}_{11}(x,\mu^2) - S'_{11}(x,\mu^2)), \label{rr4} \\
	P_{12}(x,\mu^2) & = & \frac{1}{\Delta_q(\mu^2)} ( S_{10}(x,\mu^2) \tilde{S}_{11}(x,\mu^2) - S_{11}(x,\mu^2) \tilde{S}_{10}(x,\mu^2)). \label{rr5}
\end{eqnarray}
From the estimates in Corollaries \ref{UniEst-Sj0} and \ref{UniEst-Sj1} on the $S_{jm}(x,\mu^2)$ and $\tilde{S}_{jm}(x,\mu^2)$ for $j=1,2$ and $m=0,1$ and from the estimate (\ref{Delta-Minoration}), we can prove from (\ref{rr4}) - (\ref{rr5}) that for a fixed $x \in (0,A)$ and as $\mu \to \pm \infty$
\begin{equation} \label{rr6}
  |P_{11}(x,\mu^2) - 1| = O\left( \frac{1}{\log(|\mu|)^{\epsilon}} \right), \quad |P_{12}(x,\mu^2)| = 0\left( \frac{1}{\log(|\mu|)^{\epsilon}} \right),
\end{equation}
where $\epsilon = min (\eo, \eu)$. Applying the Phragmen-Lindel\"of's theorem (\cite{Boa}, Thm 1.4.2) on each quadrant of the complex plane, we deduce that 
$P_{11}(x,\mu^2)$ and $P_{12}(x,\mu^2)$ are bounded on $\C$. We thus deduce from the Liouville Theorem and (\ref{rr6}) that
\begin{equation} \label{rr7}
  P_{11}(x,\mu^2) = 1, \quad P_{12}(x,\mu^2) = 0, \quad \forall \mu \in \C.
\end{equation}

But recall from the definition of $P(x,\mu^2)$ that
\begin{eqnarray*}
  S_{10}(x,\mu^2) = P_{11}(x,\mu^2) \tilde{S}_{10}(x,\mu^2) + P_{12}(x,\mu^2) \tilde{S'}_{10}(x,\mu^2), \\
	\phi(x,\mu^2) = P_{11}(x,\mu^2) \tilde{\phi}(x,\mu^2) + P_{12}(x,\mu^2) \tilde{\phi}'(x,\mu^2).
\end{eqnarray*}
Hence we deduce from this and (\ref{rr7}) that
$$
  S_{10}(x,\mu^2) = \tilde{S}_{10}(x,\mu^2), \quad \phi (x,\mu^2) = \tilde{\phi}(x,\mu^2).
$$
Since $\{S_{10}, \phi\}$ and $\{\tilde{S}_{10}, \tilde{\phi} \}$ are FSS of the equations (\ref{Eq}), we conclude that $q(x) = \tilde{q}(x), \quad \forall x \in (0,A)$.

\end{proof}

\Section{An open problem} \label{Open}

We finish this paper with an interesting open problem concerning the inverse scattering problem for asymptotically hyperbolic Liouville surfaces. We have shown in our main Theorem \ref{Main} that the knowledge of the reflection operators $L(\lambda)$ or $R(\lambda)$ at a fixed non-zero energy $\lambda$ is enough to determine the metric of asymptotically hyperbolic Liouville surfaces. It would be interesting to address the question whether the transmission operators $T_L(\lambda)$ or $T_R(\lambda)$ at a fixed energy can be also used to determine uniquely the metric. According to Proposition \ref{WT-Reflection}, this is equivalent to know whether the characteristic operator $\Delta(\lambda)$ allows one to recover the metric.

Starting from the assumption
\begin{equation} \label{As-Delta}
  \Delta(\lambda) = \tilde{\Delta}(\lambda), \quad \lambda \not=0, \ \ \textrm{fixed},
\end{equation}
we can follow the arguments in the first subsections of Section \ref{IP} to deduce that there exists a constant $C \in \R$ such that
$$
  b(y) = \tilde{b}(y) + C, \quad \forall y \in (0,B),
$$
$$
  \mu_n^2 = \tilde{\mu}_n^2 + C (\lambda^2 + \frac{1}{4}), \quad \forall n \in \N,
$$
and
\begin{equation} \label{As-Delta1}
  \forall n \in \N, \quad \Delta_q(\mu_n^2) = \Delta_{\tilde{q}}(\mu_n^2 - C (\lambda^2 + \frac{1}{4})).
\end{equation}
Moreover, using Lemma \ref{Sym}, (\ref{As-Delta1}) leads to
\begin{equation} \label{As-Delta2}
  \forall n \in \N, \quad \Delta_q(\mu_n^2) = \Delta_{\tilde{q} - C (\lambda^2 + \frac{1}{4})}(\mu_n^2).
\end{equation}
Finally, using the fact that the functions $\mu \mapsto \Delta_q(\mu_n^2), \Delta_{\tilde{q} - \lambda^2C}(\mu_n^2)$ are in the Cartwright class $\mathcal{C}$, we deduce from (\ref{MuntzCond}), Proposition \ref{UniquenessC} and (\ref{Assump1}) that
\begin{equation} \label{Assump1}
  \forall \mu \in \C, \quad \Delta_q(\mu^2) = \Delta_{\tilde{q} - C (\lambda^2 + \frac{1}{4})}(\mu^2).
\end{equation}

We can first use the asymptotics in Proposition \ref{Asymp-WT} to prove that $A = \tilde{A}$ from (\ref{Assump1}).

The remaining problem is thus to show that the potentials $q(x)$ and $\tilde{q} - C (\lambda^2 + \frac{1}{4})$ coincide on $(0,A)$ from (\ref{Assump1}). Since the characteristic function $\mu^2 \mapsto \Delta_q(\mu^2)$ is entire of order $\half$, we have by the Hadamard Theorem (see (\ref{qq2}))
$$
  \Delta_q(\mu^2) = G \prod_{n \in \N} \left( 1 - \frac{\mu^2}{\alpha_n^2} \right), \quad G = \Delta_q(0).
$$
In other words, the characteristic function $\Delta_q(\mu^2)$ is completely determined (up to the constant $G$) by the eigenvalues of the radial separated ODE with the boundary conditions (\ref{BC}). We deduce then from (\ref{Assump1}) that the potentials $\tilde{q} - C (\lambda^2 + \frac{1}{4})$ s  in the same isospectral class than $q$.

%has the same zeros $(\alpha_n)_{n \in \Z}$ than the potential $q$.

Note that, for selfadjoint (possibly singular) Schr\"odinger operators with spectral gap in their spectrum, we can perform explicit transformation of the initial potential (Crum-Darboux or B\"acklund-Darboux like transformations) that lead to new potentials that preserve the spectrum of the associated Schrodinger operator. We refer to \cite{GST} and the references therein for a detailed constructions of these isospectral potentials.

By analogy, we conjecture that there might exist many isospectral potentials satisfying (\ref{Assump1}) for our non selfadjoint radial equation and that, among these isospectral potentials, there might be some satisfying the hypothesis in Definition \ref{AHLS}, leading to non isometric asymptotically hyperbolic Liouville surfaces having the same transmission operator at a fixed energy. That is we conjecture a non-uniqueness result in the inverse scattering problem from the transmission operators at a fixed energy.


\begin{thebibliography}{99}

\bibitem{Be} Bennewitz C., \emph{A proof of the local Borg-Marchenko Theorem}, Comm. Math. Phys. $\mathbf{211}$, (2001), 131-132.
\bibitem{Boa} Boas R.P., \emph{Entire Functions}, Academic Press, (1954).
\bibitem{Bor} Borthwick D., \emph{Spectral theory for infinite-area hyperbolic surface}, Birkh\"auser, Boston-Basel-Berlin, (2007).
%\bibitem{BP} Borthwick D., Perry P.A., \emph{Inverse scattering results for manifolds hyperbolic near infinity}, preprint (2009), arXiv: 0906.0542.
\bibitem{Bou} Bouclet J.M., \emph{Absence of eigenvalue at the bottom of the continuous spectrum on asymptotically hyperbolic manifolds},
Ann. Global Anal. Geom. $\mathbf{44}$, no. 2, (2013), 115-136.
\bibitem{DN3} Daud\'e T., Nicoleau F., \emph{Inverse scattering at fixed energy in de Sitter-Reissner-Nordstr\"om black
holes}, Annales Henri Poincar\'e $\mathbf{12}$, (2011), 1-47.
\bibitem{DGN} Daud\'e T., Gobin D., Nicoleau F., \emph{Inverse scattering at fixed energy in spherically symmetric asymptotically hyperbolic manifolds}, preprint, (2013), arXiv: 1310.0733
\bibitem{Eisen} Eisenhart, L.P., \emph{Separable systems of Staeckel}, Annals of Math., 35 (1934), no. 2, 284D305.
\bibitem{FY} Freiling G., Yurko V., \emph{Inverse problems for differential operators with singular
boundary conditions}, Math. Nachr. $\mathbf{278}$ no. 12-13, (2005), 1561-1578.
\bibitem{GS} Gesztesy F., Simon B., \emph{On local Borg-Marchenko uniqueness results}, Comm. Math. Phys. $\mathbf{211}$, (2000), 273-287.
\bibitem{GST} Gesztesy F., Simon B., Teschl G. \emph{Spectral deformations of one-dimensional Schrodinger operators}, Journal d'Analyse Math\'ematique $\mathbf{70}$, (1996), 267-324.
%\bibitem{Go} Gobin D., \emph{Inverse scattering at fixed energy for massive charged Dirac fields in de Sitter-Reissner-Nordstr\"{o}m black holes}, in preparation.
%\bibitem{Ho} Horv\'ath M., \emph{Partial identification of the potential from phase shifts}, J. Math. Anal. Appl., (2010), doi:10.1016/j.jmaa.2010.10.071.
\bibitem{IK} Isozaki H., Kurylev J., \emph{Introduction to spectral theory and inverse problems on asymptotically hyperbolic manifolds}, preprint (2011), arXiv:1102.5382.
\bibitem{IKL} Isozaki H., Kurylev J., Lassas M., \emph{Conic singularities, generalized scattering matrix, and inverse scattering on asymptotically hyperbolic surfaces}, to appear in Journal fur die reine und angewandte Mathematik (Crelle's journal), (2014). Preprint arXiv:1108.1577. 
\bibitem{JSB} Joshi M.S., S\'a Barreto A., \emph{Inverse scattering on asymptotically hyperbolic manifolds}, Acta Mathematica, $\mathbf{184}$, (2000), 41-86.
\bibitem{KM} Kalnins, E., Miller, W. Jr.,\emph{ Killing tensors and nonorthogonal variable separation for Hamilton-Jacobi equations}, SIAM J. Math. Anal. 12 (1981), no. 4, 617D629.
\bibitem{KS} Knaus, A., Sinai, Ya. G.,\emph{ Classical Nonintegrability Quantum Chaos}, DMV Seminar Oberwolfach, July 21-27, (1996).
\bibitem{KST} Kostenko A., Sakhnovich A., Teschl G., \emph{Weyl-Titchmarsh theory for Schr\"odinger operators with strongly singular potentials},  Int. Math. Res. Not. $\mathbf{2012}$, (2012), 1699-1747.
\bibitem{Leb} Lebedev N. N. {\em Special functions and their applications}, Prentice-Hall, Englewood Cliffs (1965).
\bibitem{Lev} Levin B. Y., \emph{Lectures on entire functions}, Translations of Mathematical Monograph,  $\mathbf{150}$, American Mathematical Society (1996).
\bibitem{Mor} Morera, G., \emph{Sulla separazione delle variabili nell equazioni del moto di un punto materiale su une superficie}, Atti. Sci. di Torino, 16 (1881), 276-295.
\bibitem{Ol} Olver  F.W.J., \emph{Digital library of mathematical functions}, NIST.
\bibitem{Ra} Ramm A.G., \emph{An Inverse Scattering Problem with part of the Fixed-Energy Phase shifts},
Comm. Math. Phys. $\mathbf{207}$, (1999), no.1. 231-247.
%\bibitem{RS1} Reed M. -  Simon B., \emph{Methods of modern mathematical physics - Functional Analysis}, Academic Press (1972).
%\bibitem{RS2} Reed M. -  Simon B., \emph{Methods of modern mathematical physics - Fourier Analysis - Self-adjointness}, Academic Press (1975).
%\bibitem{RS3} Reed M. -  Simon B., \emph{Methods of modern mathematical physics- Scattering theory}, Academic Press (1978).
\bibitem{Re} Regge T., \emph{Introduction to complex orbital momenta}, Nuevo Cimento $\mathbf{XIV}$, (1959), no.5, 951-976.
\bibitem{SB} S\'a Barreto A., \emph{Radiation fields, scattering and inverse scattering on asymptotically hyperbolic manifolds}, Duke Math. Journal $\mathbf{129}$, no. 3, (2005), 407-480.
%\bibitem{Ru} Rudin W., \emph{Real and Complex Analysis, Third edition}, McGraw-Hill Book Company, (1986).
%\bibitem{Si} Simon B., \emph{A new approach to inverse spectral theory, I. fundamental formalism}, Annals of Math. $\mathbf{150}$, (1999), 1-29.
\bibitem{Te} Teschl G., \emph{Mathematical Methods in Quantum Mechanics}, Graduate Studies in Mathematics Vol. 99, AMS Providence, Rhode Island, (2009).
\bibitem{Ze} Zettl A., \emph{Sturm-Liouville Theory}, Math. Surveys and Monographs, $\mathbf{121}$, American Mathematical Society, (2005).


\end{thebibliography}
\end{document}